\newcommand{\rmnum}[1]{\romannumeral #1}
\newcommand{\Rmnum}[1]{\expandafter\@slowromancap\romannumeral #1@}
\newif\if@borderstar
\def\bordermatrix{\@ifnextchar*{%
  \@borderstartrue\@bordermatrix@i}{\@borderstarfalse\@bordermatrix@i*}%
}
\def\@bordermatrix@i*{\@ifnextchar[{\@bordermatrix@ii}{\@bordermatrix@ii[()]}}
\def\@bordermatrix@ii[#1]#2{%
\begingroup
  \m@th\@tempdima8.75\p@\setbox\z@\vbox{%
    \def\cr{\crcr\noalign{\kern 2\p@\global\let\cr\endline }}%
    \ialign {$##$\hfil\kern 2\p@\kern\@tempdima & \thinspace %
    \hfil $##$\hfil && \quad\hfil $##$\hfil\crcr\omit\strut %
    \hfil\crcr\noalign{\kern -\baselineskip}#2\crcr\omit %
    \strut\cr}}%
  \setbox\tw@\vbox{\unvcopy\z@\global\setbox\@ne\lastbox}%
  \setbox\tw@\hbox{\unhbox\@ne\unskip\global\setbox\@ne\lastbox}%
  \setbox\tw@\hbox{%
    $\kern\wd\@ne\kern -\@tempdima\left\@firstoftwo#1%
    \if@borderstar\kern2pt\else\kern -\wd\@ne\fi%
    \global\setbox\@ne\vbox{\box\@ne\if@borderstar\else\kern 2\p@\fi}%
    \vcenter{\if@borderstar\else\kern -\ht\@ne\fi%
    \unvbox\z@\kern -\if@borderstar2\fi\baselineskip}%
    \if@borderstar\kern -2\@tempdima\kern2\p@\else\,\fi\right\@secondoftwo#1 $%
  }\null \;\vbox{\kern\ht\@ne\box\tw@}%
\endgroup
}
\newtheorem{theorem}{Theorem}
\newtheorem{lemma}{Lemma}
\newtheorem{cor}{Corollary}
\newtheorem{prop}{Proposition}
\newtheorem{example}{Example}
\newtheorem{definition}{Definition}
\newtheorem{remark}{Remark}
\newcommand{\ein}{\mathcal{E}_{\mathrm{i}}}
\newcommand{\eout}{\mathcal{E}_{\mathrm{o}}}
\newcommand{\tail}{\mathrm{tail}}
\newcommand{\head}{\mathrm{head}}
\newcommand{\mA}{\mathcal{A}}
\newcommand{\mN}{\mathcal{N}}
\newcommand{\mE}{\mathcal{E}}
\newcommand{\mO}{\mathcal{O}}
\newcommand{\mC}{\mathcal{C}}
\newcommand{\mG}{\mathcal{G}}
\newcommand{\mV}{\mathcal{V}}
\newcommand{\mbC}{\mathbf{C}}
\def\Im{\mathrm{Im}}
\tikzstyle{vertex}=[draw,circle,fill=gray!30,minimum size=6pt, inner sep=0pt]
\newcommand{\Enone}{\bf{En1}}
\newcommand{\En}{\bf{En}}
\newcommand{\Entwo}{\bf{En2}}
\newcommand{\De}{\bf{De}}
\tikzstyle{block} = [draw, fill=blue!30, rectangle,
\tikzstyle{sum} = [draw, fill=blue!20, circle, node distance=1cm]
\tikzstyle{input} = [coordinate]
\tikzstyle{output} = [coordinate]
\tikzstyle{pinstyle} = [pin edge={to-,thin,black}]
\begin{document}
%
	
\title{Zero-Error Distributed Function Compression for Binary Arithmetic Sum}

\author{Xuan~Guang and Ruze~Zhang
}

\maketitle

	\begin{abstract}
In this paper, we put forward the model of \emph{zero-error distributed function compression system} of two binary memoryless sources $X$ and $Y$ as depicted in Fig.\,\ref{fig:Gen-System_use}. In this model, there are two encoders $\Enone$ and $\Entwo$ and one decoder $\De$, connected by two channels $(\Enone,~\De)$ and $(\Entwo,~\De)$ with the capacity constraints $C_1$ and $C_2$, respectively. The encoder $\Enone$ can observe $X$ or $(X,Y)$ and the encoder $\Entwo$ can observe $Y$ or $(X,Y)$ according to the two switches $\boldsymbol{\mathrm{s}}_1$ and $\boldsymbol{\mathrm{s}}_2$ open or closed (corresponding to taking values $0$ or $1$). The decoder $\De$ is required to compress the binary arithmetic sum $f(X,Y)=X+Y$ with zero error by using the system multiple times. We use $(\boldsymbol{\mathrm{s}}_1\boldsymbol{\mathrm{s}}_2;C_{1}, C_{2}; f)$ to denote the model in which it is assumed that $C_1\geq C_2$ by symmetry. The compression capacity for the model is defined as the maximum average number of times that the function $f$ can be compressed with zero error for one use of the system, which measures the efficiency of using the system. We fully characterize the compression capacities for all the four cases of the model $(\boldsymbol{\mathrm{s}}_1\boldsymbol{\mathrm{s}}_2;C_{1}, C_{2}; f)$ for $\boldsymbol{\mathrm{s}}_1\boldsymbol{\mathrm{s}}_2=00,01,10,11$. Here, the characterization of the compression capacity for the case  $(01;C_1,C_2;f)$ with $C_1>C_2$ is highly nontrivial, where  a novel graph coloring approach is developed.
Furthermore, we apply the compression capacity for  $(01;C_1,C_2;f)$ to an open problem in network function computation that whether the best known upper bound of Guang~\emph{et al.} on computing capacity is in general tight. This upper bound is always tight for all previously considered network function computation problems whose computing capacities are known. By considering a network function computation model transformed from $(01;C_1,C_2;f)$ with $C_1>C_2$, we give the answer that in general the upper bound of Guang~\emph{et al.} is not tight.
\end{abstract}

\section{ Introduction}

In this paper, we consider the model of \emph{zero-error distributed function compression system} of two binary memoryless sources $X$ and $Y$ as depicted in Fig.\,\ref{fig:Gen-System_use}. In this model, there are two encoders $\Enone$ and $\Entwo$ and one decoder $\De$, connected by two channels $(\Enone,~\De)$ and $(\Entwo,~\De)$ with capacity constraints $C_1$ and $C_2$, respectively. Without loss of generality, we assume that $C_1\geq C_2$. The encoder $\Enone$ can observe the single source $X$ or the two sources $(X,Y)$, and the encoder $\Entwo$ can observe the single source $Y$ or the two sources $(X,Y)$. We use two switches $\boldsymbol{\mathrm{s}}_1$ and $\boldsymbol{\mathrm{s}}_2$ open or closed (corresponding to taking values $0$ or $1$) to represent whether $\Enone$ can observe $Y$ and $\Entwo$ can observe $X$, respectively. Then, the pair $\boldsymbol{\mathrm{s}}_1\boldsymbol{\mathrm{s}}_2$ can take the four values $00,01,10$ and $11$. See Figs.\,\ref{fig:System_use-00}--\ref{fig:System_use-11}. The decoder $\De$ is required to compress the binary arithmetic sum $f(X,Y)=X+Y$ with zero error. 
We use $(\boldsymbol{\mathrm{s}}_1\boldsymbol{\mathrm{s}}_2;C_{1}, C_{2};  f)$ to denote this model of zero-error distributed function compression system.

\begin{figure}[t]
 \centering
  {
  \tikzstyle{format}=[draw,circle,fill=gray!30,minimum size=2pt, inner sep=0pt]
\begin{tikzpicture}[auto, node distance=2cm,>=latex',thick]
    \node [input, name=inputX, label=left:$\boldsymbol{X}$] at (2.5,0) {};
    \node [input, name=inputY, label=left:$\boldsymbol{Y}$] at (2.5,-2.4) {};
    \node [block] (en1) at (5.5,0) {{\bf En1}};
    \node [block] (en2) at (5.5,-2.4){{\bf En2}};
    \node [block] (de)  at (9,-1.2) {{\bf De}};
    \node [output, name=output, label=above:$\boldsymbol{X}+\boldsymbol{Y}$] at (11.5,-1.2) {};
    \node[format](a1)at(3.2,-0.56){};\node[format](a11)at(3.5,-0.80){};
    \node[format](a2)at(3.2,-1.84){}; \node[format](a21)at(3.5,-1.6){};
    \node at(3.1,-0.84) {\small{$\boldsymbol{\mathrm{s}}_1$}};
     \node at(3.1,-1.56) {\small{$\boldsymbol{\mathrm{s}}_2$}};
    \draw [->] (inputX) -- (en1);\draw [->] (inputY) -- (en2);
     \draw (inputX) -- (a1); \draw [->](a11) -- (en2);
   \draw (inputY) -- (a2); \draw [->](a21) -- (en1);
    \draw (a1) -- (3.55,-0.56);
       \draw (a2) -- (3.55,-1.84);
         \draw [->] (en1) -- node       {$C_1$} (de);
    \draw [->] (en2) -- node[swap] {$C_2$} (de);
    \draw [->] (de) -- (output);
\end{tikzpicture}
}
\caption{The model of distributed function compression $(\boldsymbol{\mathrm{s}}_1\boldsymbol{\mathrm{s}}_2;C_1,C_2;f)$.}\label{fig:Gen-System_use}
\end{figure}

\subsection{Related Works}

The line of research on data compression can be traced back to Shannon's source coding theorem in his celebrated work~\cite{Shannon48}, in which the compression of a source  over a point-to-point channel was discussed. Multi-terminal source coding was launched by Slepian and Wolf \cite{Slepian-Wolf-IT73}, where two correlated sources are compressed separately by two encoders; and the decoder, which receives the outputs of both  encoders, is required to reconstruct the two sources almost perfectly. Building on the Slepian-Wolf model, K\"{o}rner and Marton~\cite{Korner-Marton-IT73} investigated the computation of the modulo $2$ sum of two correlated binary sources.
To our knowledge, the K\"{o}rner-Marton problem was the first non-identity function compression problem.
Doshi \textit{et al}.~\cite{Doshi_fun_comp_graph_color_sch}
generalized the K{\"o}rner-Marton model by requiring the decoder to compress with asymptotically zero error an arbitrary function of two correlated sources. Feizi and M{\'e}dard~\cite{Feizi-Medard} further investigated the function compression over a tree network. In such a network, the leaf nodes are the source nodes where correlated sources are generated, and the root node is the single sink node where a
function of the sources is required to compress with asymptotically zero error. Orlitsy and Roche~\cite{Orlitsky-Roche_general_side_inf_model_rat_reg} considered an asymptotically-zero-error function compression problem with side information at the decoder, which is a generalization of Witsenhausen's model~\cite{Witsenhausen-IT-76} of zero-error source coding with side information at the decoder. Specifically, in the Orlitsy-Roche model, the encoder compresses a source $X$. The decoder, in addition to receiving the output of the encoder, also observes a source  $Y$ which is correlated with $X$. The decoder is required to compress an arbitrary function of $X$ and $Y$ with asymptotically zero error.

In the above mentioned compression models~\cite{Shannon48,Slepian-Wolf-IT73,Korner-Marton-IT73,Doshi_fun_comp_graph_color_sch,
Feizi-Medard,Orlitsky-Roche_general_side_inf_model_rat_reg}, the sources and functions are required to compress with asymptotically zero error. Another type of compression models is to consider compressing with zero error the sources and functions, e.g., \cite{Shannon48,Witsenhausen-IT-76,Alon-Orlitsky_source_cod_graph_entropies,Koulgi_zero-error-cod_cor_inf_sour}. In Shannon's seminal work \cite{Shannon48}, a variable-length source code, called {\em Shannon Code}, was given to compress a source with zero error. Witsenhausen \cite{Witsenhausen-IT-76} considered the  source coding problem with side information at the decoder, where the encoder compresses a source~$X$; and the decoder, which not only receives the output of the encoder but also observes another source  $Y$ correlated with $X$, is required to reconstruct $X$ with zero error. Subsequently, the zero-error compression problem building on Witsenhausen model was developed by Alon and Orlitsky \cite{Alon-Orlitsky_source_cod_graph_entropies}, where a variable-length code of the source $X$ was considered.
Koulgi \textit{et al}.~\cite{Koulgi_zero-error-cod_cor_inf_sour} considered a zero-error version of the Slepian-Wolf model. In this model, two correlated sources are compressed separately by two encoders; and the decoder is required to reconstruct the two sources with zero error by applying the outputs of the two encoders.
The zero-error capacity problem of a discrete memoryless (stationary) channel was considered by Shannon in \cite{Shannon_zero_cap_noisy_channel}, where the Shannon zero-error capacity of a confusability graph was introduced. Subsequently, some upper bounds on the Shannon zero-error capacity were obtained in \cite{Shannon_zero_cap_noisy_channel, Rosenfeld_frac_color_num_bound_shannon_cap, Lovasz_shannon_cap_graph,Haemers_shannon_lovasz_problems}.
Another related line of research is to consider (zero-error) network function computation \cite{Appuswamy11,Appuswamy13,Ramamoorthy-Langberg-JSAC13-sum-networks,
Rai-Dey-TIT-2012,Tripathy-Ramamoorthy-IT18-sum-networks,
Huang_Comment_cut_set_bound,Guang_Improved_upper_bound,
Wang-Tightness_upper_bound}, where the following model was considered. In a directed acyclic
network, the single sink node is required to compute with zero error a function of the source messages that are separately observed by multiple source nodes.

\subsection{Contributions and Organization of the Paper}


From the information theoretic point of view, we are interested in characterizing the \emph{compression capacity} for the model $(\boldsymbol{\mathrm{s}}_1\boldsymbol{\mathrm{s}}_2;C_{1}, C_{2};  f)$. Here, the compression capacity is defined as the maximum average number of times that the function $f$ can be compressed with zero error for one use of the system, which measures the efficiency of using the system, analogous to Shannon zero-error capacity \cite{Shannon_zero_cap_noisy_channel,Rosenfeld_frac_color_num_bound_shannon_cap,Lovasz_shannon_cap_graph,
Haemers_shannon_lovasz_problems}, network coding \cite{Ahlswede_net_flow,Li-Yeung-Cai-2003,Koetter-Medard-algebraic,co-construction,Yeung-book,Zhang-book,Fragouli-book,Fragouli-book-app,Ho-book}
and network function computation \cite{Appuswamy11,Appuswamy13,Ramamoorthy-Langberg-JSAC13-sum-networks,Rai-Dey-TIT-2012,
Tripathy-Ramamoorthy-IT18-sum-networks,Huang_Comment_cut_set_bound,Guang_Improved_upper_bound}. In the current paper, we focus on this compression capacity, rather than  the notion of compression capacity considered in many previously studied source coding models in which how to efficiently establish a system is investigated, e.g., lossless source coding models \cite{Shannon48,Slepian-Wolf-IT73,Korner-Marton-IT73,Orlitsky-Roche_general_side_inf_model_rat_reg,
Doshi_fun_comp_graph_color_sch,Feizi-Medard}, zero-error source coding models \cite{Witsenhausen-IT-76,Alon-Orlitsky_source_cod_graph_entropies,Koulgi_zero-error-cod_cor_inf_sour}, and lossy source coding models \cite{Wyner-Ziv_rat_distortion_fun_sid_inf,Yamamoto_rat_distortion_gener_fun_sid_inf,Feng_rat_distortion_net_fun_sid_inf,
Berger-Yeung_multi_source_cod_dist_cri,Barros_cor_sour_rat_dist_reg,Wagner_Rat_reg_Guassian_sou_cod}.

In this paper, we fully characterize the compression capacities for all the four cases of the model $(\boldsymbol{\mathrm{s}}_1\boldsymbol{\mathrm{s}}_2;C_{1}, C_{2};  f)$.  First, we characterize the compression capacities for the three cases that $\boldsymbol{\mathrm{s}}_1\boldsymbol{\mathrm{s}}_2=00,10$, and $11$ (Figs.\,\ref{fig:System_use-00}, \ref{fig:System_use-10} and \ref{fig:System_use-11}). For the case of the model $(00;C_1,C_2;f)$, compressing the binary arithmetic sum $f$ is equivalent to compressing the two sources, or equivalently, the binary identity function. For the case of the model $(11;C_1,C_2;f)$, compressing the binary arithmetic sum $f$ is equivalent to compressing with zero error a ternary source on one channel with the capacity constraint $C_1+C_2$. For the remaining case $(01;C_1,C_2;f)$ in Fig.\,\ref{fig:System_use}, the characterization of the compression capacity is very difficult. We first focus on a typical special case $(01;2,1;f)$ (i.e., $C_1=2$ and $C_2=1$) to explicitly show our technique by developing a novel graph coloring approach, and the proof is highly nontrivial.
In particular, we prove a crucial function, called \emph{aitch-function}, that plays a key role in estimating the minimum chromatic number of a conflict graph introduced in this paper. This aitch-function is analogous to the \emph{theta function} introduced by Lov\'{a}sz~\cite{Lovasz_shannon_cap_graph}, which is used to estimate the  Shannon zero-error capacity for a confusability graph. With the developed technique, we further characterize the compression capacity for the general case $(01;C_1,C_2;f)$.

An important application of the compression capacity for the model $(01;C_1,C_2;f)$ is in the tightness of the best known upper bound on the computing capacity in  network function computation~\cite{Guang_Improved_upper_bound}. In network function computation, several ``general'' upper bounds on the computing capacity have been obtained \cite{Appuswamy11,Huang_Comment_cut_set_bound,Guang_Improved_upper_bound}, where ``general'' means that the upper bounds are applicable to arbitrary network and arbitrary  function. Here, the best known upper bound is the one proved by Guang~\emph{et al.}~\cite{Guang_Improved_upper_bound} in using the approach of the cut-set strong partition. This bound is always tight for all previously considered network function computation problems whose computing capacities are known. However, whether the upper bound is in general tight remains open~\cite{Guang_Improved_upper_bound}. By transforming the model $(01;2,1;f)$ into an equivalent model of network function computation, we give the answer that in general the upper bound of Guang~\emph{et al.} is not tight. We further prove that the upper bound of Guang~\emph{et al.}~\cite{Guang_Improved_upper_bound} is not tight for any model of network function computation that is transformed from the model $(01;C_1,C_2;f)$ for any pair of the channel capacity constraints $(C_1,C_2)$ with $C_1>C_2$.


The paper is organized as follows. In Section~\ref{sec:dis-fun-model}, we formally present the model of distributed function compression system. We characterize the compression capacities for the three models $(00;C_1,C_2;f)$, $(10;C_1,C_2;f)$ and $(11;C_1,C_2;f)$  in Section~\ref{sec:compre-cap00,10,11}. In Section~\ref{sec:compre-cap01}, we first characterize the compression capacity for the typical case $(01;2,1;f)$ of the model $(01;C_1,C_2;f)$. With this, we explicitly characterize the compression capacity for $(01;C_1,C_2;f)$. Section~\ref{sec:equ-model} is devoted to the application of this distributed function compression problem on the model $(01;C_1,C_2;f)$ in network function computation. In Section~\ref{sec:concl}, we conclude with a summary of our results and a remark on future research.

\section{Model of Distributed Function Compression}\label{sec:dis-fun-model}

Let $X$ and $Y$ be two binary random  variables (r.v.s) according to the joint probability distribution $P_{XY}$
such that
\begin{equation}\label{PXY}
P_{XY}(x,y)>0,\quad \forall~(x,y)\in\mathcal{A}\times\mathcal{A},
\end{equation}
where we let $\mathcal{A}\triangleq\{0,1\}$ in the paper.
Further, we let $P_{X}$ and $P_{Y}$ be the marginal distributions of $X$ and $Y$ referring to $P_{XY}$,
respectively. By~\eqref{PXY}, we immediately obtain that
\begin{equation*}
P_{X}(0)\cdot P_{X}(1)\cdot P_{Y}(0)\cdot P_{Y}(1)>0,
 \end{equation*}
 i.e., neither $X$ nor $Y$ is a constant.
Consider a positive integer $k$. Let
\begin{equation*}
\boldsymbol{X} \triangleq (X_{1}, X_{2}, \cdots, X_{k})\quad \text{and} \quad\boldsymbol{Y}\triangleq (Y_{1},~Y_{2},~\cdots,\, Y_{k})
\end{equation*}
be two length-$k$ sequential vectors of i.i.d. random variables with generic r.v.s $X$ and $Y$, respectively.
Let
\begin{equation*}
\vec{x}=(x_{1},x_{2},\dots, x_{k})\in\mathcal{A}^k\quad \text{and}\quad\vec{y}=(y_{1},y_{2},\dots,y_{k})\in\mathcal{A}^k
\end{equation*}
be two arbitrary outputs of $\boldsymbol{X}$ and $\boldsymbol{Y}$, respectively,  called the \emph{source messages}.

 Let the  \emph{target function} $f:~\mathcal{A}\times\mathcal{A}\to\Im\,f=\{0,1,2\}$  be the  \emph{binary arithmetic sum}, i.e.,
 \begin{equation*}
f(x,y)=x+y,\quad \forall~x\in\mathcal{A}\quad\text{and}\quad y\in\mathcal{A},
\end{equation*}
where we use $\Im\,f$ to stand for the image set of the function $f$.
Further, we let
\begin{equation*}
f(\boldsymbol{X},\boldsymbol{Y})\triangleq\big(f(X_1,Y_1),f(X_2,Y_2),\cdots,f(X_k,Y_k)\big).
\end{equation*}
Then, the $k$ function values of  $(\vec{x},\vec{y})$ are written as
\begin{equation*}
f(\vec{x},\vec{y})\triangleq\big(f(x_{1},y_{1}),f(x_{2},y_{2}),\dots ,f(x_{k},y_{k})\big).
\end{equation*}

 Now, we consider a  \emph{distributed function compression system} with two encoders $\Enone$ and $\Entwo$, one decoder $\De$,  and two channels $(\Enone,~\De)$ and $(\Entwo,~\De)$  with capacity constraints $C_1$ and $C_2$, respectively, i.e., for each use of the system, at most $C_1$ bits can be reliably transmitted on the channel $(\text{$\Enone$},~\text{$\De$})$ and at most $C_2$ bits can be reliably transmitted on the channel $(\text{$\Entwo$},~\text{$\De$})$.
  By symmetry, we  assume  that $C_1\geq C_2$.
  The encoder  $\Enone$ can observe the single source $\boldsymbol{X}$ or all the two sources $(\boldsymbol{X},\boldsymbol{Y})$, and also, the encoder  $\Entwo$ can observe the single source $\boldsymbol{Y}$ or all the two sources $(\boldsymbol{X},\boldsymbol{Y})$.
 To show whether $\Enone$ can observe $\boldsymbol{Y}$ and $\Entwo$ can observe $\boldsymbol{X}$, two switches  $\boldsymbol{\mathrm{s}}_1$ and $\boldsymbol{\mathrm{s}}_2$ are added in the system as depicted in Fig.\,\ref{fig:Gen-System_use}. For $i=1,2$,  the switch $\boldsymbol{\mathrm{s}}_i$ takes value $0$ or $1$ to represent  $\boldsymbol{\mathrm{s}}_i$ open or closed. Then, the pair $\boldsymbol{\mathrm{s}}_1\boldsymbol{\mathrm{s}}_2$ can take four values $00,01,10$ and $11$. To be specific,

\begin{figure}[t]
 \centering
  {
\begin{tikzpicture}[auto, node distance=2cm,>=latex',thick]
    \node [input, name=inputX, label=left:$\boldsymbol{X}$] at (2.5,0) {};
    \node [input, name=inputY, label=left:$\boldsymbol{Y}$] at (2.5,-2.4) {};
    \node [block] (en1) at (5.5,0) {{\bf En1}};
    \node [block] (en2) at (5.5,-2.4){{\bf En2}};
    \node [block] (de)  at (9,-1.2) {{\bf De}};
    \node [output, name=output, label=above:$\boldsymbol{X}+\boldsymbol{Y}$] at (11.5,-1.2) {};
    \draw [->] (inputX) -- (en1);
    \draw [->] (inputY) -- (en2);
    \draw [->] (en1) -- node       {$C_1$} (de);
    \draw [->] (en2) -- node[swap] {$C_2$} (de);
    \draw [->] (de) -- (output);
\end{tikzpicture}
}
\caption{The distributed function compression model $(00;C_1,C_2;f)$.}\label{fig:System_use-00}
\end{figure}
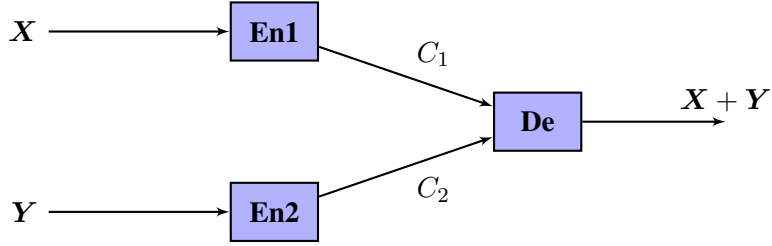

\begin{figure}[t]
 \centering
  {
\begin{tikzpicture}[auto, node distance=2cm,>=latex',thick]
    \node [input, name=inputX, label=left:$\boldsymbol{X}$] at (2.5,0) {};
    \node [input, name=inputY, label=left:$\boldsymbol{Y}$] at (2.5,-2.4) {};
    \node [block] (en1) at (5.5,0) {{\bf En1}};
    \node [block] (en2) at (5.5,-2.4){{\bf En2}};
    \node [block] (de)  at (9,-1.2) {{\bf De}};
    \node [output, name=output, label=above:$\boldsymbol{X}+\boldsymbol{Y}$] at (11.5,-1.2) {};
    \draw [->] (inputX) -- (en1);
    \draw [->] (inputY) -- (en1);
    \draw [->] (inputY) -- (en2);
    \draw [->] (en1) -- node       {$C_1$} (de);
    \draw [->] (en2) -- node[swap] {$C_2$} (de);
    \draw [->] (de) -- (output);
\end{tikzpicture}
}
\caption{The distributed function compression model $(01;C_1,C_2;f)$.}\label{fig:System_use}
\end{figure}

\begin{figure}[t]
 \centering
  {
\begin{tikzpicture}[auto, node distance=2cm,>=latex',thick]
    \node [input, name=inputX, label=left:$\boldsymbol{X}$] at (2.5,0) {};
    \node [input, name=inputY, label=left:$\boldsymbol{Y}$] at (2.5,-2.4) {};
    \node [block] (en1) at (5.5,0) {{\bf En1}};
    \node [block] (en2) at (5.5,-2.4){{\bf En2}};
    \node [block] (de)  at (9,-1.2) {{\bf De}};
    \node [output, name=output, label=above:$\boldsymbol{X}+\boldsymbol{Y}$] at (11.5,-1.2) {};
    \draw [->] (inputX) -- (en1);
    \draw [->] (inputX) -- (en2);
    \draw [->] (inputY) -- (en2);
    \draw [->] (en1) -- node       {$C_1$} (de);
    \draw [->] (en2) -- node[swap] {$C_2$} (de);
    \draw [->] (de) -- (output);
\end{tikzpicture}
}
\caption{The distributed function compression model $(10;C_1,C_2;f)$.}\label{fig:System_use-10}
\end{figure}
\begin{figure}[t]
 \centering
  {
\begin{tikzpicture}[auto, node distance=2cm,>=latex',thick]
    \node [input, name=inputX, label=left:$\boldsymbol{X}$] at (2.5,0) {};
    \node [input, name=inputY, label=left:$\boldsymbol{Y}$] at (2.5,-2.4) {};
    \node [block] (en1) at (5.5,0) {{\bf En1}};
    \node [block] (en2) at (5.5,-2.4){{\bf En2}};
    \node [block] (de)  at (9,-1.2) {{\bf De}};
    \node [output, name=output, label=above:$\boldsymbol{X}+\boldsymbol{Y}$] at (11.5,-1.2) {};
    \draw [->] (inputX) -- (en1);
    \draw [->] (inputY) -- (en1);
    \draw [->] (inputX) -- (en2);
    \draw [->] (inputY) -- (en2);
    \draw [->] (en1) -- node       {$C_1$} (de);
    \draw [->] (en2) -- node[swap] {$C_2$} (de);
    \draw [->] (de) -- (output);
\end{tikzpicture}
}
\caption{The distributed function compression model $(11;C_1,C_2;f)$.}\label{fig:System_use-11}
\end{figure}
 \begin{itemize}
\item Case $1$: $\boldsymbol{\mathrm{s}}_1\boldsymbol{\mathrm{s}}_2=00$, i.e., $\boldsymbol{\mathrm{s}}_1$ and $\boldsymbol{\mathrm{s}}_2$ are open. The encoder $\Enone$ only  observes the source $\boldsymbol{X}$  and the encoder $\Entwo$ only observes  the source $\boldsymbol{Y}$ (see Fig.\,\ref{fig:System_use-00});

\item Case $2$: $\boldsymbol{\mathrm{s}}_1\boldsymbol{\mathrm{s}}_2=01$, i.e.,
 $\boldsymbol{\mathrm{s}}_1$ is open and $\boldsymbol{\mathrm{s}}_2$ is closed.
 The encoder $\Enone$  observes the two sources $\boldsymbol{X}$ and $\boldsymbol{Y}$, and the encoder $\Entwo$ only observes the source  $\boldsymbol{Y}$ (see Fig.\,\ref{fig:System_use});

\item Case $3$: $\boldsymbol{\mathrm{s}}_1\boldsymbol{\mathrm{s}}_2=10$, i.e., $\boldsymbol{\mathrm{s}}_1$ is closed and $\boldsymbol{\mathrm{s}}_2$ is open.  The encoder $\Enone$ only  observes the source $\boldsymbol{X}$, and the encoder $\Entwo$ observes the two sources $\boldsymbol{X}$ and $\boldsymbol{Y}$ (see Fig.\,\ref{fig:System_use-10});

\item Case $4$: $\boldsymbol{\mathrm{s}}_1\boldsymbol{\mathrm{s}}_2=11$, i.e., $\boldsymbol{\mathrm{s}}_1$ and $\boldsymbol{\mathrm{s}}_2$ are closed. Both the encoders $\Enone$ and  $\Entwo$ observe the two sources $\boldsymbol{X}$ and $\boldsymbol{Y}$ (see Fig.\,\ref{fig:System_use-11}).
\end{itemize}
To unify the above four cases, we say that the encoder $\Enone$ observes $\boldsymbol{X}$ and $\boldsymbol{Y}^{\boldsymbol{\mathrm{s}}_2}$, and the encoder  $\Entwo$ observes $\boldsymbol{X}^{\boldsymbol{\mathrm{s}}_1}$ and $\boldsymbol{Y}$ for the pair of switches $\boldsymbol{\mathrm{s}}_1\boldsymbol{\mathrm{s}}_2$, where $\boldsymbol{X}^{0}$ and $\boldsymbol{Y}^{0}$ are an empty vector if $\boldsymbol{\mathrm{s}}_1=\boldsymbol{\mathrm{s}}_2=0$, and $\boldsymbol{X}^{1}=\boldsymbol{X}$ if $\boldsymbol{\mathrm{s}}_1=1$ and $\boldsymbol{Y}^{1}=\boldsymbol{Y}$ if $\boldsymbol{\mathrm{s}}_2=1$.
The decoder $\De$ is required to compress the function $f(\boldsymbol{X},\boldsymbol{Y})$ with zero error by using the system multiple times.
 We have completed the specification of our model of distributed function compression  system,  denoted by $(\boldsymbol{\mathrm{s}}_1\boldsymbol{\mathrm{s}}_2;C_{1}, C_{2};  f)$.

A \emph{$k$-shot (function-compression) code} $\mbC$ for the model $(\boldsymbol{\mathrm{s}}_1\boldsymbol{\mathrm{s}}_2;C_{1}, C_{2};  f)$  consists of
\begin{itemize}
\item two \emph{encoding functions} $\phi_{1}$ and $\phi_{2}$ for the sources  at the encoders $\Enone$ and $\Entwo$, respectively, where
\begin{equation}
\begin{cases}
 \phi_{1}:~ \mathcal{A}^{k}\times\mathcal{A}^{\boldsymbol{\mathrm{s}}_2k}\rightarrow \Im\,\phi_1; \\
 \phi_{2}:~ \mathcal{A}^{\boldsymbol{\mathrm{s}}_1k}\times\mathcal{A}^{k}\rightarrow \Im\,\phi_2;
 \end{cases}\label{dis-fun-model-encod-fun-def}
\end{equation}
with $\Im\,\phi_i$ being the image set of $\phi_i$ and $\mathcal{A}^{\boldsymbol{\mathrm{s}}_{i}k}=\mathcal{A}^{0}$ being an empty set if $\boldsymbol{\mathrm{s}}_i=0$, $i=1,2$;
\item
a \emph{decoding function}
\begin{equation}\nonumber
\psi:~\Im\,\phi_1\times\Im\,\phi_2\rightarrow \Im f^{\,k}=\{0,1,2\}^{k},
\end{equation}
  which is used to compress the target function~$f$ with zero error at the decoder $\De$.
    \end{itemize}

    For the model $(\boldsymbol{\mathrm{s}}_1\boldsymbol{\mathrm{s}}_2;C_{1}, C_{2};  f)$, a $k$-shot code $\mbC=\big\{\phi_1,\phi_2;~\psi\big\}$ is said to be \emph{admissible} if for each pair of source messages $(\vec{x},\vec{y})\in\mathcal{A}^k\times\mathcal{A}^k$, the $k$ function values $f(\vec{x},\vec{y})$ can be compressed with zero error at the decoder, i.e.,
    \begin{equation*}
    \psi\big(\phi_1(\vec{x},\vec{y}^{\,\boldsymbol{\mathrm{s}}_2}),\phi_2(\vec{x}^{\,\boldsymbol{\mathrm{s}}_1},\vec{y})\big)=f(\vec{x},\vec{y}),
    \quad\forall~(\vec{x},\vec{y})\in\mathcal{A}^k\times\mathcal{A}^k,
    \end{equation*}
    where, similarly,
    \begin{align*}
    \vec{x}^{\,\boldsymbol{\mathrm{s}}_1}=
    \begin{cases}
    \vec{x},\quad &\text{if}~\boldsymbol{\mathrm{s}}_1=1;\\
    \text{an empty vector},&\text{if}~\boldsymbol{\mathrm{s}}_1=0;\\
    \end{cases}
    \quad\text{and}\quad
    \vec{y}^{\,\boldsymbol{\mathrm{s}}_2}=
    \begin{cases}
    \vec{y},\quad &\text{if}~\boldsymbol{\mathrm{s}}_2=1;\\
    \text{an empty vector},&\text{if}~\boldsymbol{\mathrm{s}}_2=0.\\
    \end{cases}
    \end{align*}
For such an admissible $k$-shot code $\mbC=\big\{\phi_1,\phi_2;~\psi\big\}$, we let
    \begin{equation}
    n_i(\mbC)=\bigg\lceil\frac{\log |\Im\,\phi_{i}|}{C_i}\bigg\rceil,\quad i=1,2,\footnote{In this paper, we always use ``$\log$'' to denote ``$\log_2$'', the logarithm with base $2$, for notational simplicity. Further, we use $\left\lceil \cdot \right\rceil$ to stand for the smallest positive integer greater than or equal to ``$\cdot$''.}\label{dis-cod-def-n_i(C)}
    \end{equation}
    which is   the number of times that the channel $(\text{{\bf{En$\boldsymbol{i}$}}},~\text{$\De$})$ is used by using the code $\mbC$.
    We further let
\begin{equation}
n(\mbC)= \max\big\{n_1(\mbC),n_2(\mbC)\big\},\label{dis-cod-def-n(C)}
 \end{equation}
 which can be regarded as the number of times that the system is used to compress the function $f$ $k$ times with zero error in using the code $\mbC$. When the code $\mbC$ is clear from the context, we write $n$ to replace $n(\mbC)$ for notational simplicity. To measure the performance of codes, we define the \emph{(compression) rate} of the $k$-shot code $\mbC$ by
     \begin{equation*}
     R(\mbC)=\frac{H\big(f(\boldsymbol{X},\boldsymbol{Y})\big)}{n}=\frac{k }{n}\cdot H\big(f(X,Y)\big),
     \end{equation*}
   which is the average information amount of the target function $f$ can be compressed with zero error for one use
of the system.
Furthermore, with the target function $f$ and the joint probability distribution $P_{XY}$,  $H\big(f(X,Y)\big)$ is constant, and thus
 it suffices to consider
 \begin{equation*}
 R(\mbC)=\frac{k}{n}
  \end{equation*}
  as the rate of the code $\mbC$, which is the average number of times the function $f$ can be compressed with zero error for one use of the
system.  Further, we say that a nonnegative real number $R$ is {\em (asymptotically) achievable} if $\forall~\epsilon > 0$, there exists an admissible $k$-shot  code $\mbC$ such that
$$ R(\mbC) > R-\epsilon.$$
 Accordingly, the {\em compression rate region} for the  model  $(\boldsymbol{\mathrm{s}}_1\boldsymbol{\mathrm{s}}_2;C_{1}, C_{2};  f)$ is defined as
 \begin{equation*}
   \mathfrak{R}(\boldsymbol{\mathrm{s}}_1\boldsymbol{\mathrm{s}}_2;C_{1}, C_{2};  f)\triangleq\Big\{R:~\text{$R$ is  achievable for the model $(\boldsymbol{\mathrm{s}}_1\boldsymbol{\mathrm{s}}_2;C_{1}, C_{2};  f)$}\Big\},
   \end{equation*}
which is evidently closed and bounded. Consequently, the {\em compression capacity} for $(\boldsymbol{\mathrm{s}}_1\boldsymbol{\mathrm{s}}_2;C_{1}, C_{2};  f)$ is defined as
\begin{equation*}\label{defi_compre-capacity}
\mC(\boldsymbol{\mathrm{s}}_1\boldsymbol{\mathrm{s}}_2;C_{1}, C_{2};  f)\triangleq  \max~\mathfrak{R}(\boldsymbol{\mathrm{s}}_1\boldsymbol{\mathrm{s}}_2;C_{1}, C_{2};  f).
\end{equation*}

\section{Characterization of the Compression Capacities: $\boldsymbol{\mathrm{s}}_1\boldsymbol{\mathrm{s}}_2=00,10$, and $11$}\label{sec:compre-cap00,10,11}

In this section, we will fully characterize the compression capacities for the three cases of the model
$(00;C_{1}, C_{2};  f)$, $(10;C_{1}, C_{2};  f)$ and $(11;C_{1}, C_{2};  f)$.

\begin{theorem}\label{thm:00-comp-cap}
Consider the model of distributed function compression system $(00;C_{1}, C_{2};  f)$ as depicted in Fig.\,\ref{fig:System_use-00}, where $C_1\geq C_2$. Then,
\begin{equation*}
\mathcal{C}(00;C_1,C_2;f)=C_2.
\end{equation*}
\end{theorem}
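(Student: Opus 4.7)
The plan is to establish both directions of $\mathcal{C}(00;C_1,C_2;f)=C_2$ separately: achievability by a direct ``transmit the sources'' scheme, and the converse by exploiting the fact that the binary arithmetic sum allows unique recovery of one source from the function value and the other source.

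For achievability, I would fix an integer $k$ and let En1 simply transmit $\vec{x}$ and En2 transmit $\vec{y}$; the decoder then computes $f(\vec{x},\vec{y})=\vec{x}+\vec{y}$ componentwise. Here $|\Im\,\phi_1|=|\Im\,\phi_2|=2^k$, so $n_1(\mbC)=\lceil k/C_1\rceil$ and $n_2(\mbC)=\lceil k/C_2\rceil$. Since $C_1\geq C_2$, we have $n(\mbC)=n_2(\mbC)=\lceil k/C_2\rceil$, and hence $R(\mbC)=k/\lceil k/C_2\rceil$, which tends to $C_2$ as $k\to\infty$. This shows $\mathcal{C}(00;C_1,C_2;f)\geq C_2$.

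For the converse, I would show that any admissible $k$-shot code $\mbC=\{\phi_1,\phi_2;\psi\}$ forces $|\Im\,\phi_2|\geq 2^k$. The key observation is that because $f(x,y)=x+y$ on $\mathcal{A}=\{0,1\}$, the map $y\mapsto f(x_0,y)$ is injective for every fixed $x_0\in\mathcal{A}$; equivalently, knowing $\vec{x}_0$ and $f(\vec{x}_0,\vec{y})$ uniquely determines $\vec{y}$. Now fix any $\vec{x}_0\in\mathcal{A}^k$ and consider two distinct $\vec{y},\vec{y}'\in\mathcal{A}^k$. Then $f(\vec{x}_0,\vec{y})\neq f(\vec{x}_0,\vec{y}')$, so admissibility forces
\begin{equation*}
\psi\bigl(\phi_1(\vec{x}_0),\phi_2(\vec{y})\bigr)\neq \psi\bigl(\phi_1(\vec{x}_0),\phi_2(\vec{y}')\bigr),
\end{equation*}
which in turn requires $\phi_2(\vec{y})\neq\phi_2(\vec{y}')$. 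Hence $\phi_2$ is injective on $\mathcal{A}^k$ and $|\Im\,\phi_2|\geq 2^k$, so $n_2(\mbC)=\lceil \log|\Im\,\phi_2|/C_2\rceil\geq k/C_2$. Combined with $n(\mbC)\geq n_2(\mbC)$, this gives $R(\mbC)=k/n(\mbC)\leq C_2$ for every admissible code, and hence every achievable rate is at most $C_2$.

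I do not anticipate a genuine obstacle here: the structural asymmetry induced by $f$ (namely, that En2's message alone is essentially forced to describe $\vec{y}$ losslessly because any one fixed value of $\vec{x}$ turns the function into a bijection in $\vec{y}$) makes the converse a one-line pigeonhole argument once phrased correctly. The only point requiring minor care is the ceiling in the definition~\eqref{dis-cod-def-n_i(C)} of $n_i(\mbC)$, which is handled by letting $k\to\infty$ in the achievability computation so that $k/\lceil k/C_2\rceil\to C_2$.
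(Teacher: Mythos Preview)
Your proposal is correct and follows essentially the same route as the paper: identity encoding for achievability, and for the converse an injectivity argument showing $\phi_2$ must distinguish all $2^k$ values of $\vec{y}$, whence $n(\mbC)\geq n_2(\mbC)\geq k/C_2$. The paper additionally notes the symmetric fact $|\Im\,\phi_1|=2^k$, but your observation that only the $\phi_2$ side is needed (since $C_1\geq C_2$ makes it the binding constraint) is a harmless streamlining.
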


  \begin{proof}
 Let $k$ be an arbitrary positive integer. For  any admissible $k$-shot code  $\mbC=\big\{\phi_{1},\phi_{2};~\psi\big\}$ for the model $(00;C_1,C_2;f)$, we have
 \begin{equation*}
\phi_{1}(\vec{x})\neq\phi_{1}(\vec{x}\,')\label{eq:case00,converse-pf-0}
\end{equation*}
for any two different source messages $\vec{x}$ and $\vec{x}\,'$ in $\mathcal{A}^k$, because, otherwise,
\begin{equation*}
\big(\phi_1(\vec{x}),\phi_2(\vec{y})\big)=\big(\phi_1(\vec{x}\,'),\phi_2(\vec{y})\big),
\quad\forall~\vec{y}\in\mathcal{A}^k,
\end{equation*}
which immediately implies that
\begin{equation*}
f(\vec{x},\vec{y})=\psi\big(\phi_1(\vec{x}),\phi_2(\vec{y})\big)=\psi\big(\phi_1(\vec{x}\,'),\phi_2(\vec{y})\big)=
f(\vec{x}\,',\vec{y}),
\end{equation*}
a contradiction.
Thus, we obtain that
$|\Im\,\phi_{1}|=|\mathcal{A}^k|=2^k.$
Similarly, we also obtain that  $|\Im\,\phi_{2}|=|\mathcal{A}^k|=2^k.$
It follows from \eqref{dis-cod-def-n_i(C)} and \eqref{dis-cod-def-n(C)} that
 \begin{align}
 n&=\max\bigg\{\bigg\lceil\frac{\log|\Im\,\phi_{1}|}{C_1}\bigg\rceil,\bigg\lceil\frac{\log|\Im\,\phi_{2}|}{C_2}
\bigg\rceil\bigg\}\nonumber\\&=\max\bigg\{\bigg\lceil\frac{k}{C_1}\bigg\rceil,\bigg\lceil\frac{k}{C_2}
\bigg\rceil\bigg\}
=\bigg\lceil\frac{k}{C_2}
\bigg\rceil\geq\frac{k}{C_2},\label{eq:case00,converse-pf-11}
\end{align}
where the second equality in \eqref{eq:case00,converse-pf-11} follows from $C_1\geq C_2$. Hence, we have proved that for each admissible $k$-shot code $\mbC$, the compression rate is upper bounded by
\begin{equation}
R(\mbC)=\frac{k}{n}\leq C_2.\label{eq:case00,converse-pf-3}
\end{equation}
We note that the upper bound \eqref{eq:case00,converse-pf-3} on the compression rate is true for each positive integer $k$. This immediately implies that
  \begin{equation}
  \mathcal{C}(00;C_1,C_2;f)\leq C_2.\label{00-mod-upp-bound}
  \end{equation}

  On the other hand, we will prove
  \begin{equation}
  \mathcal{C}(00;C_1,C_2;f)\geq C_2.\label{dir-pf-thm:00-comp-cap-1}
  \end{equation}
For any positive integer $k$, we consider the following $k$-shot code $\mbC=\big\{\phi_1,\phi_2;~\psi\big\}$:
 \begin{itemize}
 \item the two encoding functions $\phi_1$ and $\phi_2$ are the identity mapping from $\mathcal{A}^k$ to $\mathcal{A}^k$, i.e.,
\begin{equation*}
\phi_{1}(\vec{x})=\vec{x},\quad\forall~ \vec{x}\in\mathcal{A}^k\quad\text{and}\quad
\phi_{2}(\vec{y})=\vec{y},\quad\forall ~\vec{y}\in\mathcal{A}^k;
\end{equation*}
\item
the decoding function $\psi$ is given as
\begin{equation*}
\psi\big(\phi_{1}(\vec{x}),\phi_{2}(\vec{y})\big)=\psi(\vec{x},\vec{y})=\vec{x}+\vec{y},
\quad\forall~\vec{x}\in\mathcal{A}^k~\text{and}~\vec{y}\in\mathcal{A}^k.
\end{equation*}
\end{itemize}
We readily see that the code $\mbC$  is admissible and
\begin{equation*}
n=\max\bigg\{\bigg\lceil\frac{k}{C_1}\bigg\rceil,\bigg\lceil\frac{k}{C_2}\bigg\rceil\bigg\}
=\bigg\lceil\frac{k}{C_2}\bigg\rceil.
\end{equation*}
Then, the compression rate of the code $\mbC$ is
\begin{equation*}
R(\mbC)=\frac{k}{n}=\dfrac{k}{~\big\lceil k/C_2\big\rceil~}.
\end{equation*}
We note that $\frac{k}{\lceil k/C_2\rceil}$ tends to $C_2$ as $k$ goes to infinity. This immediately implies \eqref{dir-pf-thm:00-comp-cap-1}. Combining \eqref{00-mod-upp-bound} and \eqref{dir-pf-thm:00-comp-cap-1}, the theorem is thus proved.
\end{proof}

In the model $(00;C_1,C_2;f)$, compressing the binary arithmetic sum $f$ is equivalent to compressing the identity function $f_{\rm{\text{id}}}(x,y)=(x,y)$, $\forall~(x,y)\in \mathcal{A} \times \mathcal{A}$. We thus obtain the compression capacity $\mathcal{C}(00;C_1,C_2;f_{\rm{\text{id}}})=C_2$ as stated in the following corollary.

\begin{cor}\label{cor1}
 Consider the model of distributed function compression system $(00;C_{1}, C_{2}; f_{\emph{\rm{\text{id}}}})$, where the target function $f_{\emph{\rm{\text{id}}}}$ is the identity function and $C_1\geq C_2$. Then,
 \begin{equation*}
 \mathcal{C}(00;C_1,C_2;f_{\emph{\rm{\text{id}}}})=C_2.
 \end{equation*}
\end{cor}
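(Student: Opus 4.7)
The plan is to reduce the corollary directly to Theorem~\ref{thm:00-comp-cap}. The key observation is that, in the model $(00;C_1,C_2;\cdot)$ the two encoders observe disjoint sources, so any function of $(\boldsymbol{X},\boldsymbol{Y})$ that is sensitive to each of its two arguments forces the same injectivity constraints on both encoding functions. Consequently $f$ and $f_{\rm{\text{id}}}$ impose identical structural constraints on admissible codes, and the two compression capacities coincide.

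For the upper bound $\mathcal{C}(00;C_1,C_2;f_{\rm{\text{id}}}) \leq C_2$, I would observe that any admissible $k$-shot code $\mbC=\{\phi_1,\phi_2;\,\psi_{\rm{\text{id}}}\}$ for $(00;C_1,C_2;f_{\rm{\text{id}}})$ yields an admissible $k$-shot code for $(00;C_1,C_2;f)$ using the same encoders, and hence with the same $n(\mbC)$, simply by post-composing its decoder with the coordinate-wise sum: set $\psi:=f\circ\psi_{\rm{\text{id}}}$. Therefore every rate achievable under $f_{\rm{\text{id}}}$ is also achievable under $f$, and Theorem~\ref{thm:00-comp-cap} yields $\mathcal{C}(00;C_1,C_2;f_{\rm{\text{id}}}) \leq \mathcal{C}(00;C_1,C_2;f)=C_2$.

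For the matching lower bound, I would reuse verbatim the achievability construction in the proof of Theorem~\ref{thm:00-comp-cap}: for each positive integer $k$ take $\phi_1$ and $\phi_2$ to be the identity mapping on $\mathcal{A}^k$, and let $\psi(\phi_1(\vec{x}),\phi_2(\vec{y}))=(\vec{x},\vec{y})$ for every $(\vec{x},\vec{y})\in\mathcal{A}^k\times\mathcal{A}^k$. This code is manifestly admissible for $f_{\rm{\text{id}}}$, uses the system $n=\lceil k/C_2\rceil$ times, and attains rate $k/\lceil k/C_2\rceil$, which tends to $C_2$ as $k\to\infty$. There is essentially no obstacle here; indeed the converse argument of Theorem~\ref{thm:00-comp-cap} already established that the encoders of any admissible $f$-code must be injective, so the decoder implicitly recovers $(\boldsymbol{X},\boldsymbol{Y})$ exactly, making the equivalence between the two problems transparent.
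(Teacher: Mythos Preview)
Your proposal is correct and follows essentially the same approach as the paper: the paper simply remarks that in the $(00;C_1,C_2;\cdot)$ model compressing $f$ is equivalent to compressing $f_{\rm id}$ and then invokes Theorem~\ref{thm:00-comp-cap}, while you spell out that equivalence explicitly via the post-composition $\psi=f\circ\psi_{\rm id}$ for the upper bound and the identity-encoder construction for the lower bound. Your treatment is a faithful (and slightly more detailed) unpacking of the paper's one-line justification.
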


Next, we consider a compression problem over Witsenhausen's model as depicted in Fig.\,\ref{fig:Side-infor-model-00}, where the encoder $\En$ compresses the binary source $\boldsymbol{Y}$; and the decoder, which  receives the output of the encoder and also observes another binary source  $\boldsymbol{X}$ as side information, is required to reconstruct $\boldsymbol{Y}$ with zero error; and the channel $(\En,~\De)$ has the capacity constraint
$C$, i.e., for each use of the channel $(\En,~\De)$, at most $C$ bits can be reliably transmitted.
This compression problem can be regarded as a dual problem of the original compression problem introduced by   Witsenhausen~\cite{Witsenhausen-IT-76}. Similarly, we consider the compression capacity for this model, which is defined as the maximum average number of outputs of the source $Y$ that can be reconstructed at the decoder with zero error for one use of the system.
We can see that the above model is a special case of the model $(00;C_1,C_2;f_{\rm{\text{id}}})$ by letting $C_1=+\infty$ and $C_2=C$. Thus, we immediately obtain the compression capacity $C$ for the model depicted in Fig.\,\ref{fig:Side-infor-model-00} as stated in the following corollary.

\begin{cor}
 For the compression problem over Witsenhausen's model as depicted in Fig.\,\ref{fig:Side-infor-model-00}, the compression capacity is $C$.
\end{cor}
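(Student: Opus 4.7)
The plan is to observe that the compression problem depicted in Fig.~\ref{fig:Side-infor-model-00} embeds naturally into the framework of Corollary~\ref{cor1}, with the encoder $\En$ playing the role of $\Entwo$ and the ``side information'' link delivering $\boldsymbol{X}$ to $\De$ playing the role of $\Enone$ with capacity $C_1=+\infty$. The chain of identities $\mathcal{C}(00;+\infty,C;f_{\mathrm{id}})=\min(+\infty,C)=C$ from Corollary~\ref{cor1} then pinpoints the desired value $C$. Since Theorem~\ref{thm:00-comp-cap} was formulated with finite $C_1\geq C_2$, I would not rely on a pure appeal to Corollary~\ref{cor1} but instead reproduce its very short converse and achievability arguments in this simplified one-encoder setting.

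For the converse, I would argue as in the first half of the proof of Theorem~\ref{thm:00-comp-cap}. For any admissible $k$-shot code $\mbC=\big\{\phi;\,\psi\big\}$ for this model, two distinct messages $\vec{y},\vec{y}\,'\in\mathcal{A}^k$ must satisfy $\phi(\vec{y})\neq\phi(\vec{y}\,')$: otherwise, because~\eqref{PXY} ensures that every pair $(\vec{x},\vec{y})$ has positive probability, the decoder, on input $\big(\vec{x},\phi(\vec{y})\big)=\big(\vec{x},\phi(\vec{y}\,')\big)$ for any $\vec{x}$, would be forced to output both $\vec{y}$ and $\vec{y}\,'$, a contradiction. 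Hence $|\Im\,\phi|\geq 2^k$, and by the same ceiling calculation as in~\eqref{eq:case00,converse-pf-11} we obtain $n\geq\lceil k/C\rceil\geq k/C$, so the rate $k/n$ is upper bounded by $C$ for every admissible code.

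For achievability I would take the identity encoding $\phi(\vec{y})=\vec{y}$ with the trivial decoder $\psi\big(\vec{x},\phi(\vec{y})\big)=\vec{y}$; the side information $\vec{x}$ is not even used. This code is plainly admissible, uses the channel $n=\lceil k/C\rceil$ times, and therefore achieves rate $k/\lceil k/C\rceil$, which tends to $C$ as $k\to\infty$. Combining with the converse yields the compression capacity $C$.

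I do not anticipate any genuine obstacle in carrying this out: the claim is essentially Corollary~\ref{cor1} with one of the two encoders rendered superfluous by free side information, and both halves of the argument are already present in~Theorem~\ref{thm:00-comp-cap}. The only point that requires minor care is the bookkeeping involved in treating ``$C_1=+\infty$'' formally (which is why I prefer the direct one-encoder reproduction of the argument over a black-box citation of Corollary~\ref{cor1}), together with an explicit invocation of~\eqref{PXY} to justify that every pair of distinct source messages must be separated by $\phi$.
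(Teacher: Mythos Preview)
Your proposal is correct and follows essentially the same route as the paper: the paper simply observes that the Witsenhausen model is the special case $(00;C_1,C_2;f_{\mathrm{id}})$ with $C_1=+\infty$ and $C_2=C$ and invokes Corollary~\ref{cor1}, without further argument. Your additional direct reproduction of the converse and achievability to avoid relying on Corollary~\ref{cor1} at an infinite value of $C_1$ is a sensible precaution that the paper does not take, but the underlying idea is identical.
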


\begin{figure}[t]
 \centering
  {
\begin{tikzpicture}[auto, node distance=2cm,>=latex',thick]
    \node [input, name=inputX, label=right:$\boldsymbol{X}$] at (9,-0.8) {};
    \node [input, name=inputY, label=left:$\boldsymbol{Y}$] at (2.5,-2.4) {};
    \node [block] (en2) at (5.5,-2.4){{\bf En}};
    \node [block] (de)  at (9,-2.4) {{\bf De}};
    \node [output, name=output, label=above:$\text{$\boldsymbol{Y}$}$] at (11.5,-2.4) {};
    \draw [->] (inputX) -- (de);
    \draw [->] (inputY) -- (en2);
    \draw [->] (en2) -- node[above] {$C$} (de);
    \draw [->] (de) -- (output);
\end{tikzpicture}
}
\caption{A  compression problem  over Witsenhausen's model.}\label{fig:Side-infor-model-00}
\end{figure}
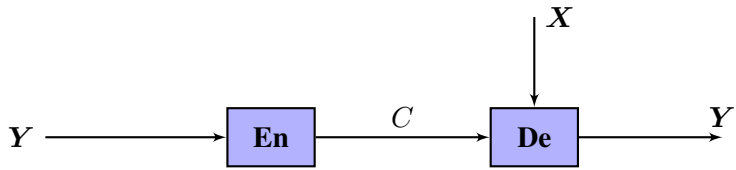

The following two theorems present the compression capacities for the models $(10;C_{1}, C_{2};  f)$ and $(11;C_{1}, C_{2};  f)$, respectively.

\begin{theorem}\label{thm:10-comp-cap}
Consider the model of distributed function compression system $(10;C_{1}, C_{2};  f)$ as depicted in Fig.\,\ref{fig:System_use-10}, where $C_1\geq C_2$. Then,
\begin{equation*}
\mathcal{C}(10;C_1,C_2;f)=C_2.
\end{equation*}
\end{theorem}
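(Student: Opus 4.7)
The plan is to mirror the two-sided structure of the proof of Theorem~\ref{thm:00-comp-cap}, matching an upper and a lower bound of $C_2$. The key new observation, compared with the case $(00;C_1,C_2;f)$, is that in $(10;C_1,C_2;f)$ only $\phi_2$ has access to both source blocks, so the entire information burden about $\boldsymbol{Y}$ (equivalently, about $\boldsymbol{X}+\boldsymbol{Y}$ once $\boldsymbol{X}$ is fixed) must fall on the channel of capacity $C_2$. This makes the converse easier than in the $(01)$ case and in fact essentially identical in flavor to the injectivity argument in Theorem~\ref{thm:00-comp-cap}, just applied to one coordinate slice of $\phi_2$.

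For the converse $\mathcal{C}(10;C_1,C_2;f)\leq C_2$, fix any admissible $k$-shot code $\mbC=\{\phi_1,\phi_2;\psi\}$ and any $\vec{x}\in\mathcal{A}^k$. For two distinct $\vec{y},\vec{y}\,'\in\mathcal{A}^k$, the target function values $\vec{x}+\vec{y}$ and $\vec{x}+\vec{y}\,'$ are distinct, while $\phi_1(\vec{x})$ is the same in both cases. Admissibility therefore forces
\begin{equation*}
\phi_2(\vec{x},\vec{y})\neq \phi_2(\vec{x},\vec{y}\,'),
\end{equation*}
i.e., the slice $\phi_2(\vec{x},\cdot)\colon\mathcal{A}^k\to\Im\,\phi_2$ is injective. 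Hence $|\Im\,\phi_2|\geq 2^k$, so by \eqref{dis-cod-def-n_i(C)} and \eqref{dis-cod-def-n(C)},
\begin{equation*}
n(\mbC)\;\geq\; n_2(\mbC)\;=\;\left\lceil\frac{\log|\Im\,\phi_2|}{C_2}\right\rceil\;\geq\;\left\lceil\frac{k}{C_2}\right\rceil\;\geq\;\frac{k}{C_2},
\end{equation*}
yielding $R(\mbC)=k/n(\mbC)\leq C_2$. Since this holds for every admissible code and every $k$, we get $\mathcal{C}(10;C_1,C_2;f)\leq C_2$.

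For achievability $\mathcal{C}(10;C_1,C_2;f)\geq C_2$, I would exhibit, for each positive integer $k$, the trivial separation code: $\phi_1(\vec{x})=\vec{x}$ for all $\vec{x}\in\mathcal{A}^k$; $\phi_2(\vec{x},\vec{y})=\vec{y}$ for all $(\vec{x},\vec{y})\in\mathcal{A}^k\times\mathcal{A}^k$; and $\psi(\vec{x},\vec{y})=\vec{x}+\vec{y}$. This code is clearly admissible, uses $n_1=\lceil k/C_1\rceil$ channel uses of the first link and $n_2=\lceil k/C_2\rceil$ of the second, and (since $C_1\geq C_2$) satisfies $n=\lceil k/C_2\rceil$, giving rate $k/\lceil k/C_2\rceil\to C_2$ as $k\to\infty$.

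There is essentially no major obstacle: the only non-mechanical step is recognizing that $\phi_1(\vec{x})$ being fixed on the set $\{(\vec{x},\vec{y}):\vec{y}\in\mathcal{A}^k\}$ reduces the converse to an injectivity argument on a single slice of $\phi_2$, which then matches the achievability trivially. Note that this also shows that in model $(10;C_1,C_2;f)$ the extra observation of $\boldsymbol{X}$ at $\Entwo$ is completely useless for increasing the compression capacity, which is an interesting contrast to the $(01;C_1,C_2;f)$ case treated in the later, harder part of the paper.
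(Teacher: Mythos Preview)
Your proposal is correct and follows essentially the same approach as the paper. The converse arguments are identical; for achievability, the paper invokes the inclusion $\mathcal{C}(10;C_1,C_2;f)\geq\mathcal{C}(00;C_1,C_2;f)=C_2$ (since any admissible $(00)$-code is also an admissible $(10)$-code), whereas you write out the explicit identity code---but this is the very code used in the proof of Theorem~\ref{thm:00-comp-cap}, so the two achievability arguments are effectively the same.
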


\begin{proof}
First, we readily see that  each admissible code for $(00;C_1,C_2;f)$ is also an admissible code for
$(10,C_1,C_2;f)$. Together with Theorem~\ref{thm:00-comp-cap},  this implies that
\begin{equation}
\mathcal{C}(10;C_1,C_2;f)\geq \mathcal{C}(00;C_1,C_2;f)=C_2.\label{pfthm:10-comp-cap-dir-1}
\end{equation}
In the following, we will prove that
\begin{equation*}
\mathcal{C}(10;C_1,C_2;f)\leq C_2.\label{pfthm:10-comp-cap-conv-1}
\end{equation*}

For any positive integer $k$, we
consider an arbitrary admissible $k$-shot code  $\mbC=\big\{\phi_{1},\phi_{2};~\psi\big\}$ for the model $(10;C_1,C_2;f)$. For this code $\mbC$,
we readily see that
for any two pairs $(\vec{x},\vec{y})$ and $(\vec{x},\vec{y}\,')$ in $\mathcal{A}^k\times\mathcal{A}^k$ with $\vec{y}\neq\vec{y}\,'$ (where $f(\vec{x},\vec{y})\neq f(\vec{x},\vec{y}\,')$),
\begin{equation}
\big(\phi_{1}(\vec{x}),\phi_{2}(\vec{x},\vec{y})\big)\neq
\big(\phi_{1}(\vec{x}),\phi_{2}(\vec{x},\vec{y}\,')\big),
\label{eq:case01,converse-pf-1}
\end{equation}
because, otherwise, we have
\begin{equation*}
f(\vec{x},\vec{y})=\psi\big(\phi_{1}(\vec{x}),\phi_{2}(\vec{x},\vec{y})\big)=\psi
\big(\phi_{1}(\vec{x}),\phi_{2}(\vec{x},\vec{y}\,')\big)=f(\vec{x},\vec{y}\,'),
\end{equation*}
which is a contradiction to $f(\vec{x},\vec{y})\neq f(\vec{x},\vec{y}\,')$. Thus, it follows from \eqref{eq:case01,converse-pf-1} that for each $\vec{x}\in\mathcal{A}^k$,
\begin{equation*}
\phi_{2}(\vec{x},\vec{y})\neq \phi_{2}(\vec{x},\vec{y}\,'),\quad\forall~\vec{y},\vec{y}\,'\in\mathcal{A}^k
~\text{with}~\vec{y}\neq\vec{y}\,'.
\end{equation*}
This further implies that
\begin{equation*}
|\Im\,\phi_{2}|\geq|\mathcal{A}^k|=2^k.
\end{equation*}
It follows from  \eqref{dis-cod-def-n_i(C)} and \eqref{dis-cod-def-n(C)} that
  \begin{align}
n&=\max\bigg\{\bigg\lceil\frac{\log|\Im\,\phi_{1}|}{C_1}\bigg\rceil,\bigg\lceil\frac{\log|\Im\,\phi_{2}|}{C_2}
\bigg\rceil\bigg\}\geq \frac{\log|\Im\,\phi_{2}|}{C_2}\geq \frac{k}{C_2},\nonumber
\end{align}
i.e.,
\begin{equation}
R(\mbC)=\frac{k}{n}\leq C_2.\label{eq:case01,converse-pf-2}
\end{equation}
Finally, we note that the upper bound \eqref{eq:case01,converse-pf-2} is true for each positive integer $k$ and each admissible $k$-shot code for the model $(10;C_1,C_2;f)$. We thus have proved that
  $\mathcal{C}(10;C_1,C_2;f)\leq C_2$.
  Together with \eqref{pfthm:10-comp-cap-dir-1}, the theorem has been proved.
\end{proof}

  \begin{theorem}\label{thm:11-comp-cap}
  Consider the model of distributed function compression system $(11;C_{1}, C_{2};  f)$ as depicted in Fig.\,\ref{fig:System_use-11}, where $C_1\geq C_2$. Then,
\begin{equation*}
\mathcal{C}(11;C_1,C_2;f)=\frac{C_1+C_2}{\log3}.
\end{equation*}
\end{theorem}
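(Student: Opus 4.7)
The plan is to observe that in the $(11;C_1,C_2;f)$ model both encoders have access to the entire pair $(\boldsymbol{X},\boldsymbol{Y})$, so the problem collapses to a single ``super-encoder'' that must convey, through a combined channel of effective bandwidth $C_1+C_2$ bits per system use, the ternary-valued vector $f(\boldsymbol{X},\boldsymbol{Y})\in\{0,1,2\}^k$. The factor $\log 3$ in the capacity expression is simply the per-symbol information cost of indexing the alphabet $\{0,1,2\}$.

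For the converse, I would pick an arbitrary admissible $k$-shot code $\mbC=\{\phi_1,\phi_2;\psi\}$ and form the joint map $g(\vec{x},\vec{y})\triangleq\big(\phi_1(\vec{x},\vec{y}),\phi_2(\vec{x},\vec{y})\big)$. Admissibility forces $\psi\circ g=f$, so whenever $f(\vec{x},\vec{y})\neq f(\vec{x}\,',\vec{y}\,')$ we must have $g(\vec{x},\vec{y})\neq g(\vec{x}\,',\vec{y}\,')$. Since $P_{XY}(x,y)>0$ on all of $\mathcal{A}\times\mathcal{A}$, the function $f$ attains each of the $3^k$ vectors in $\{0,1,2\}^k$ on some input, and hence $|\Im\,\phi_1|\cdot|\Im\,\phi_2|\geq 3^k$, i.e., $\log|\Im\,\phi_1|+\log|\Im\,\phi_2|\geq k\log 3$. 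Combining with the inequalities $\log|\Im\,\phi_i|\leq nC_i$ that follow directly from $n\geq n_i(\mbC)=\lceil\log|\Im\,\phi_i|/C_i\rceil$ for $i=1,2$ yields $n(C_1+C_2)\geq k\log 3$, and therefore $R(\mbC)=k/n\leq (C_1+C_2)/\log 3$.

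For achievability I plan to fix any rate $R<(C_1+C_2)/\log 3$ and, for large $n$, set $k=\lfloor nR\rfloor$; one checks that $k\log 3\leq \lfloor nC_1\rfloor+\lfloor nC_2\rfloor$ for $n$ sufficiently large, because $R\log 3<C_1+C_2$ absorbs the two constant floor losses. Writing $M_i\triangleq 2^{\lfloor nC_i\rfloor}$, the inequality $M_1 M_2\geq 3^k$ then guarantees an injection $\theta:\{0,1,2\}^k\hookrightarrow\{1,\dots,M_1\}\times\{1,\dots,M_2\}$ with coordinate maps $\theta_1,\theta_2$. Setting $\phi_i(\vec{x},\vec{y})\triangleq\theta_i\big(f(\vec{x},\vec{y})\big)$ at both encoders and $\psi\triangleq\theta^{-1}$ on $\Im\,\theta$ (extended arbitrarily outside $\Im\,\theta$) yields an admissible code with $n_i(\mbC)\leq n$ for $i=1,2$, hence $n(\mbC)\leq n$ and rate $k/n\to R$.

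Because both encoders observe the same data, the problem is essentially point-to-point, so I do not anticipate any conceptual obstacle. The only delicate bookkeeping is tracking the ceiling/floor losses when converting information bits into channel uses, all of which vanish in the limit $n\to\infty$; the claim that every vector in $\{0,1,2\}^k$ is actually realized by $f$ on some source pair follows immediately from $P_{XY}(x,y)>0$ for all $(x,y)\in\mathcal{A}\times\mathcal{A}$.
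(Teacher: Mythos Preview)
Your proposal is correct and follows essentially the same approach as the paper: the converse argument (that $|\Im\,\phi_1|\cdot|\Im\,\phi_2|\geq 3^k$ combined with $nC_i\geq\log|\Im\,\phi_i|$ forces $k/n\leq(C_1+C_2)/\log 3$) is identical, and for achievability the paper likewise splits an injection of $\{0,1,2\}^k$ across the two channels, the only cosmetic difference being that the paper fixes $k$ and sets $n=\lceil k\log 3/(C_1+C_2)\rceil$, whereas you fix $n$ and set $k=\lfloor nR\rfloor$.
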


\begin{IEEEproof}
Let $k$ be an arbitrary positive integer.
Consider an admissible $k$-shot code  $\big\{\phi_{1},\phi_{2};~\psi\big\}$ for the model $(11;C_1,C_2;f)$.
Clearly,  for any two  source message pairs $(\vec{x},\vec{y})$ and $(\vec{x}\,',\vec{y}\,')$ in $\mathcal{A}^k\times\mathcal{A}^k$ such that $f(\vec{x},\vec{y})\neq f(\vec{x}\,',\vec{y}\,')$, we have
\begin{equation*}
\big(\phi_{1}(\vec{x},\vec{y}),\phi_{2}(\vec{x},\vec{y})\big)\neq
\big(\phi_{1}(\vec{x}\,',\vec{y}\,'),\phi_{2}(\vec{x}\,',\vec{y}\,')\big)
.\label{eq:case11,converse-pf-0}
\end{equation*}
This implies that
\begin{equation}
|\Im\,\phi_1|\cdot|\Im\,\phi_2|\geq\big|\Im\,(\phi_{1},\phi_{2})\big|\geq |\Im\,f|^k=3^k,\label{11case-conver-pf-eq1}
\end{equation}
where we let
\begin{equation*}
\Im\,(\phi_{1},\phi_{2})=\Big\{\big(\phi_{1}(\vec{x},\vec{y}),\phi_{2}(\vec{x},\vec{y})\big):~\forall~
(\vec{x},\vec{y})\in\mathcal{A}^k\times\mathcal{A}^k\Big\}.
\end{equation*}
By \eqref{dis-cod-def-n(C)}, we thus obtain that
\begin{equation*}
n=\max\bigg\{\bigg\lceil\frac{\log|\Im\,\phi_{1}|}{C_1}\bigg\rceil,\bigg\lceil\frac{\log|\Im\,\phi_{2}|}{C_2}
\bigg\rceil\bigg\}\geq\max\bigg\{\frac{\log|\Im\,\phi_{1}|}{C_1},\frac{\log|\Im\,\phi_{2}|}{C_2}
\bigg\},
\end{equation*}
or equivalently,
\begin{equation*}
n C_1\geq \log|\Im\,\phi_{1}|\quad\text{and}\quad n C_2\geq \log|\Im\,\phi_{2}|.
\end{equation*}
By \eqref{11case-conver-pf-eq1}, we consider
\begin{equation*}
n(C_1+C_2)\geq\log|\Im\,\phi_{1}|+\log|\Im\,\phi_{2}|\geq \log|\Im\,(\phi_{1},\phi_{2})|\geq\log3^k=k\log3,
\end{equation*}
which implies that
\begin{equation}
R(\mbC)=\frac{k}{n}\leq\frac{C_1+C_2}{\log3}.\label{11-case-upp-bound-eq6}
\end{equation}
Finally, we note that the above upper bound \eqref{11-case-upp-bound-eq6}  is true for each positive integer $k$ and each admissible $k$-shot code   for the model $(11;C_1,C_2;f)$. We thus have proved that
\begin{equation*}
\mathcal{C}(11;C_1,C_2;f)\leq\frac{C_1+C_2}{\log3}.\label{11-case-upp-bound}
\end{equation*}

\begin{figure}[t]
 \centering
  {
\begin{tikzpicture}[auto, node distance=2cm,>=latex',thick]
    \node (z) at (2,-1.2)[left=-4mm] {$\boldsymbol{Z}=f(\boldsymbol{X},\boldsymbol{Y})$};
    \node [block] (en1) at (5.5,0) {{\bf En1}};
    \node [block] (en2) at (5.5,-2.4){{\bf En2}};
    \node [block] (de)  at (9,-1.2) {{\bf De}};
    \node [output, name=output, label=above:$\boldsymbol{Z}$] at (11.5,-1.2) {};
    \draw [->] (2.5,-1.0) -- (en1);
     \draw [->] (2.5,-1.4) -- (en2);
    \draw [->] (en1) -- node       {$C_1$} (de);
    \draw [->] (en2) -- node[swap] {$C_2$} (de);
    \draw [->] (de) -- (output);
\end{tikzpicture}
}
\caption{The equivalent model of $(11;C_1,C_2;f)$.}\label{fig:System_use-11-eqv-mod}
\end{figure}
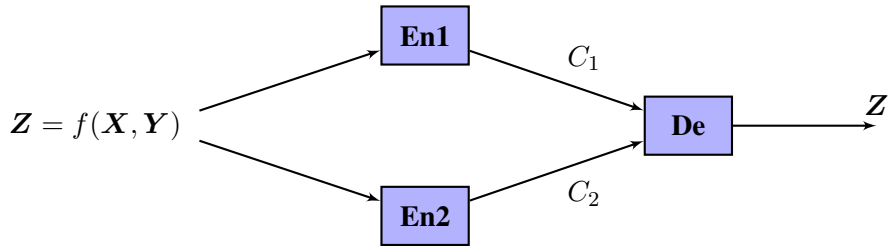
To prove that
\begin{align*}
\mathcal{C}(11;C_1,C_2;f)\geq\frac{C_1+C_2}{\log3},
\end{align*}
we consider the model as depicted in
Fig.\,\ref{fig:System_use-11-eqv-mod}, where $$\boldsymbol{Z}=f(\boldsymbol{X},\boldsymbol{Y})=\boldsymbol{X}+\boldsymbol{Y}.$$
Clearly, each admissible code for this model can  be regarded as an admissible code for $(11;C_1,C_2;f)$. For each positive integer $k$, we consider directly transmitting
\begin{equation*}
\vec{z}=(z_1,z_2,\cdots,z_k)\in \Im f^{\,k}=\{0,1,2\}^k
\end{equation*}
by using the channels  $(\Enone,~\De)$ and $(\Entwo,~\De)$ $n$ times, where
\begin{equation}\label{equ1-pf-thm-11}
n=\bigg\lceil\frac{k\log 3}{C_1+C_2}\bigg\rceil\geq \frac{k\log 3}{C_1+C_2}.
\end{equation}
We readily see that the decoder $\De$ can recover $\vec{z}$ since by~\eqref{equ1-pf-thm-11},
\begin{equation*}
2^{nC_1}\cdot 2^{nC_2}\geq 2^{k\log3}=3^k=|\Im\,f|^k.
\end{equation*}
Together with the fact that the compression rate
\begin{equation*}
\frac{k}{n}=\dfrac{k}{\,\Big\lceil\frac{k\log 3}{C_1+C_2}\Big\rceil\,} \to \frac{C_1+C_2}{\log3}\quad\text{as}\quad  k\to+\infty.
\end{equation*}
We thus prove that $\frac{C_1+C_2}{\log3}$ is achievable, i.e.,
 \begin{equation*}
\mathcal{C}(11;C_1,C_2;f)\geq\frac{C_1+C_2}{\log3}.\label{11-case-low-bound}
\end{equation*}
We thus have proved the theorem.
\end{IEEEproof}

\section{Characterization of the Compression Capacity: $\boldsymbol{\mathrm{s}}_1\boldsymbol{\mathrm{s}}_2=01$}\label{sec:compre-cap01}


We will characterize the compression capacity for the remaining case of the model $(01;C_1,C_2;f)$ as depicted in Fig.\,\ref{fig:System_use}. The characterization of this compression capacity is very difficult. We first focus on a typical special case $(01;2,1;f)$ (i.e., $C_1=2$ and $C_2=1$) to explicitly show our technique in which we develop a novel graph coloring approach, and the proof is highly nontrivial. With the same technique, we similarly characterize the compression capacity $\mathcal{C}(01;C_1,C_2;f)$. We first present the compression capacity of the model $(01;2,1;f)$ in the following theorem.

\begin{theorem}\label{thm:01;2,1-comp-cap}
Consider the model of distributed function compression system $(01;2,1;  f)$. Then,
\begin{equation*}
\mathcal{C}(01;2,1;f)=\log_{3} 6.
\end{equation*}
\end{theorem}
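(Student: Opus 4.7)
The plan has two matching parts: an explicit construction attaining rate $\log_{3}6$ and a converse that no admissible code can exceed it. The converse is the substantive half, and it rests on a sharp combinatorial inequality for the ``Minkowski expansion'' $Y\mapsto F(Y):=\{\vec x+\vec y:\vec x\in\{0,1\}^{k},\,\vec y\in Y\}\subseteq\{0,1,2\}^{k}$; this inequality is to play the role of the aitch-function estimate mentioned in the introduction.

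For achievability I would exploit that $\Enone$ observes both $\vec X$ and $\vec Y$ while $\Entwo$ sees only $\vec Y$, together with the identity $n\log_{2}6=n+n\log_{2}3$. Given any $k$ and $n$ with $k\leq n\log_{3}6$, equivalently $(k-n)\log_{2}3\leq n$, consider the $k$-shot code $\phi_{2}(\vec y)=(y_{1},\dots,y_{n})$ and $\phi_{1}(\vec x,\vec y)=\bigl((x_{1},\dots,x_{n}),\,(x_{n+1}+y_{n+1},\dots,x_{k}+y_{k})\bigr)$. Then $\Entwo$ ships $n$ bits in $n$ uses of the capacity-$1$ channel, and $\Enone$ ships $n$ binary bits together with at most $n$ bits encoding the ternary suffix of $\vec X+\vec Y$ in $n$ uses of the capacity-$2$ channel, while the decoder reconstructs $f(\vec X,\vec Y)$ by adding the two received prefixes coordinatewise and reading off the ternary suffix directly. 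Letting $k,n\to\infty$ with $k/n\to\log_{3}6$ yields $\mathcal{C}(01;2,1;f)\geq\log_{3}6$.

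For the converse, let $\mbC=\{\phi_{1},\phi_{2};\psi\}$ be an admissible $k$-shot code with image sizes $M_{i}=|\Im\phi_{i}|$, so $n(\mbC)\geq\max(\lceil(\log M_{1})/2\rceil,\lceil\log M_{2}\rceil)$. View $\phi_{2}$ as a partition of $\{0,1\}^{k}$ into classes $Y_{j}=\phi_{2}^{-1}(j)$; zero-error admissibility forces $\phi_{1}$, restricted to pairs $(\vec x,\vec y)$ with $\vec y\in Y_{j}$, to separate all distinct values of $f$, which gives $M_{1}\geq\max_{j}|F(Y_{j})|$. Since $M_{2}\leq 2^{n}$, some class $Y^{\star}$ satisfies $|Y^{\star}|\geq 2^{k-n}$, and the key lemma below delivers $|F(Y^{\star})|\geq 2^{n}\cdot 3^{k-n}$. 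Combining with $M_{1}\leq 4^{n}$ yields $2^{n}\geq 3^{k-n}$, i.e. $n\log_{2}6\geq k\log_{2}3$, whence $R(\mbC)=k/n\leq\log_{3}6$.

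The heart of the argument is the key lemma (the aitch-function estimate): for every nonempty $Y\subseteq\{0,1\}^{k}$,
\[
|F(Y)|\;\geq\;2^{k}\cdot|Y|^{\alpha},\qquad \alpha:=\log_{2}(3/2),
\]
with equality when $Y$ is an axis-aligned subcube. I would prove this by induction on $k$. Splitting $Y$ by its first coordinate into $Y_{0},Y_{1}\subseteq\{0,1\}^{k-1}$ gives the exact identity $|F(Y)|=|F(Y_{0})|+|F(Y_{0})\cup F(Y_{1})|+|F(Y_{1})|$, because the slice of $F(Y)$ at first coordinate $0$ (resp.\ $2$) comes only from $Y_{0}$ (resp.\ $Y_{1}$) while the slice at $1$ comes from either. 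Using $|F(Y_{0})\cup F(Y_{1})|\geq\max(|F(Y_{0})|,|F(Y_{1})|)$ and the inductive hypothesis, and putting $a=|Y_{0}|\geq b=|Y_{1}|$ without loss of generality, the step reduces to the one-variable inequality $2a^{\alpha}+b^{\alpha}\geq 2(a+b)^{\alpha}$. This is where the value $\alpha=\log_{2}(3/2)$ enters crucially: both sides agree at $b=0$ and at $b=a$ (the latter is exactly the identity $3=2\cdot 2^{\alpha}$), and a routine calculus check on the difference $h(t)=2+t^{\alpha}-2(1+t)^{\alpha}$ (which satisfies $h(0)=h(1)=0$, $h'(0^{+})=+\infty$, $h'(1)<0$, and has a unique interior critical point) confirms $h\geq 0$ on $[0,1]$. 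The main obstacle will be isolating precisely this scalar inequality with its sharp exponent: any weaker $\alpha$ fails to yield the critical rate $\log_{3}6$, while any stronger one is ruled out by the subcube construction. Once the expansion lemma is in hand, the remainder of the converse is bookkeeping, and I expect a refined multi-class version of the same lemma to power the general theorem for $\mathcal{C}(01;C_{1},C_{2};f)$ announced afterwards.
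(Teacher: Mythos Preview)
Your proposal is correct and follows essentially the same approach as the paper. Both the achievability (split coordinates into a block where $\Enone$ transmits the ternary sum directly and a block where each encoder sends its own bit) and the converse (partition $\mathcal{A}^k$ by $\phi_2$, pigeonhole a large block, and apply the sumset lower bound $|\mathcal{A}^k+\mathcal{L}|\ge 2^k|\mathcal{L}|^{\log_2(3/2)}$ proved by induction via the scalar inequality $2a^{\alpha}+b^{\alpha}\ge 2(a+b)^{\alpha}$) match the paper's Direct and Converse parts and its Lemmas~\ref{lemma:N_relat_code_color_set}--\ref{lemma:h(l)-property}. Your exact slice identity $|F(Y)|=|F(Y_0)|+|F(Y_0)\cup F(Y_1)|+|F(Y_1)|$ is a clean variant of the paper's bound~\eqref{pflemma:N_X+L_low_bound-9.2}, and your endgame (combining $M_1\le 4^n$, $M_2\le 2^n$ with $M_1\ge 2^n3^{k-n}$) is a slightly more direct version of the paper's min--max computation \eqref{eq:pf_upp_bound_log3(6)-6.3}--\eqref{eq:pf_upp_bound_log3(6)-9}, but the substance is identical.
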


\subsection{Proof of Theorem~\ref{thm:01;2,1-comp-cap}}\label{subsec_pf_Thm4}

\smallskip
\noindent\textbf{Direct Part:}

 In this part, we will prove the achievability of $\log_{3}6$, or equivalently,
\begin{equation*}
\mathcal{C}(01;2,1;f)\geq\log_{3}6.\label{N_cap_low_bound_log_3_6}
\end{equation*}

For any positive integer $k$,
we will construct an admissible $k$-shot code $\mbC=\big\{\phi_1,\phi_2;~\psi\big\}$ for the model $(01;2,1;f)$ as follows. First, let
 \begin{equation*}
 \vec{x}=(x_1,x_2,\cdots,x_k)\in\mathcal{A}^k\quad \text{and}\quad \vec{y}=(y_1,y_2,\cdots,y_k)\in\mathcal{A}^k
 \end{equation*}
  be two arbitrary source messages generated by
 $\boldsymbol{X}$ and $\boldsymbol{Y}$, respectively. Let
\begin{equation}\label{eq:code-constr-k_1-k_2-k_3}
k_{1}=\bigg\lceil\frac{k}{\log6}\bigg\rceil.
 \end{equation}
 \begin{itemize}
\item The  encoding functions $\phi_{1}(\vec{x},\vec{y})$ and $\phi_{2}(\vec{y})$ are defined to be two $k$-vectors below:
\begin{align}
 \phi_{1}(\vec{x},\vec{y})& =
 \begin{bmatrix}
  x_{i}+y_{i}: & 1\leq i\leq k_{1}-1 \\
  x_{j}: & k_{1}\leq j\leq k\\
  \end{bmatrix},\label{eq:encod-fun-1}\\
 \phi_{2}(\vec{y})& =
 \begin{bmatrix}
 0: & 1\leq i\leq k_{1}-1 \\
 y_{j}: & k_{1}\leq j\leq k\\
 \end{bmatrix};\label{eq:encod-fun-3}
\end{align}
\item The decoding function $\psi$ is defined by
\begin{equation*}
		\psi\big( \phi_{1}(\vec{x},\vec{y}),\phi_{2}(\vec{y})\big)
=\phi_{1}(\vec{x},\vec{y})+\phi_{2}(\vec{y}).
\end{equation*}
\end{itemize}
We readily see that
\begin{equation*}
\phi_{1}(\vec{x},\vec{y})+\phi_{2}(\vec{y})=\vec{x}+\vec{y},
\end{equation*}
which implies the admissibility of
 the code  $\mbC$.
It follows from \eqref{eq:encod-fun-1} and \eqref{eq:encod-fun-3} that
\begin{align}
|\Im\,\phi_{1}|=3^{k_{1}-1}\cdot 2^{k-k_{1}+1}\quad \text{and} \quad |\Im\,\phi_{2}|= 2^{k-k_1+1},\nonumber
\end{align}
which are equivalent to
\begin{equation*}
\log|\Im\,\phi_{1}|=(k_{1}-1)\log3+(k-k_{1}+1)\quad\text{and}\quad \log|\Im\,\phi_{2}|=k-k_1+1.
\end{equation*}
Together with \eqref{eq:code-constr-k_1-k_2-k_3}, we further obtain that
\begin{align}
\log|\Im\,\phi_{1}|&=\bigg(\bigg\lceil\frac{k}{\log6}\bigg\rceil-1\bigg)\log3+\bigg( k-\bigg\lceil\frac{k}{\log6}\bigg\rceil+1\bigg)\nonumber\\
&\leq \frac{k}{\log6}\cdot\log3+\bigg( k-\frac{k}{\log6}+1 \bigg)= 2k\log_{6}3+1\label{N_theta_1_upper_bound},
\end{align}
and
\begin{align}
\log|\Im\,\phi_{2}|= k-\bigg\lceil\frac{k}{\log 6}\bigg\rceil+1
\leq k-\frac{k}{\log6}+1
= k\log_{6}3+1\label{N_theta_3_upper_bound}.
\end{align}
It follows from \eqref{N_theta_1_upper_bound} and \eqref{N_theta_3_upper_bound} that
\begin{align}
n(\mbC)& =\max\bigg\{\bigg\lceil\frac{\log|\Im\,\phi_{1}|}{2}\bigg\rceil,~\Big\lceil\log|\Im\,\phi_{2}|
\Big\rceil\bigg\}\nonumber\\
&\leq\max\bigg\{\frac{\log|\Im\,\phi_{1}|}{2}+1,~\log|\Im\,\phi_{2}|+1
\bigg\}\nonumber\\
&\leq\Big(k\log_{6}3+1\Big)+1=k\log_{6}3+2.\nonumber
\end{align}
 Thus,  the  compression rate $R(\mbC)$ of the  code $\mbC$ satisfies
\begin{equation}\label{eq:constr-code-com-rate-lower-bound}
R(\mbC)=\frac{k}{n(\mbC)}\geq \frac{k}{k\log_{6}3+2}.
\end{equation}
The above inequality \eqref{eq:constr-code-com-rate-lower-bound} is valid for all $k\geq 1$, and thus
\begin{equation*}
\mathcal{C}(01;2,1;f)\geq \frac{k}{k\log_{6}3+2},\quad\forall~k\geq 1.
\end{equation*}
Consequently,
\begin{equation*}
\mathcal{C}(01;2,1;f)\geq\lim_{k\to\infty} \frac{k}{k\log_{6}3+2}=\log_{3}6.
\end{equation*}
The achievability of $\log_{3}6$ has been proved.

\smallskip
\noindent\textbf{Converse Part:}

In this part, we will prove that
\begin{equation*}
\mathcal{C}(01;2,1;f)\leq\log_{3}6.\label{N_cap_upp_bound_log_3_6}
\end{equation*}

We first present some notation and definitions. For any  positive integer $k$ and any admissible $k$-shot code $\mbC=\big\{\phi_1,\phi_2;~\psi\big\}$ for the model $(01;2,1;f)$, by \eqref{dis-fun-model-encod-fun-def}
 we note that the domain of $\phi_{2}$ is $\mathcal{A}^k$ ($=\{0,1\}^k$) and thus $1\leq |\Im\,\phi_{2}|\leq|\mathcal{A}^k|=2^{k}$.
For  any positive integer $m$ with $1\leq m\leq 2^k$, we define
\begin{equation}
\begin{aligned}
\chi_{m}\triangleq&\min\big\{|\Im\,\phi_{1}|:~\text{$\mbC=\big\{\phi_{1},\phi_2;~\psi\big\}
$  is an admissible $k$-shot code with $|\Im\,\phi_{2}|=m$} \big\}.\label{eq:def:chi_m}
\end{aligned}
\end{equation}
We remark that for any positive integer $1\leq m\leq 2^k$ and any  function $\phi_2$ from $\mathcal{A}^k$ to $\Im\,\phi_2$ with $|\Im\,\phi_2|=m$, there always exists an encoding function $\phi_{1}$ and a decoding function $\psi$ such that $\mbC=\big\{\phi_1,\phi_2;~\psi\big\}$  is an admissible $k$-shot code, and
thus $\chi_m$ is well-defined for each positive integer $m$ with $1\leq m\leq 2^k$. To be specific,  we let
\begin{equation*}
\phi_{1}(\vec{x}, \vec{y})=\vec{x}+\vec{y},\quad\forall~(\vec{x},\vec{y})\in\mathcal{A}^k\times\mathcal{A}^k
\end{equation*}
regardless of $\phi_2$. We readily see that this $k$-shot code $\mbC=\big\{\phi_1,\phi_2;~\psi\big\}$ can
compress $\vec{x}+\vec{y}$ with zero error by setting  the decoding function $\psi(\phi_1,\phi_2)=\phi_1$.
This shows that the code $\mbC$ is admissible.

\begin{definition}\label{def:conf-graph-G^k}
For two subsets $\mathcal{M} $ and $ \mathcal{L}$  of $\mathcal{A}^{k}$, let
\begin{equation*}
\mathcal{M}+\mathcal{L}\triangleq\big\{\vec{x}+\vec{y}:
~\text{$(\vec{x},\vec{y})\in\mathcal{M}\times\mathcal{L}$}\big\}.\footnote{If either $\mathcal{M}$
or $\mathcal{L}$ is an empty set, then $\mathcal{M}+\mathcal{L}$ is viewed as an empty set, too.}
\label{eq:def-set-M+L}
\end{equation*}
The conflict graph of $\mathcal{M}+\mathcal{L}$, denoted by  $G(\mathcal{M},\mathcal{L})$, is defined as follows: there are $|\mathcal{M}|\cdot|\mathcal{L}|$ vertices in the graph labeled by all the pairs $(\vec{x}, \vec{y})$ in $\mathcal{M}\times\mathcal{L}$, and there exists an undirected edge between two vertices $(\vec{x}, \vec{y})$ and $(\vec{x}\,',\vec{y}\,')$ if and only if $\vec{x}+\vec{y}\neq \vec{x}\,'+\vec{y}\,'$.
\end{definition}

A \emph{coloring} $c$ of  the conflict graph $G(\mathcal{M},\mathcal{L})$ is a mapping from $\mathcal{M}\times\mathcal{L}$ to $\Im\,c$ that satisfies
\begin{equation*}
c(\vec{x},\vec{y})\neq c(\vec{x}\,',\vec{y}\,'),\quad \forall~(\vec{x},\vec{y})~\text{and}~ (\vec{x}\,',\vec{y}\,')~\text{in}~\mathcal{M}\times\mathcal{L}~\text{s.t.}~\vec{x}+\vec{y}\neq \vec{x}\,'+\vec{y}\,'.
\end{equation*}
The size of the image set $|\Im\,c|$ is called the \emph{chromatic number} of the coloring $c$.
The minimum chromatic number of all the colorings of $G(\mathcal{M},\mathcal{L})$ is called the
 \emph{minimum chromatic number} of the conflict graph $G(\mathcal{M},\mathcal{L})$, denoted by
  $\chi\big(G(\mathcal{M},\mathcal{L})\big)$.
Then, we give the following lemma.
\begin{lemma}\label{lemma:chi(G(P))-equiv-form}
The minimum chromatic number $\chi\big(G(\mathcal{M},\mathcal{L})\big)$ is equal to the number of values of the binary arithmetic sum for all possible pairs $(\vec{x},\vec{y})\in\mathcal{M}\times \mathcal{L}$, i.e.,
\begin{align}\label{eq:chi(G(P))-low-bound}
\chi\big(G(\mathcal{M},\mathcal{L})\big)=|\mathcal{M}+\mathcal{L}|.\nonumber
\end{align}
\end{lemma}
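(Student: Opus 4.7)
The plan is to prove the equality by establishing the matching upper and lower bounds separately, since each direction follows from a standard graph-coloring argument once we read the definition of the conflict graph carefully.

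For the upper bound $\chi(G(\mathcal{M},\mathcal{L})) \leq |\mathcal{M}+\mathcal{L}|$, I would exhibit an explicit coloring using the arithmetic sum itself. Define
\begin{equation*}
c:~\mathcal{M}\times\mathcal{L}\to\mathcal{M}+\mathcal{L},\qquad c(\vec{x},\vec{y})=\vec{x}+\vec{y}.
\end{equation*}
For any edge in $G(\mathcal{M},\mathcal{L})$, namely any two vertices $(\vec{x},\vec{y})$ and $(\vec{x}\,',\vec{y}\,')$ with $\vec{x}+\vec{y}\neq\vec{x}\,'+\vec{y}\,'$, we immediately have $c(\vec{x},\vec{y})\neq c(\vec{x}\,',\vec{y}\,')$, so $c$ is a valid coloring. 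Since $|\Im\,c|=|\mathcal{M}+\mathcal{L}|$, the upper bound follows.

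For the lower bound $\chi(G(\mathcal{M},\mathcal{L}))\geq|\mathcal{M}+\mathcal{L}|$, I would exhibit a clique of that size. For each $\vec{s}\in\mathcal{M}+\mathcal{L}$, pick an arbitrary representative $(\vec{x}_{\vec{s}},\vec{y}_{\vec{s}})\in\mathcal{M}\times\mathcal{L}$ with $\vec{x}_{\vec{s}}+\vec{y}_{\vec{s}}=\vec{s}$, and let $\mathcal{K}=\{(\vec{x}_{\vec{s}},\vec{y}_{\vec{s}}):\vec{s}\in\mathcal{M}+\mathcal{L}\}$. Any two distinct vertices in $\mathcal{K}$ correspond to distinct sums and are therefore joined by an edge in $G(\mathcal{M},\mathcal{L})$, so $\mathcal{K}$ is a clique of cardinality $|\mathcal{M}+\mathcal{L}|$. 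Since any proper coloring must assign distinct colors to the vertices of a clique, $\chi(G(\mathcal{M},\mathcal{L}))\geq|\mathcal{K}|=|\mathcal{M}+\mathcal{L}|$.

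Combining the two bounds yields the stated equality. There is no real obstacle here: the statement is essentially the observation that the conflict graph is a complete multipartite graph whose parts are the preimages of the sum map $(\vec{x},\vec{y})\mapsto\vec{x}+\vec{y}$, and the chromatic number of a complete multipartite graph equals its number of parts. The only thing to keep in mind is the degenerate case when $\mathcal{M}$ or $\mathcal{L}$ is empty, in which case both sides are $0$ and the identity holds trivially.
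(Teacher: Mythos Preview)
Your proof is correct and follows essentially the same route as the paper: the paper also establishes the lower bound by choosing one representative pair for each distinct sum and observing that these vertices induce a complete subgraph $K_d$ of $G(\mathcal{M},\mathcal{L})$, and establishes the upper bound via the explicit coloring $c(\vec{x},\vec{y})=\vec{x}+\vec{y}$. Your additional remark that the conflict graph is complete multipartite and your handling of the empty case are nice touches not spelled out in the paper, but the core argument is the same.
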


\begin{IEEEproof}
Let $d=|\mathcal{M}+\mathcal{L}|$ and clearly, $d\leq |\mathcal{M}|\cdot|\mathcal{L}|$. We first prove that $\chi\big(G(\mathcal{M},\mathcal{L})\big)\geq d$.
Let $(\vec{x}_i,\vec{y}_i),~1\leq i \leq d$ be $d$ pairs in $\mathcal{M}\times\mathcal{L}$ of which the values of the binary arithmetic sum take all the $d$ different values in $\mathcal{M}+\mathcal{L}$. By Definition~\ref{def:conf-graph-G^k}, we consider the conflict graph of $\big\{(\vec{x}_i,\vec{y}_i):~1\leq i\leq d\big\}$, where there are total $d$ vertices labeled by the $d$ pairs $(\vec{x}_i,\vec{y}_i),~1\leq i\leq d$, and there exists an undirected edge between two vertices $(\vec{x}_i, \vec{y}_i)$ and $(\vec{x}_j,\vec{y}_j)$, $1\leq i,j\leq d$ if and only if $\vec{x}_i+\vec{y}_i\neq \vec{x}_j+\vec{y}_j$. Then we readily see that this conflict graph is a complete graph~$K_d$ of $d$ vertices. It thus follows from  the fact that $K_d$ is a subgraph of $G(\mathcal{M},\mathcal{L})$ and $\chi(K_d)=d$ that
\begin{equation*}
\chi\big(G(\mathcal{M},\mathcal{L})\big)\geq\chi(K_d)=d.
\end{equation*}

Next, we prove that $\chi\big(G(\mathcal{M},\mathcal{L})\big)\leq d$. Consider the mapping $c$ from $\mathcal{M}\times\mathcal{L}$ to $\Im\,c$ ($=\mathcal{M}+\mathcal{L}$)
 such that
\begin{equation*}
c(\vec{x}, \vec{y})=\vec{x}+\vec{y},\quad\forall~(\vec{x}, \vec{y})\in\mathcal{M}\times\mathcal{L}.
\end{equation*}
We readily see that $c$ is a coloring of the conflict graph $G(\mathcal{M},\mathcal{L})$ with
the chromatic number $d$. This immediately implies that
$\chi\big(G(\mathcal{M},\mathcal{L})\big)\leq d$. We thus have proved that $\chi\big(G(\mathcal{M},\mathcal{L}) \big)= d$.
\end{IEEEproof}

In the rest of proof, we use $G$ to denote the conflict graph of $\mathcal{A}^k+\mathcal{A}^k$.
We can see that for an admissible $k$-shot  code $\mbC=\big\{\phi_1,\phi_2;~\psi\big\}$, $(\phi_{1},\phi_{2})$  is a coloring of the conflict graph $G$, because for any two adjacent vertices $(\vec{x}, \vec{y})$ and $(\vec{x}\,',\vec{y}\,')$  (i.e., $\vec{x}+\vec{y}\neq \vec{x}\,'+\vec{y}\,'$), we have
\begin{equation*}
\big(\phi_{1}(\vec{x},\vec{y}), \phi_{2}(\vec{y})\big)\neq \big(\phi_{1}(\vec{x}\,',\vec{y}\,'),
\phi_{2}(\vec{y}\,')\big).\label{eq:adm-code-is-color-G^k}
\end{equation*}
When $\mathcal{M}=\mathcal{A}^k$, we use $G(\mathcal{L})$ to denote the conflict graph of $\mathcal{A}^k+\mathcal{L}$ for notational simplicity, which is a subgraph of $G$. By Lemma~\ref{lemma:chi(G(P))-equiv-form}, we immediately have $\chi\big(G(\mathcal{L})\big)=\big|\mathcal{A}^k+\mathcal{L}\big|$.

\begin{lemma}\label{lemma:N_relat_code_color_set}
For a nonnegative integer $\ell$ with  $0\leq \ell  \leq 2^{k}$, let
\begin{align}\label{def:Q_k(l)}
Q_{k}(\ell)\triangleq\min_{\substack{\;\;\mathcal{L}\subseteq\mathcal{A}^{k}\;\text{\rm s.t.}\;|\mathcal{L}|
=\ell}}\big|\mathcal{A}^{k}+ \mathcal{L}\big|=\min_{\substack{\;\;\mathcal{L}\subseteq\mathcal{A}^{k}\;\text{\rm s.t.}\;|\mathcal{L}|
=\ell}}\chi\big(G(\mathcal{L})\big).
\end{align}
Then, for an  admissible $k$-shot  code $\mbC=\big\{\phi_1,\phi_2;~\psi\big\}$ with $|\Im\, \phi_{2}|=m$ \rm{(}$1\leq m\leq 2^{k}$\rm{)},
\begin{equation*}
|\Im\,\phi_{1}|\geq\chi_{m}\geq Q_{k}\Big(\Big\lceil\;\frac{2^{k}}{m}\;\Big\rceil\Big).\label{eq:N_lemma:relat_code_color_set}
	\end{equation*}
\end{lemma}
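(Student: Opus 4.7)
The proof plan breaks into two inequalities. The left inequality $|\Im\,\phi_1|\ge \chi_m$ is immediate: by definition \eqref{eq:def:chi_m}, $\chi_m$ is the minimum of $|\Im\,\phi_1|$ taken over all admissible $k$-shot codes with $|\Im\,\phi_2|=m$, so for any particular such code $\mbC$ the value $|\Im\,\phi_1|$ cannot be smaller. Nothing more is needed here.

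For the nontrivial inequality $\chi_m \ge Q_k\!\bigl(\lceil 2^k/m\rceil\bigr)$, the plan is as follows. Take any admissible $k$-shot code $\mbC=\{\phi_1,\phi_2;\psi\}$ with $|\Im\,\phi_2|=m$. Since $\phi_2:\mathcal{A}^k\to\Im\,\phi_2$ is a function from a set of size $2^k$ onto a set of size $m$, the pigeonhole principle yields some $u\in\Im\,\phi_2$ whose preimage $\mathcal{L}\triangleq\phi_2^{-1}(u)\subseteq\mathcal{A}^k$ satisfies $|\mathcal{L}|\ge \lceil 2^k/m\rceil$. The key observation is now to restrict attention to pairs $(\vec{x},\vec{y})\in\mathcal{A}^k\times\mathcal{L}$: on this product set $\phi_2(\vec{y})\equiv u$ is constant, so the admissibility condition together with the decoder $\psi$ forces
\begin{equation*}
\phi_1(\vec{x},\vec{y})\neq \phi_1(\vec{x}\,',\vec{y}\,'),\qquad \forall\,(\vec{x},\vec{y}),(\vec{x}\,',\vec{y}\,')\in\mathcal{A}^k\times\mathcal{L}\ \text{with}\ \vec{x}+\vec{y}\neq \vec{x}\,'+\vec{y}\,'.
\end{equation*}

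In other words, the restriction of $\phi_1$ to $\mathcal{A}^k\times\mathcal{L}$ is a valid coloring of the conflict graph $G(\mathcal{L})=G(\mathcal{A}^k,\mathcal{L})$ in the sense of Definition~\ref{def:conf-graph-G^k}. Hence $|\Im\,\phi_1|$ is at least the chromatic number of $G(\mathcal{L})$, which by Lemma~\ref{lemma:chi(G(P))-equiv-form} equals $|\mathcal{A}^k+\mathcal{L}|$. Finally, I would establish the simple monotonicity $Q_k(\ell)\le Q_k(\ell')$ for $\ell\le \ell'$: given any $\mathcal{L}'$ of size $\ell'$ achieving $Q_k(\ell')$, any subset $\mathcal{L}\subseteq\mathcal{L}'$ of size $\ell$ satisfies $\mathcal{A}^k+\mathcal{L}\subseteq \mathcal{A}^k+\mathcal{L}'$, so $Q_k(\ell)\le |\mathcal{A}^k+\mathcal{L}|\le |\mathcal{A}^k+\mathcal{L}'|=Q_k(\ell')$. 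Combining this monotonicity with $|\mathcal{L}|\ge \lceil 2^k/m\rceil$, we get $|\mathcal{A}^k+\mathcal{L}|\ge Q_k(|\mathcal{L}|)\ge Q_k(\lceil 2^k/m\rceil)$, which completes the chain.

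The only mildly delicate point is the pigeonhole step producing a preimage of size at least $\lceil 2^k/m\rceil$ rather than just $2^k/m$; once that is noted, the rest is straightforward assembly. No genuine obstacle arises, because the heavy lifting (identifying the minimum chromatic number with $|\mathcal{A}^k+\mathcal{L}|$) has already been carried out in Lemma~\ref{lemma:chi(G(P))-equiv-form}, and because the admissibility constraint applied on the fiber $\{\vec{y}:\phi_2(\vec{y})=u\}$ transparently converts the coding problem into a graph coloring problem on $G(\mathcal{L})$.
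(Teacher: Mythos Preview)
Your proof is correct and follows essentially the same strategy as the paper: pigeonhole on the fibers of $\phi_2$ to locate a large block $\mathcal{L}$, then observe that $\phi_1$ restricted to $\mathcal{A}^k\times\mathcal{L}$ is a coloring of $G(\mathcal{L})$, and finally pass to a subset of size exactly $\lceil 2^k/m\rceil$ (you phrase this as monotonicity of $Q_k$, the paper simply takes $\widehat P_1\subseteq P_1$ with $|\widehat P_1|=\lceil 2^k/m\rceil$ and notes $\chi(G(P_1))\ge\chi(G(\widehat P_1))$; these are the same move).

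The one structural difference is that the paper routes the argument through an auxiliary Lemma~\ref{lemma:chi_m_low_bound}, which establishes the exact identity
\[
\chi_m=\min_{\substack{\text{partitions}\\\{P_1,\ldots,P_m\}\text{ of }\mathcal{A}^k}}\ \max_{1\le i\le m}\chi\big(G(P_i)\big),
\]
and then applies pigeonhole to an arbitrary partition. Your argument bypasses this characterization entirely by working directly with the preimage partition of a specific code and taking the infimum at the end. Your route is slightly more economical for the lemma at hand; the paper's route has the side benefit of recording an exact combinatorial description of $\chi_m$ (an equality, not merely a lower bound), which may be of independent interest even though only the lower-bound direction is used here.
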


To prove the above lemma, we need the following lemma, of which the proof is deferred to Appendix~\ref{append-lemma:chi_m_low_bound}.

\begin{lemma}\label{lemma:chi_m_low_bound}
For any positive integer $m$ such that $1\leq m\leq 2^k$,
\begin{equation*}\label{eq:chi_m=min_max_color}
	\chi_{m}{\color{blue}}=\min_{\substack{\text{ \rm{all partitions} }\\\text{$\{P_1,P_2,\cdots,P_m\}$ \rm{of} $\mathcal{A}^{k}$}}}\max_{1\leq i\leq m}\chi\big(G(P_i)\big),
\end{equation*}
where every subset {\rm{(}}called a block{\rm{)}} of a  partition is nonempty.
\end{lemma}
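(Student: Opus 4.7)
The plan is to prove the identity by establishing a natural correspondence between admissible codes with $|\Im\,\phi_2|=m$ and partitions of $\mathcal{A}^k$ into $m$ nonempty blocks, with both inequalities following from this correspondence. The driving observation is that the encoder $\phi_2$ is canonically described by its fibers $P_i=\phi_2^{-1}(i)$ (under an arbitrary labeling of $\Im\,\phi_2$), which form a partition of $\mathcal{A}^k$ into $m$ nonempty blocks. Relative to this partition, the role of $\phi_1$ decouples block by block, since any ambiguity between pairs $(\vec{x},\vec{y})$ and $(\vec{x}\,',\vec{y}\,')$ with $\vec{y}$ and $\vec{y}\,'$ in distinct blocks is already resolved by $\phi_2$ alone.

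For the direction $\chi_m\geq \min_{\{P_i\}}\max_i \chi\big(G(P_i)\big)$, I will start from an arbitrary admissible $k$-shot code $\mbC=\{\phi_1,\phi_2;\psi\}$ with $|\Im\,\phi_2|=m$ and let $\{P_1,\ldots,P_m\}$ be the partition induced by the fibers of $\phi_2$. Fix any index $i$. For any two pairs $(\vec{x},\vec{y})$, $(\vec{x}\,',\vec{y}\,')\in \mathcal{A}^k\times P_i$ with $\vec{x}+\vec{y}\neq \vec{x}\,'+\vec{y}\,'$, admissibility together with $\phi_2(\vec{y})=\phi_2(\vec{y}\,')$ forces $\phi_1(\vec{x},\vec{y})\neq \phi_1(\vec{x}\,',\vec{y}\,')$. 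Hence $\phi_1|_{\mathcal{A}^k\times P_i}$ is a valid coloring of the conflict graph $G(P_i)$, so $|\Im\,\phi_1|\geq \chi\big(G(P_i)\big)$. Since this holds for every $i$, $|\Im\,\phi_1|\geq \max_i \chi\big(G(P_i)\big)\geq \min_{\{P_i\}}\max_i \chi\big(G(P_i)\big)$, and the definition of $\chi_m$ gives the desired inequality.

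For the reverse inequality, I will start from an arbitrary partition $\{P_1,\ldots,P_m\}$ of $\mathcal{A}^k$ into nonempty blocks, set $N=\max_i \chi\big(G(P_i)\big)$, and fix an optimal coloring $c_i$ of each $G(P_i)$ with image contained in the common palette $\{1,2,\ldots,N\}$. Define $\phi_2(\vec{y})=i$ whenever $\vec{y}\in P_i$ and $\phi_1(\vec{x},\vec{y})=c_i(\vec{x},\vec{y})$ whenever $\vec{y}\in P_i$, so that $|\Im\,\phi_2|=m$ and $|\Im\,\phi_1|\leq N$. To verify admissibility: if $(\vec{x},\vec{y})$ and $(\vec{x}\,',\vec{y}\,')$ lie in different blocks they are separated by $\phi_2$, while if they lie in the same block $P_i$ and have $\vec{x}+\vec{y}\neq \vec{x}\,'+\vec{y}\,'$ they are separated by the coloring $c_i$, so the map $(\phi_1,\phi_2)\mapsto \vec{x}+\vec{y}$ is well-defined and yields a decoder $\psi$. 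The constructed code is therefore admissible and witnesses $\chi_m\leq N$, and taking the minimum over partitions closes the argument.

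There is no substantial technical obstacle in this proof; the main subtlety is the bookkeeping across the two separation layers (by $\phi_2$ across blocks and by $\phi_1$ within each block) and the observation that $\phi_1$ may freely reuse the same palette of colors in different blocks, which is what yields $\max_i$ rather than $\sum_i$ and is consistent with the fact that the decoder observes the pair $(\phi_1,\phi_2)$.
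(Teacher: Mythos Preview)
Your proposal is correct and follows essentially the same approach as the paper: in both directions you use the correspondence between the fibers of $\phi_2$ and a partition of $\mathcal{A}^k$ into $m$ nonempty blocks, observing that $\phi_1$ restricted to each block is a proper coloring of $G(P_i)$ (giving $\chi_m\geq\min\max$), and conversely that stitching together optimal colorings on each block with a common palette yields an admissible code (giving $\chi_m\leq\min\max$). The paper's proof in Appendix~\ref{append-lemma:chi_m_low_bound} carries out exactly this argument.
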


\begin{proof}[Proof of Lemma~\ref{lemma:N_relat_code_color_set}]
For an admissible $k$-shot code $\mbC=\big\{\phi_1,\phi_2;~\psi\big\}$ with $|\Im\,\phi_2|=m$, by the definition of $\chi_{m}$ (cf.~\eqref{eq:def:chi_m}), we have
 \begin{equation*}
|\Im\,\phi_{1}|\geq\chi_{m}.
 \end{equation*}
Thus, it  suffices to prove
\begin{equation*}
\chi_{m}\geq Q_{k}\Big(\Big\lceil\;\frac{2^{k}}{m}\;\Big\rceil\Big).
\end{equation*}
 Consider an arbitrary partition $\mathcal{P}=\{P_1,P_2,\cdots,P_m\}$ of  $\mathcal{A}^{k}$. We see that there exists a block in $\mathcal{P}$, say $P_1$, such that
\begin{equation}
|P_1|\geq \frac{2^k}{m},\label{pflemma:N_relat_code_color_set-1}
\end{equation}
 because otherwise
 \begin{equation*}
 \big|\mathcal{A}^k\big|=\bigg|\bigcup_{i=1}^{m}P_i\bigg|=\sum_{i=1}^{m}|P_i|<m\cdot\frac{2^k}{m}=2^k,
 \end{equation*}
a contradiction.
The inequality \eqref{pflemma:N_relat_code_color_set-1} immediately implies that
$|P_1|\geq \big\lceil2^{k}/m\big\rceil$, and  thus we obtain that
\begin{equation}
\max_{1\leq i\leq m}\chi\big(G(P_i)\big)\geq \chi\big(G(P_1)\big)\geq \chi\big(G(\widehat{P}_1)\big) \geq\min_{\text{$P\subseteq\mathcal{A}^{k}$ s.t. $|P|=\big\lceil\frac{2^{k}}{m}\big\rceil$ }}\chi\big(G(P)\big),\label{pflemma:N_relat_code_color_set-2}
\end{equation}
where $\widehat{P}_1$ is a subset of $P_1$ with $|\widehat{P}_1|= \big\lceil2^{k}/m\big\rceil$, and
the inequality $\chi\big(G(P_1)\big)\geq \chi\big(G(\widehat{P}_1)\big)$  follows because a coloring of $G(P_1)$ restricted to its subgraph $G(\widehat{P}_1)$ is a coloring of $G(\widehat{P}_1)$.
Together with Lemma~\ref{lemma:chi_m_low_bound}, we obtain that
\begin{align}
\chi_{m}&=\min_{\substack{\text{ all partitions }\\\text{$\{P_1,P_2,\cdots,P_m\}$ of $\mathcal{A}^{k}$}}}\max_{1\leq i\leq m}\chi\big(G(P_i)\big)\nonumber\\
&\geq\min_{\text{$P\subseteq\mathcal{A}^{k}$ s.t. $|P|=\big\lceil\frac{2^{k}}{m}\big\rceil$ }}\chi\big(G(P)\big)\label{pflemma:N_relat_code_color_set-3.1}\\
&=\min_{\text{$P\subseteq\mathcal{A}^{k}$ s.t. $|P|=\big\lceil\frac{2^{k}}{m}\big\rceil$ }}|\mathcal{A}^k+P|\label{pflemma:N_relat_code_color_set-3.2}\\
&=Q_{k}\Big(\Big\lceil\;\frac{2^{k}}{m}\;\Big\rceil\Big),\label{pflemma:N_relat_code_color_set-3.3}
\end{align}
where the inequality \eqref{pflemma:N_relat_code_color_set-3.1} follows from the inequality \eqref{pflemma:N_relat_code_color_set-2} which  holds for any partition $\mathcal{P}$ of $\mathcal{A}^k$ that includes $m$ blocks;   the  equality \eqref{pflemma:N_relat_code_color_set-3.2} follows from $\chi\big(G(P)\big)=|\mathcal{A}^k+P|$ by Lemma~\ref{lemma:chi(G(P))-equiv-form}; and  the  equality \eqref{pflemma:N_relat_code_color_set-3.3} follows from the definition of $Q_k(\ell)$ in \eqref{def:Q_k(l)}.
The lemma has been proved.
\end{proof}

Now, we define the following function of nonnegative integers to lower bound $\chi\big(G(\mathcal{L})\big),~\forall~\mathcal{L}\subseteq\mathcal{A}^k$:
\begin{equation}
h(\ell)=\ell^{\log 3-1},\quad \ell=0,1,2,\cdots,\label{def:h(l)}
\end{equation}
where we take $h(0)=0$ because $h(x)$ tends to $0$ as $x\to 0^{+}$.

\begin{lemma}\label{lemma:N_X+L_low_bound}
Let $\mathcal{L}$ be a subset of $\mathcal{A}^k$. Then,
\begin{equation*}
 \chi\big(G(\mathcal{L})\big)=|\mathcal{A}^{k}+\mathcal{L}|\geq 2^{k}\cdot h(\ell),\label{lemma:eq-N_X+L_low_bound}
  \end{equation*}
  where $\ell=|\mathcal{L}|$ with $0\leq \ell\leq 2^k$.
\end{lemma}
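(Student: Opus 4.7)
The plan is induction on $k$. The base case $k=0$ is immediate (here $\ell\in\{0,1\}$ and both reduce to $0\geq 0$ or $1\geq 1$), and the cases $\ell\in\{0,1\}$ in general are also immediate since $h(0)=0$ and $|\mathcal{A}^k+\{\vec{y}\}|=2^k=2^k\cdot h(1)$. For the inductive step, I split $\mathcal{L}\subseteq\mathcal{A}^k$ by its first coordinate: set
\[\mathcal{L}_0'\triangleq\{\vec{y}\,'\in\mathcal{A}^{k-1}:(0,\vec{y}\,')\in\mathcal{L}\},\qquad
\mathcal{L}_1'\triangleq\{\vec{y}\,'\in\mathcal{A}^{k-1}:(1,\vec{y}\,')\in\mathcal{L}\},\]
and write $\ell_i=|\mathcal{L}_i'|$, so that $\ell_0+\ell_1=\ell$. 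Partitioning $\mathcal{A}^k+\mathcal{L}$ by the value taken in the first coordinate (value $0$ forces $(x_1,y_1)=(0,0)$, value $2$ forces $(1,1)$, and value $1$ is realized either by $(0,1)$ with $\vec{y}\,'\in\mathcal{L}_1'$ or $(1,0)$ with $\vec{y}\,'\in\mathcal{L}_0'$), I obtain the exact identity
\[|\mathcal{A}^k+\mathcal{L}|=|\mathcal{A}^{k-1}+\mathcal{L}_0'|+|\mathcal{A}^{k-1}+\mathcal{L}_1'|+|\mathcal{A}^{k-1}+(\mathcal{L}_0'\cup\mathcal{L}_1')|.\]

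Applying the induction hypothesis to each of the three terms and using that $h$ is nondecreasing together with $|\mathcal{L}_0'\cup\mathcal{L}_1'|\geq\max(\ell_0,\ell_1)$, the target bound $|\mathcal{A}^k+\mathcal{L}|\geq 2^k\cdot h(\ell)$ reduces, after dividing by $2^{k-1}$, to the one-variable analytic inequality
\[h(\ell_0)+h(\ell_1)+h\bigl(\max(\ell_0,\ell_1)\bigr)\geq 2\,h(\ell_0+\ell_1).\]
Writing $a=\max(\ell_0,\ell_1)$, $b=\min(\ell_0,\ell_1)$, normalizing by $a^{\alpha}$ with $\alpha\triangleq\log 3-1$, and setting $t=b/a\in[0,1]$, this is equivalent to $g(t)\triangleq 2+t^{\alpha}-2(1+t)^{\alpha}\geq 0$ for all $t\in[0,1]$.

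I expect this final analytic inequality to be the main obstacle, and its validity depends crucially on the \emph{specific} exponent $\alpha=\log 3-1$: this choice makes both endpoints vanish, $g(0)=0$ and $g(1)=3-2\cdot 2^{\alpha}=3-3=0$. A direct calculation gives $g'(t)=\alpha\bigl[t^{\alpha-1}-2(1+t)^{\alpha-1}\bigr]$, with $g'(0^+)=+\infty$ and $g'(1)=\alpha(1-2^{\alpha})=-\alpha/2<0$. Since $\alpha<1$, the equation $g'(t)=0$ rewrites as $(1+t)/t=2^{1/(1-\alpha)}$, which has a unique solution $t^{\ast}\in(0,1)$ that is a strict local maximum; hence $g(t)\geq\min\{g(0),g(1)\}=0$ throughout $[0,1]$. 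Tracing back, the induction step goes through and the lemma follows. Sharpness of the bound is witnessed by $\mathcal{L}=\{0,1\}^{j}\times\{0\}^{k-j}$, for which $|\mathcal{A}^k+\mathcal{L}|=3^j\cdot 2^{k-j}=2^k\cdot h(2^j)$, simultaneously confirming that the exponent $\log 3-1$ in $h$ is the right one.
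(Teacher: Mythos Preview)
Your proof is correct and follows essentially the same approach as the paper: induction on $k$ with a split of $\mathcal{L}$ by the first coordinate, reducing to the analytic inequality $2h(\ell_a)+h(\ell_b)\geq 2h(\ell_a+\ell_b)$ for $\ell_a\geq\ell_b$, which is exactly the paper's Lemma~\ref{lemma:h(l)-property}. The only cosmetic differences are that you obtain the cleaner \emph{exact} three-term identity $|\mathcal{A}^k+\mathcal{L}|=|\mathcal{A}^{k-1}+\mathcal{L}_0'|+|\mathcal{A}^{k-1}+\mathcal{L}_1'|+|\mathcal{A}^{k-1}+(\mathcal{L}_0'\cup\mathcal{L}_1')|$ (the paper instead writes a two-term set-difference decomposition and bounds it), and you prove the analytic inequality via a unique-critical-point argument for $g(t)=2+t^{\alpha}-2(1+t)^{\alpha}$ on $[0,1]$, whereas the paper reparameterizes to $q(x)=x^{\alpha}+\tfrac12(1-x)^{\alpha}$ on $[1/2,1]$ and uses concavity---the two are equivalent under $x=1/(1+t)$.
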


 Before proving the above lemma, we need the following lemma, whose proof is deferred to Appendix~\ref{appendix-pf-lemma:h(l)-property}.

\begin{lemma}\label{lemma:h(l)-property}
Consider a nonnegative integer $\ell$.
For any two nonnegative integers $\ell_a$ and $\ell_b$ satisfying $\ell=\ell_a+\ell_b$ and $\ell_a\geq\ell_b$,
\begin{equation}
2\cdot h(\ell_a)+h(\ell_b)\geq 2\cdot h(\ell).\label{lemma:eq-h(l)-property}
\end{equation}
\end{lemma}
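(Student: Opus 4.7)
The plan is to reduce \eqref{lemma:eq-h(l)-property} to a one-variable calculus inequality on $[0,1]$, exploiting the fact that the exponent $\alpha \triangleq \log 3 - 1 \in (0,1)$ is tuned precisely so that $2 \cdot 2^\alpha = 3$. The degenerate case $\ell_b = 0$ gives equality via $h(0) = 0$, so I may assume $\ell_a \geq \ell_b \geq 1$. Dividing \eqref{lemma:eq-h(l)-property} by $\ell_a^\alpha$ and setting $t \triangleq \ell_b/\ell_a \in (0, 1]$ converts the desired inequality into $g(t) \geq 0$, where
\begin{equation*}
g(t) \triangleq 2 + t^\alpha - 2(1+t)^\alpha.
\end{equation*}
Since the admissible values of $t$ lie in $(0,1]$, it suffices to prove $g \geq 0$ on the whole interval $[0,1]$.

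The driving observation is that $g$ vanishes at \emph{both} endpoints: $g(0) = 0$ trivially, and $g(1) = 3 - 2 \cdot 2^\alpha = 0$ by the identity above---this is exactly why the exponent $\alpha = \log 3 - 1$ is the right choice. I will then show that $g$ is first strictly increasing and then strictly decreasing on $[0,1]$, which together with $g(0) = g(1) = 0$ forces $g \geq 0$ throughout. A direct computation gives
\begin{equation*}
g'(t) = \alpha\bigl(t^{\alpha - 1} - 2(1+t)^{\alpha - 1}\bigr),
\end{equation*}
so the sign of $g'(t)$ agrees with that of $1 - 2\bigl(t/(1+t)\bigr)^{1-\alpha}$. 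Because $1-\alpha > 0$ and $t \mapsto t/(1+t)$ is strictly increasing on $(0,1]$, the map $t \mapsto 2\bigl(t/(1+t)\bigr)^{1-\alpha}$ is strictly increasing, so $g'$ has at most one zero in $(0,1)$. Combined with $g'(0^+) = +\infty$ and $g'(1) = \alpha(1 - 2^\alpha) = -\alpha/2 < 0$, this produces exactly the required single-peaked shape and completes the argument.

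No serious obstacle is anticipated beyond keeping the endpoint identity $2\cdot 2^\alpha = 3$ in sight; the exponent $\alpha = \log 3 - 1$ is engineered so that both $g(0)$ and $g(1)$ vanish, after which the calculus reduces to the strict monotonicity of the elementary function $\bigl(t/(1+t)\bigr)^{1-\alpha}$. The reverse inequality $2h(\ell_a)+h(\ell_b) \le 2h(\ell)$ would fail for a smaller exponent and hold trivially for a larger one, which also explains why this particular $\alpha$ is the sharp constant for this kind of decomposition estimate.
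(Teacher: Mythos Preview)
Your proof is correct and follows essentially the same strategy as the paper's: reduce \eqref{lemma:eq-h(l)-property} to a one-variable calculus inequality, verify that both endpoints give equality (using $2\cdot 2^{\alpha}=3$), and then use a shape argument to conclude. The only cosmetic differences are that the paper normalizes by $h(\ell)$ and parametrizes by $x=\ell_a/\ell\in[1/2,1]$, obtaining $q(x)=x^{\alpha}+\tfrac12(1-x)^{\alpha}$ and invoking concavity ($q''<0$) to get $q\ge 1$, whereas you normalize by $\ell_a^{\alpha}$, set $t=\ell_b/\ell_a\in(0,1]$, and establish single-peakedness of $g$ via the monotonicity of $t\mapsto\bigl(t/(1+t)\bigr)^{1-\alpha}$; the two parametrizations are related by $x=1/(1+t)$ and the two calculus arguments are equivalent.
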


\begin{remark}
For a positive real number $\tau$, we let $h_{\tau}(\ell)=\ell^{\tau}$ for $\ell=0,1,2,\cdots$ {\rm{(}}where we similarly take $h_{\tau}(0)=0${\rm{)}}. By~\eqref{def:h(l)}, we see that $h(\ell)=h_{\log3-1}(\ell)$ for $\tau=\log3-1$.
In fact, $\log3-1$ is the maximum~$\tau$ such that the inequality
\begin{align*}
2\cdot h_{\tau}(\ell_{a})+h_{\tau}(\ell_{b})\geq 2\cdot h_{\tau}(\ell)
\end{align*}
(the same as the inequality \eqref{lemma:eq-h(l)-property}) is satisfied for any nonnegative integer $\ell$, where $\ell_a$ and $\ell_b$ are two arbitrary nonnegative integers  such that $\ell=\ell_a+\ell_b$ and $\ell_a\geq\ell_b$.
\end{remark}

\begin{proof}[Proof of~Lemma~\ref{lemma:N_X+L_low_bound}]
We will prove this lemma by induction in $k$. We first consider the case $k=1$.
For a subset $\mathcal{L}\subseteq\mathcal{A}$, we readily calculate that
\begin{equation}
|\mathcal{A}+\mathcal{L}|=
\begin{cases}
 0, & \text{if $|\mathcal{L}|=0$, i.e., $\mathcal{L}=\emptyset$;}\\
 2, & \text{if $|\mathcal{L}|=1$, i.e., $\mathcal{L}=\{0\}$ or $\{1\}$;}\\
 3, & \text{if $|\mathcal{L}|=2$, i.e., $\mathcal{L}=\mathcal{A}$.}\\
\end{cases}\label{pflemma:N_X+L_low_bound-1}
\end{equation}
After a simple calculation for the function $h(\ell)$, we have
\begin{equation}
h(0)=0,\quad h(1)=1,\quad \text{and}\quad h(2)=\frac{3}{2}.\label{pflemma:N_X+L_low_bound-1.1}
\end{equation}
It thus follows from \eqref{pflemma:N_X+L_low_bound-1} and \eqref{pflemma:N_X+L_low_bound-1.1} that
\begin{equation*}
|\mathcal{A}+\mathcal{L}|=2\cdot h\big(|\mathcal{L}|\big),\quad\forall~\mathcal{L}\subseteq\mathcal{A}.
\label{pflemma:N_X+L_low_bound-2}
\end{equation*}
This shows that the lemma is true for the case $k=1$.
We now assume the lemma is true for some $k\geq 1$. Next, we consider the case $k+1$. For an arbitrary subset $\mathcal{L}$ of $\mathcal{A}^{k+1}$, we let
\begin{equation*}
		\mathcal{L}_{0}\triangleq\big\{ \vec{y}:~\text{$\vec{y}\in \mathcal{L}$ with $\vec{y}[1]=0$}\big\}\quad\text{and}\quad\mathcal{L}_{1}\triangleq\big\{ \vec{y}:~\text{$\vec{y}\in \mathcal{L}$ with $\vec{y}[1]=1$}\big\},\label{eq:N_def:L_y}
	\end{equation*}
where $\vec{y}[1]$ stands for the first component of the vector $\vec{y}$.
Clearly, $\mathcal{L}_{0}\cup\mathcal{L}_{1}=\mathcal{L}$ and $\mathcal{L}_{0}\cap\mathcal{L}_{1}=\emptyset$. We remark that  $\mathcal{L}_{0}$ (resp. $\mathcal{L}_{1}$)  may be an empty set. We further let $\ell_{0}=|\mathcal{L}_{0}|$, $\ell_{1}=|\mathcal{L}_{1}|$, and let $(a,b)$ be a permutation of $(0,1)$ such that $\ell_{a}\geq \ell_{b}$. Then, we have
\begin{equation}
\begin{aligned}
\big|\mathcal{A}^{k+1}+\mathcal{L}\big|&= \big|\mathcal{A}^{k+1}+\mathcal{L}_{a}\big|+\big|\big(\mathcal{A}^{k+1}+\mathcal{L}_{b}\big)
\setminus\big(\mathcal{A}^{k+1}+\mathcal{L}_{a}\big)\big|.
\end{aligned}\label{pflemma:N_X+L_low_bound-3}
\end{equation}

  We first consider
\begin{align}
\big|\mathcal{A}^{k+1}+\mathcal{L}_{a}\big|&=\sum_{z\in\mathcal{A}+\{a\}}\Big|\Big\{\vec{x}+\vec{y}:~
(\vec{x},\vec{y})\in\mathcal{A}^{k+1}\times\mathcal{L}_{a}\,~
\text{s.t.}~\vec{x}[1]+\vec{y}[1]=\vec{x}[1]+a=z\Big\}\Big|\nonumber\\
&=\sum_{z\in\mathcal{A}+\{a\}}\Big|\Big\{\vec{x}+\vec{y}:~
(\vec{x},\vec{y})\in\mathcal{A}^{k+1}\times\mathcal{L}_{a}\,~
\text{with}~\vec{x}[1]=z-a\Big\}\Big|.
\label{pflemma:N_X+L_low_bound-4}
\end{align}
Let $\mathcal{L}_{a}^{(k)}$ be the set of the $k$-vectors obtained from the $(k+1)$-vectors in $\mathcal{L}_{a}$ by deleting their first components, all of which are equal to $a$. We readily see that
\begin{equation}
\big|\mathcal{L}_{a}^{(k)}\big|=|\mathcal{L}_{a}|=\ell_{a}.\label{pf:N_lemma:recur_ineq_4.0}
\end{equation}
For each $z\in\mathcal{A}+\{a\}$, we continue to consider
\begin{align}
&\Big|\Big\{\vec{x}+\vec{y}:~
(\vec{x},\vec{y})\in\mathcal{A}^{k+1}\times\mathcal{L}_{a}~
\text{with}~\vec{x}[1]=z-a\Big\}\Big|\nonumber\\
=&\Big|\Big\{\Big[\begin{smallmatrix}z-a  \vspace{1mm} \\\vec{x}^{(k)}\end{smallmatrix}\Big]+
\Big[\begin{smallmatrix}a \vspace{1mm} \\\vec{y}^{(k)}\end{smallmatrix}\Big]:~
\big(\vec{x}^{(k)},\vec{y}^{(k)}\big)\in\mathcal{A}^{k}\times\mathcal{L}^{(k)}_{a}\Big\}\Big|\nonumber\\
=&\Big|\Big\{\vec{x}^{(k)}+\vec{y}^{(k)}:
~\big(\vec{x}^{(k)},\vec{y}^{(k)}\big)\in\mathcal{A}^{k}\times\mathcal{L}^{(k)}_{a}\Big\}\Big|\nonumber\\
=&\big|\mathcal{A}^{k}+\mathcal{L}^{(k)}_{a}\big|.\nonumber
\end{align}
Together with \eqref{pflemma:N_X+L_low_bound-4}, we immediately obtain that
\begin{equation}
\big|\mathcal{A}^{k+1}+\mathcal{L}_{a}\big|=\big|\mathcal{A}+\{a\}\big|\cdot \big|\mathcal{A}^k+\mathcal{L}_a^{(k)}\big|.\label{pflemma:N_X+L_low_bound-5.1}
\end{equation}

Next, we consider $\big|(\mathcal{A}^{k+1}+\mathcal{L}_{b})
\setminus(\mathcal{A}^{k+1}+\mathcal{L}_{a})\big|$.
First, we note that
\begin{align}
&\big(\mathcal{A}^{k+1}+\mathcal{L}_b\big)\setminus\big(\mathcal{A}^{k+1}+\mathcal{L}_a\big)\nonumber\\
&\supseteq\Big\{\vec{x}+\vec{y}:~(\vec{x},\vec{y})\in\mathcal{A}^{k+1}\times\mathcal{L}_b~\text{s.t.}
~\vec{x}[1]+\vec{y}[1]\in\big(\mathcal{A}+\{b\}\big)\setminus\big(\mathcal{A}+\{a\}\big)\Big\}.\nonumber
\end{align}
This immediately implies that
\begin{align}
&\Big|\big(\mathcal{A}^{k+1}+\mathcal{L}_b\big)\setminus\big(\mathcal{A}^{k+1}+\mathcal{L}_a\big)\Big|\nonumber\\
&\geq\sum_{z\in(\mathcal{A}+\{b\})\setminus(\mathcal{A}+\{a\})}
\Big|\Big\{\vec{x}+\vec{y}:~(\vec{x},\vec{y})\in\mathcal{A}^{k+1}\times\mathcal{L}_b~\text{with}
~\vec{x}[1]+\vec{y}[1]=z\Big\}\Big|\nonumber\\
&=\sum_{z\in(\mathcal{A}+\{b\})\setminus(\mathcal{A}+\{a\})}
\Big|\Big\{\vec{x}+\vec{y}:~(\vec{x},\vec{y})\in\mathcal{A}^{k+1}\times\mathcal{L}_b~\text{with}
~\vec{x}[1]=z-b\Big\}\Big|.\label{pflemma:N_X+L_low_bound-7}
\end{align}
Similarly,
let $\mathcal{L}_{b}^{(k)}$ be the set of the $k$-vectors obtained from the vectors in $\mathcal{L}_{b}$ by deleting their first components, all of which are equal to $b$.
 Clearly,
\begin{equation}
\big|\mathcal{L}_{b}^{(k)}\big|=|\mathcal{L}_{b}|=\ell_{b}.\label{pflemma:N_X+L_low_bound-7.1}
\end{equation}
 Continuing from \eqref{pflemma:N_X+L_low_bound-7},
for each $z\in\big(\mathcal{A}+\{b\}\big)\setminus\big(\mathcal{A}+\{a\}\big)$ we  consider
\begin{align}
&\Big|\Big\{\vec{x}+\vec{y}:~
(\vec{x},\vec{y})\in\mathcal{A}^{k+1}\times\mathcal{L}_{b}~
\text{with}~\vec{x}[1]=z-b\Big\}\Big|\nonumber\\
&=\Big|\Big\{\Big[\begin{smallmatrix}z-b  \vspace{1mm} \\\vec{x}^{(k)}\end{smallmatrix}\Big]+
\Big[\begin{smallmatrix}b \vspace{1mm} \\\vec{y}^{(k)}\end{smallmatrix}\Big]:~\big(\vec{x}^{(k)},\vec{y}^{(k)}\big)\in
\mathcal{A}^{k}\times\mathcal{L}^{(k)}_{b}\Big\}\Big|\nonumber\\
&=\Big|\Big\{\vec{x}^{(k)}+\vec{y}^{(k)}:~\big(\vec{x}^{(k)},\vec{y}^{(k)}\big)\in
\mathcal{A}^{k}\times\mathcal{L}^{(k)}_{b}\Big\}\Big|\nonumber\\
&=\big|\mathcal{A}^{k}+\mathcal{L}^{(k)}_{b}\big|.\nonumber
\end{align}
By \eqref{pflemma:N_X+L_low_bound-7}, we thus  obtain that
\begin{equation}
\big|\big(\mathcal{A}^{k+1}+\mathcal{L}_b\big)\setminus\big(\mathcal{A}^{k+1}+\mathcal{L}_a\big)\big|\geq
\big|\big(\mathcal{A}+\{b\}\big)\setminus\big(\mathcal{A}+\{a\}\big)\big|\cdot \big |\mathcal{A}^k+\mathcal{L}_b^{(k)}\big|.\label{pflemma:N_X+L_low_bound-8.1}
\end{equation}

Combining \eqref{pflemma:N_X+L_low_bound-5.1} and \eqref{pflemma:N_X+L_low_bound-8.1} with \eqref{pflemma:N_X+L_low_bound-3},
we obtain that
\begin{align}
\big|\mathcal{A}^{k+1}+\mathcal{L}\big|
&\geq \big|\mathcal{A}+\{a\}\big|\cdot \big|\mathcal{A}^k+\mathcal{L}_a^{(k)}\big|
 +\big|\big(\mathcal{A}+\{b\}\big)\setminus\big(\mathcal{A}+\{a\}\big)\big|\cdot \big |\mathcal{A}^k+\mathcal{L}_b^{(k)}\big|\nonumber\\
&= 2\cdot \big|\mathcal{A}^k+\mathcal{L}_a^{(k)}\big|
 +\big|\mathcal{A}^k+\mathcal{L}_b^{(k)}\big|\label{pflemma:N_X+L_low_bound-9.2}\\
 & \geq 2 \cdot 2^k \cdot h(\ell_a)+2^k\cdot h(\ell_b)\label{pflemma:N_X+L_low_bound-9.3}\\
 &=2^k \cdot \Big[2\cdot h(\ell_a)+h(\ell_b)\Big]\nonumber\\
& \geq 2^k\cdot 2\cdot h(\ell)=2^{k+1}\cdot h(\ell),\label{pflemma:N_X+L_low_bound-9.5}
\end{align}
where the equality
\eqref{pflemma:N_X+L_low_bound-9.2} follows because
\begin{align*}
\big|\mathcal{A}+\{a\}\big|=2\quad \text{and}\quad \big|\big(\mathcal{A}+\{b\}\big)\setminus\big(\mathcal{A}+\{a\}\big)\big|=1;
\end{align*}
the inequality~\eqref{pflemma:N_X+L_low_bound-9.3} follows from
 the induction hypothesis for the case $k$ and the equalities~\eqref{pf:N_lemma:recur_ineq_4.0} and \eqref{pflemma:N_X+L_low_bound-7.1}, more specifically,
 \begin{align*}
 \big|\mathcal{A}^k+\mathcal{L}_a^{(k)}\big|\geq 2^k\cdot h\big(\big|\mathcal{L}_{a}^{(k)}\big|\big)=2^k\cdot h(\ell_{a}) \quad\text{and} \quad \big|\mathcal{A}^k+\mathcal{L}_b^{(k)}\big|\geq 2^k\cdot h\big(\big|\mathcal{L}_{b}^{(k)}\big|\big)=2^k\cdot h(\ell_{b});
 \end{align*}
 and the inequality~\eqref{pflemma:N_X+L_low_bound-9.5} follows from Lemma~\ref{lemma:h(l)-property}. Thus, we have proved the lemma.
\end{proof}

Immediately, we obtain the following consequence  of Lemma~\ref{lemma:N_X+L_low_bound}, which provides a lower bound on~$Q_{k}(\ell)$ (cf.~\eqref{def:Q_k(l)}).

\begin{cor}\label{cor:Q_k^l_low_bound}
For any nonnegative integer $\ell$ with $0\leq \ell\leq 2^{k}$,
\begin{equation*}
Q_{k}(\ell)\geq 2^{k}\cdot h(\ell).
\end{equation*}
\end{cor}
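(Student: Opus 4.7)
The plan is to observe that this corollary follows immediately from Lemma~\ref{lemma:N_X+L_low_bound} combined with the definition of $Q_k(\ell)$ in equation~\eqref{def:Q_k(l)}. The lemma gives a uniform lower bound on $|\mathcal{A}^k + \mathcal{L}|$ that depends only on the cardinality $\ell = |\mathcal{L}|$ and not on the specific choice of $\mathcal{L}$, so the bound survives taking the minimum over all $\mathcal{L} \subseteq \mathcal{A}^k$ of size $\ell$.

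Concretely, I would first fix an arbitrary $\ell$ with $0 \leq \ell \leq 2^k$, and let $\mathcal{L}$ be any subset of $\mathcal{A}^k$ with $|\mathcal{L}| = \ell$. By Lemma~\ref{lemma:N_X+L_low_bound},
\begin{equation*}
|\mathcal{A}^k + \mathcal{L}| \geq 2^k \cdot h(\ell).
\end{equation*}
Since the right-hand side does not depend on $\mathcal{L}$, I can take the minimum over all such subsets on the left-hand side, obtaining
\begin{equation*}
Q_k(\ell) = \min_{\substack{\mathcal{L} \subseteq \mathcal{A}^k \\ |\mathcal{L}| = \ell}} |\mathcal{A}^k + \mathcal{L}| \geq 2^k \cdot h(\ell),
\end{equation*}
which is exactly the claim. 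The edge case $\ell = 0$ is automatic because $Q_k(0) = 0$ (empty sumset) and $h(0) = 0$ by the convention adopted in~\eqref{def:h(l)}.

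There is no real obstacle here: all the technical work has already been absorbed into Lemma~\ref{lemma:N_X+L_low_bound} (which in turn depends on the induction argument and on the key subadditivity-type inequality $2 h(\ell_a) + h(\ell_b) \geq 2 h(\ell)$ from Lemma~\ref{lemma:h(l)-property}). The only step in the corollary itself is passing from a pointwise bound to the minimum, which is purely formal.
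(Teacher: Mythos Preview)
Your proposal is correct and matches the paper's own treatment exactly: the paper states the corollary as an immediate consequence of Lemma~\ref{lemma:N_X+L_low_bound} together with the definition~\eqref{def:Q_k(l)} of $Q_k(\ell)$, which is precisely the argument you wrote out.
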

Now, we start to prove $\mathcal{C}(01;2,1;f)\leq \log_{3}6$.
For any positive integer $k$, we consider
 an arbitrary  admissible $k$-shot code  $\mbC=\big\{\phi_1,\phi_2;~\psi\big\}$  for the model $(01;2,1;f)$. Let $m=|\Im\, \phi_{2}|$ (where $1\leq m\leq 2^{k}$).
 For the code $\mbC$,
 we have
\begin{align}
n=\max\bigg\{\bigg\lceil\frac{\log|\Im\,\phi_{1}|}{2}\bigg\rceil,~
\Big\lceil\log|\Im\,\phi_{2}|\Big\rceil\bigg\}.\label{eq:pf_upp_bound_log3(6)-1}
\end{align}
Further, by Lemma~\ref{lemma:N_relat_code_color_set} and Corollary~\ref{cor:Q_k^l_low_bound},
we obtain that
	\begin{align}
		|\Im\, \phi_{1}| \geq
\chi_{m} & \geq Q_{k}\Big(\Big\lceil\frac{2^{k}}{m}\Big\rceil\Big)\geq 2^{k}\cdot h\Big( \Big\lceil\frac{2^{k}}{m}\Big\rceil\Big)
 =2^{k}\cdot \Big\lceil\frac{2^{k}}{m}\Big\rceil^{\log 3-1}
\nonumber\\
&\geq2^{k}\cdot \Big(\frac{2^{k}}{m}\Big)^{\log 3-1}=3^k\cdot m^{1-\log3}.\nonumber
	\end{align}
This implies that
\begin{equation*}
\bigg\lceil\frac{\log|\Im\,\phi_{1}|}{2}\bigg\rceil \geq \dfrac{1}{2}\Big(k\log3-\log\frac{3}{2}\cdot\log m\Big).\label{eq:pf_upp_bound_log3(6)-4}
\end{equation*}
By $\log|\Im\,\phi_{2}|=\log m$, we continue from \eqref{eq:pf_upp_bound_log3(6)-1} and obtain that
\begin{align}
n\geq&\max\Big\{\frac{1}{2}\Big(k\log3-\log\dfrac{3}{2}\cdot\log m\Big),~\log m\Big\}.\nonumber
\end{align}
Considering all $1\leq m\leq 2^k$, we further obtain that
\begin{align}
  n&\geq\min_{\substack{m\in\{1,2,\cdots,2^k\}}}\max\Big\{\frac{1}{2}\Big(k\log3-\log\frac{3}{2}\cdot\log m\Big),~\log m\Big\}\nonumber\\
  &\geq\min_{\substack{m\in[1,2^k]}}\max\Big\{\frac{1}{2}\Big(k\log3-\log\frac{3}{2}\cdot\log m\Big),~\log m\Big\}\nonumber\\
  &=\min_{\substack{t\in[0,k]}}\max\Big\{\frac{1}{2}\Big(k\log3-\log\frac{3}{2}\cdot t\Big),~t\Big\},\label{eq:pf_upp_bound_log3(6)-6.3}
\end{align}
where in \eqref{eq:pf_upp_bound_log3(6)-6.3}, we let $t=\log m$. With \eqref{eq:pf_upp_bound_log3(6)-6.3}, we define the following function:
\begin{align}
F_{k}(t)=\max\Big\{\frac{1}{2}\Big(k\log3-\log\frac{3}{2}\cdot t\Big),~t\Big\},\quad \forall~t\in[0,k],\label{eq:pf_upp_bound_log3(6)-7}
\end{align}
which can be rewrite as
\begin{equation}
F_{k}(t)=\begin{cases}
\quad\dfrac{1}{2}\Big(k\log3-\log\dfrac{3}{2}\cdot t\Big),\qquad &\text{if}~ t\in\Big[0,~\frac{k\log3}{\log3+1}\Big);\\
\quad t,\qquad &\text{if}~ t\in\Big[\frac{k\log3}{\log3+1},~ k\Big].
\end{cases}\label{eq:pf_upp_bound_log3(6)-8}
\end{equation}
The graph of $F_{k}(t)$ is plotted in Fig.\,\ref{fig:F_k(t)}. Immediately,
\begin{equation}
\min_{t\in[0,k]}F_{k}(t)=F_{k}\Big(\frac{k\log3}{\log3+1}\Big)=\frac{k\log3}{\log3+1}=k\cdot\log_{6}3.
\label{eq:pf_upp_bound_log3(6)-9}
\end{equation}
Hence, by~\eqref{eq:pf_upp_bound_log3(6)-6.3}, we have
\begin{align}
n\geq\min_{t\in[0,k]}F_{k}(t)=k\cdot\log_{6}3,\nonumber
\end{align}
i.e.,
\begin{align}\label{equ1}
R(\mbC)=\frac{k}{n} \leq \log_{3}6.
\end{align}
Finally, we note that the upper bound \eqref{equ1} on the compression rate is true for any positive integer $k$ and any admissible $k$-shot code. This immediately implies {$\mathcal{C}(01;2,1;f)\leq\log_{3}6$. Theorem \ref{thm:N_cap} is thus proved.

  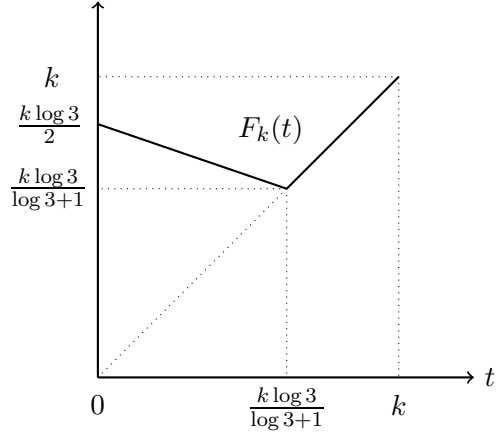
\begin{figure}[!t]
	\centering
	\begin{tikzpicture}
	\draw[->][thick](0,0)--(0,5)node[above]{};
	\draw[->][thick](0,0)--(5,0)node[right]{$t$};
	\node at (2.3,3.3){$F_{k}(t)$};
\node at (0,-0.1)[below]{$0$};
\node at (4,-0.1)[below]{$k$};
\node at (2.51022128,0)[below]{$\frac{k\log3}{\log3+1}$};
\node at (-0.37,4)[left]{$k$};
\node at (-0.07,3.3699)[left]{$\frac{k\log3}{2}$};
\node at (0,2.51022128)[left]{$\frac{k\log3}{\log3+1}$};
\draw[thick](0,3.3699)--(2.51022128,2.51022128);
\draw[thick](4,4)--(2.51022128,2.51022128);
\draw[dotted](4,4)--(4,0);
\draw[dotted](4,4)--(0,4);
\draw[dotted](2.51022128,2.51022128)--(2.51022128,0);
\draw[dotted](2.51022128,2.51022128)--(0,0);
\draw[dotted](2.51022128,2.51022128)--(0,2.51022128);
	\end{tikzpicture}
	\caption{The function $F_{k}(t)$.}
	\label{fig:F_k(t)}
\end{figure}

\subsection{Characterization of the Compression Capacity of $(01;C_1,C_2;f)$}\label{subsec_01_C1C2}

We present the compression capacity for the general case of the model $(01;C_1,C_2;f)$ with $C_1\geq C_2$ as depicted in Fig.\,\ref{fig:System_use}.

\begin{theorem}\label{C1-C2-net-model-cap}
Consider the model of distributed function compression system $(01;C_1,C_2;f)$ as depicted in Fig.\,\ref{fig:System_use}, where $C_1\geq C_2$.
Then,
\begin{equation*}
\mathcal{C}(01;C_1,C_2;f)=(C_1-C_2)\log_{3}2+C_2.
\end{equation*}
\end{theorem}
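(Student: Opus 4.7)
The plan is to extend both directions of the proof of Theorem~\ref{thm:01;2,1-comp-cap} from the special case $(01;2,1;f)$ to general $(C_1,C_2)$ with $C_1\geq C_2$. The graph coloring machinery developed in Section~\ref{sec:compre-cap01}---namely Lemmas~\ref{lemma:chi(G(P))-equiv-form}, \ref{lemma:N_relat_code_color_set}, \ref{lemma:chi_m_low_bound}, \ref{lemma:N_X+L_low_bound}, and Corollary~\ref{cor:Q_k^l_low_bound}---concerns only colorings of conflict graphs on $\mathcal{A}^k\times\mathcal{A}^k$ and hence is independent of $C_1,C_2$. The only thing that changes is the conversion between code sizes and the number of channel uses, so I only need to redo the optimization that blends the two bounds.

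For the direct part, I will construct a hybrid $k$-shot code parameterized by an integer $k_1\in[0,k]$. On the first $k_1$ coordinates, $\Enone$ (which sees both sources) transmits the ternary sums $x_i+y_i\in\{0,1,2\}$ and $\Entwo$ transmits a constant; on the remaining $k-k_1$ coordinates $\Enone$ transmits $x_j$ and $\Entwo$ transmits $y_j$, and the decoder outputs the componentwise sum. Then $|\Im\,\phi_1|=3^{k_1}\cdot 2^{k-k_1}$ and $|\Im\,\phi_2|=2^{k-k_1}$, giving
\begin{equation*}
n(\mbC)=\max\bigg\{\bigg\lceil\tfrac{k_1\log 3+(k-k_1)}{C_1}\bigg\rceil,\bigg\lceil\tfrac{k-k_1}{C_2}\bigg\rceil\bigg\}.
\end{equation*}
Balancing the two arguments gives the optimum $\alpha\triangleq\tfrac{C_1-C_2}{C_1+C_2(\log 3-1)}\in[0,1]$ (nonnegative because $C_1\geq C_2$); setting $k_1=\lceil\alpha k\rceil$ and letting $k\to\infty$, the ceiling corrections vanish and $R(\mbC)=k/n(\mbC)\to \tfrac{C_1+C_2(\log 3-1)}{\log 3}=(C_1-C_2)\log_3 2+C_2$, as required.

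For the converse, I will recycle the chain of bounds used for $(01;2,1;f)$. Given any admissible $k$-shot code $\mbC=\{\phi_1,\phi_2;\psi\}$ with $m=|\Im\,\phi_2|$, Lemma~\ref{lemma:N_relat_code_color_set} combined with Corollary~\ref{cor:Q_k^l_low_bound} yields
\begin{equation*}
|\Im\,\phi_1|\geq 2^k\cdot\Big\lceil\tfrac{2^k}{m}\Big\rceil^{\log 3-1}\geq 3^k\, m^{1-\log 3}.
\end{equation*}
Combining this with the channel capacity constraints $nC_1\geq\log|\Im\,\phi_1|$ and $nC_2\geq\log|\Im\,\phi_2|$, I obtain
\begin{equation*}
n\geq \max\bigg\{\tfrac{k\log 3-(\log 3-1)\log m}{C_1},\ \tfrac{\log m}{C_2}\bigg\}.
\end{equation*}
Writing $t=\log m\in[0,k]$ and minimizing the max, the two affine branches meet at $t^\star=\tfrac{C_2\, k\log 3}{C_1+C_2(\log 3-1)}$, where both branches equal $\tfrac{k\log 3}{C_1+C_2(\log 3-1)}$. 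Since the first branch is decreasing in $t$ and the second increasing, this is the minimum; moreover $t^\star\leq k$ iff $C_1\geq C_2$, so the optimizer lies in $[0,k]$. Hence $R(\mbC)=k/n\leq \tfrac{C_1+C_2(\log 3-1)}{\log 3}=(C_1-C_2)\log_3 2+C_2$, and letting $k\to\infty$ gives the upper bound on $\mathcal{C}(01;C_1,C_2;f)$.

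The main obstacle is really just bookkeeping: one must verify that the balance point $\alpha k$ (direct) and $t^\star$ (converse) both lie in their admissible intervals under the assumption $C_1\geq C_2$, and that the ceilings in the code construction and in $\lceil 2^k/m\rceil^{\log 3-1}$ contribute only $O(1)$ terms that disappear after dividing by $k$. No new combinatorial ingredient is needed beyond what was proved for $(01;2,1;f)$; the reduction is essentially a single parameter optimization over the fraction of coordinates that $\Enone$ compresses into ternary sums.
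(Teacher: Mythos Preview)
Your proposal is correct and follows essentially the same approach as the paper: both directions reuse the graph-coloring lemmas from the special case (noting, as you do, that $\chi_m$ is independent of $C_1,C_2$), derive the bound $|\Im\,\phi_1|\ge 3^k m^{1-\log 3}$, and then perform the single-parameter optimization over $t=\log m$; the direct part uses the same hybrid code splitting coordinates between ternary sums and raw bits, with $k_1$ chosen at the same balance point $\alpha=\tfrac{C_1-C_2}{(C_1-C_2)+C_2\log 3}$. The only cosmetic difference is that the paper indexes the ternary block as $1\le i\le k_1-1$ rather than $1\le i\le k_1$, which is immaterial in the limit.
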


\begin{proof}
We first prove
\begin{equation}
\mathcal{C}(01;C_1,C_2;f)\leq(C_1-C_2)\log_{3}2+C_2\label{C1-C2_cap_upp_bound}
\end{equation}
by using the same technique in the converse part of the proof of Theorem~\ref{thm:01;2,1-comp-cap}.
Consider an arbitrary positive integer $k$ and an arbitrary admissible $k$-shot code $\mbC=\big\{\phi_{1}, \phi_2;~\psi\big\}$ for the model $(01;C_1,C_2;f)$. By \eqref{dis-fun-model-encod-fun-def}, the
two encoding functions $\phi_1$ and $\phi_2$ are from $\mathcal{A}^k\times\mathcal{A}^k$ to $\Im\,\phi_1$ and  from $\mathcal{A}^k$ to $\Im\,\phi_2$, respectively. We readily see that $1\leq|\Im\,\phi_2|\leq 2^k$. Let $|\Im\,\phi_2|=m$, where $1\leq m\leq 2^k$. Then, we define
\begin{equation}
\begin{aligned}
\widehat{\chi}_{m}\triangleq&\min\big\{|\Im\,\phi_1|:~\text{$\mbC=\big\{\phi_{1},\phi_2;~\psi\big\}
$ is an admissible $k$-shot  code} \big.\\\big.&\qquad\quad\qquad\quad\quad\quad\quad\qquad\quad\qquad\quad\qquad\quad
\;\;\;\,\text{ for $(01;C_1,C_2;f)$  with $|\Im\,\phi_{2}|=m$}\big\}.\label{eq:def:hat-chi_m}
\end{aligned}
\end{equation}
We see that $\widehat{\chi}_m$ does not depend on the channel capacity constraints  $C_1$
and $C_2$, which shows that
\begin{equation}
\widehat{\chi}_m=\chi_m,\label{rel-hat-chi_m-chi_m}
\end{equation}
where we recall \eqref{eq:def:chi_m} for the definition of $\chi_m$.

Now, for the admissible $k$-shot code $\mbC=\big\{\phi_{1}, \phi_2;~\psi\big\}$ with $|\Im\,\phi_2|=m$, by the definition of $\widehat{\chi}_m$ in \eqref{eq:def:hat-chi_m} and the equality \eqref{rel-hat-chi_m-chi_m}, we obtain that
\begin{equation*}
|\Im\,\phi_1|\geq \widehat{\chi}_m=\chi_m.
\end{equation*}
 This implies that
 \begin{align}
		|\Im\,\phi_1|\geq\chi_m\geq Q_
		{k}\Big(\Big\lceil\frac{2^{k}}{m}\Big\rceil\Big)\geq 2^{k}\cdot h\Big( \Big\lceil\frac{2^{k}}{m}\Big\rceil\Big)\geq3^k\cdot m^{1-\log3},\label{eq:gener-dis-fun-pf_upp_bound-1}
	\end{align}
where the second inequality follows from  Lemma~\ref{lemma:N_relat_code_color_set}, the third inequality follows from Corollary~\ref{cor:Q_k^l_low_bound}, and the last inequality follows from
$h(\ell)=\ell^{\log3-1},~\ell=0,1,2,\cdots$  (cf.~\eqref{def:h(l)}).

With the above discussion, by  \eqref{dis-cod-def-n_i(C)} and \eqref{dis-cod-def-n(C)} we obtain that
 \begin{align}
n(\mbC)&=\max\bigg\{\bigg\lceil\frac{\log|\Im\,\phi_{1}|}{C_1}\bigg\rceil,~\bigg\lceil\frac{\log|\Im\,\phi_{2}|}{C_2}
\bigg\rceil\bigg\}\nonumber\\
&\geq\max\bigg\{\frac{\log|\Im\,\phi_{1}|}{C_1},~\frac{\log|\Im\,\phi_{2}|}{C_2}
\bigg\}\nonumber\\
&\geq\max\bigg\{\frac{k\log3+(1-\log3)\cdot\log m}{C_1},~\frac{\log m}{C_2}\bigg\},\label{eq:gener-dis-fun-pf_upp_bound-3}
\end{align}
where the inequality \eqref{eq:gener-dis-fun-pf_upp_bound-3} follows from \eqref{eq:gener-dis-fun-pf_upp_bound-1} and $|\Im\,\phi_2|=m$. Considering all $1\leq m\leq 2^k$ and continuing from \eqref{eq:gener-dis-fun-pf_upp_bound-3}, we further obtain that
  \begin{align}
  n(\mbC)&\geq\min_{\substack{m\in\{1,2,\cdots,2^k\}}}\max\bigg\{\frac{k\log3+(1-\log3)\cdot\log m}{C_1},~\frac{\log m}{C_2}\bigg\}\nonumber\\
  &\geq\min_{\substack{m\in[1,2^k]}}\max\bigg\{\frac{k\log3+(1-\log3)\cdot\log m}{C_1},~\frac{\log m}{C_2}\bigg\}\nonumber\\
  &=\min_{\substack{t\in[0,k]}}\max\bigg\{\frac{k\log3+(1-\log3) t}{C_1},~\frac{t}{C_2}\bigg\},\label{eq:gener-pf_upp_bound-4}
  \end{align}
  where in \eqref{eq:gener-pf_upp_bound-4}, we let $t=\log m$.
  By a calculation similar to the one for the special case $(01;2,1;f)$ in Section~\ref{subsec_pf_Thm4} (cf.~\eqref{eq:pf_upp_bound_log3(6)-7}-\eqref{eq:pf_upp_bound_log3(6)-9}), we can obtain that
\begin{equation*}
\min_{\substack{t\in[0,k]}}\max\bigg\{\frac{k\log3+(1-\log3) t}{C_1},~\frac{t}{C_2}\bigg\}=\frac{k\log3}{(C_1-C_2)+C_2\cdot\log 3}.
\label{eq:gener-pf_upp_bound-5}
\end{equation*}
This implies that
\begin{align*}
n(\mbC)\geq\frac{k\log3}{(C_1-C_2)+C_2\cdot\log 3},
\end{align*}
or equivalently,
\begin{equation}
\frac{k}{n(\mbC)}\leq\frac{(C_1-C_2)+C_2\cdot\log 3}{\log3}=(C_1-C_2)\log_{3}2+C_2.\label{eq:gener-pf_upp_bound-6}
\end{equation}
We note that the above upper bound \eqref{eq:gener-pf_upp_bound-6} is satisfied for each positive integer $k$ and each admissible $k$-shot code for the model $(01;C_1,C_2;f)$. Hence, we have proved the inequality \eqref{C1-C2_cap_upp_bound}.

 Next, we will prove that
\begin{equation}
 \mathcal{C}(01;C_1,C_2;f)\geq(C_1-C_2)\log_{3}2+C_2.\label{C_1-C2-Net_cap_low_bound}
\end{equation}
Let $k$ be  a positive integer.
 We construct an admissible $k$-shot code $\mbC=\big\{\phi_1,\phi_2;~\psi\big\}$ for the model $(01;C_1,C_2;f)$ as follows.
 Let
 \begin{equation*}
 \vec{x}=(x_1,x_2,\cdots,x_k)\in\mathcal{A}^k\quad \text{and}\quad \vec{y}=(y_1,y_2,\cdots,y_k)\in\mathcal{A}^k
 \end{equation*}
  be two arbitrary source messages generated by
 $\boldsymbol{X}$ and $\boldsymbol{Y}$, respectively. We further let
 \begin{equation}
 k_{1}\triangleq\bigg\lceil\frac{(C_1-C_2)\cdot k}{(C_1-C_2)+C_2\cdot\log3}\bigg\rceil.
 \label{eq:C1-C2-encod-hat{k}1-value}
 \end{equation}
 \begin{itemize}
 \item The  encoding function $\phi_{1}(\vec{x},\vec{y})$  is defined to be a $k$-vector:
\begin{equation}\label{eq:C1-C2-encod-fun-1}
 \phi_{1}(\vec{x},\vec{y})=
 \begin{bmatrix}
 x_{i}+y_{i}:&1\leq i\leq k_{1}-1\\
 x_{j}:&k_{1}\leq j\leq k
 \end{bmatrix},
\end{equation}
and the encoding function $\phi_2(\vec{y})$ is defined to be another $k$-vector:
\begin{equation}\label{eq:C1-C2-encod-fun-2}
  \phi_{2}(\vec{y})=\begin{bmatrix}
  0: &1\leq i\leq k_{1}-1\\
  y_{j}: &k_{1}\leq j\leq k
  \end{bmatrix};
\end{equation}
\item  The decoding function $\psi$ is defined by
\begin{equation*}
		\psi\big( \phi_{1}(\vec{x},\vec{y}),\phi_{2}(\vec{y})\big)
= \phi_{1}(\vec{x},\vec{y})+
\phi_{2}(\vec{y}).
\end{equation*}
\end{itemize}

Evidently, the code $\mbC$ is admissible.
It follows from \eqref{eq:C1-C2-encod-fun-1} and \eqref{eq:C1-C2-encod-fun-2} that
\begin{equation*}
|\Im\,\phi_{1}|=3^{k_{1}-1}\cdot 2^{k-k_{1}+1}\quad\text{and}\quad |\Im\,\phi_{2}|=2^{k-k_1+1},
\end{equation*}
or equivalently,
\begin{align*}
\log|\Im\,\phi_{1}|=(k_{1}-1)\log3+k-k_{1}+1\quad\text{and}\quad
\log|\Im\,\phi_{2}|=k-k_1+1.
\end{align*}
 Together with \eqref{eq:C1-C2-encod-hat{k}1-value}, we consider
\begin{align}
\log|\Im\,\phi_{1}|&=\bigg(\bigg\lceil\frac{(C_1-C_2)\cdot k}{(C_1-C_2)+C_2\cdot\log3}\bigg\rceil-1\bigg)\cdot\log3+\bigg( k-\bigg\lceil\frac{(C_1-C_2)\cdot k}{(C_1-C_2)+C_2\cdot\log3}\bigg\rceil+1\bigg)\nonumber\\
&\leq \frac{(C_1-C_2)\cdot k}{(C_1-C_2)+C_2\cdot\log3}\cdot\log3+\bigg( k-\frac{(C_1-C_2)\cdot k}{(C_1-C_2)+C_2\cdot\log3}+1 \bigg)\nonumber\\
&= \frac{k\cdot C_1 \cdot \log3}{(C_1-C_2)+C_2\cdot\log3}+1\label{C1-C2-N_theta_1_upper_bound},
\end{align}
and
\begin{align}
\log|\Im\,\phi_{2}|&=k-\bigg\lceil\frac{(C_1-C_2)\cdot k}{(C_1-C_2)+C_2\cdot\log3}\bigg\rceil+1 \leq k-\frac{(C_1-C_2)\cdot k}{(C_1-C_2)+C_2\cdot\log3}+1\nonumber\\
&= \frac{k\cdot C_2 \cdot \log3}{(C_1-C_2)+C_2\cdot\log3}+1\label{C1-C2-N_theta_2_upper_bound}.
\end{align}
Hence, we can obtain that
\begin{align}
n(\mbC)&=\max\bigg\{\bigg\lceil\frac{\log|\Im\,\phi_{1}|}{C_1}\bigg\rceil,~\bigg\lceil\frac{\log|\Im\,\phi_{2}|}{C_2}
\bigg\rceil\bigg\}\nonumber\\
&\leq\max\bigg\{\frac{\log|\Im\,\phi_{1}|}{C_1}+1,~\frac{\log|\Im\,\phi_{2}|}{C_2}+1
\bigg\}\nonumber\\
&\leq \max\bigg\{\frac{k\log3}{(C_1-C_2)+C_2\cdot\log3}+\frac{1}{C_1}+1,
~\frac{k\log3}{(C_1-C_2)+C_2\cdot\log3}+\frac{1}{C_2}+1\bigg\}\label{C1-C2-upp-b-pf-n(C)-upp-bound1}\\
&=\frac{k\log3}{(C_1-C_2)+C_2\cdot\log3}+\frac{1}{C_2}+1,\nonumber
\end{align}
where the inequality \eqref{C1-C2-upp-b-pf-n(C)-upp-bound1} follows from \eqref{C1-C2-N_theta_1_upper_bound} and \eqref{C1-C2-N_theta_2_upper_bound}.
 Then, the  compression rate of the  code  $\mbC$ satisfies
\begin{align}
R(\mbC)=\frac{k}{n(\mbC)} & \geq \left[\frac{\log3}{(C_1-C_2)+C_2\cdot\log3}+\frac{1}{k}\Big(\frac{1}{C_2}+1\Big)\right]^{-1}\nonumber\\
&\to (C_1-C_2)\log_{3}2+C_2, \quad \text{as} \quad k\to +\infty.\nonumber
\end{align}
Thus, we have proved the inequality \eqref{C_1-C2-Net_cap_low_bound}. We have completed the proof.
\end{proof}

\section{An Application in Network Function Computation}\label{sec:equ-model}

An important application of the compression capacity for the model $(01;C_1,C_2;f)$ is in the tightness of the best known upper bound on the computing capacity in (zero-error) network function computation~\cite{Guang_Improved_upper_bound}. In network function computation, several ``general'' upper bounds on the computing capacity have been obtained \cite{Appuswamy11,Huang_Comment_cut_set_bound,Guang_Improved_upper_bound}, where ``general'' means that the upper bounds are applicable to arbitrary network and arbitrary target function. Here, the best known upper bound is the one proved by Guang~\emph{et al.}~\cite{Guang_Improved_upper_bound} in using the approach of the cut-set strong partition. This bound is always tight for all previously considered network function computation problems whose computing capacities are known. However, whether the upper bound is in general tight remains open~\cite{Guang_Improved_upper_bound}. In this section, we will give the answer that in general the upper bound of Guang~\emph{et al.}~\cite{Guang_Improved_upper_bound} is not tight by considering equivalent network function computation models of $(01;C_1,C_2;f)$.

\subsection{An Equivalent Model of Network Function Computation}\label{subsec:equ-model}

\begin{figure}
	\centering
\tikzstyle{format}=[draw,circle,fill=gray!30,minimum size=6pt, inner sep=0pt]
	\begin{tikzpicture}
		\node[format](a1)at(-2,5){};
		\node[format](a2)at(2,5){};
		\node[format](r1)at(-1,2.5){};
		\node[format](r2)at(1,2.5){};
		\node[format](p)at(0,0){};
        \draw[->,>=latex] (a1) to[bend right=50] node[midway, auto,swap, left=0mm] {\small{$d_{1}$}} (r1);
        \draw[->,>=latex] (a1) to node[above=1mm] {} (r1);
        \draw[->,>=latex] (a2) to[bend right=40] node[midway, auto, left=0mm] {} (r1);
        \draw[->,>=latex] (a2) to  node[midway, auto, below=0.8mm] {} (r1);
        \draw[->,>=latex] (a2) to  node [midway, auto, right=0mm] {} (r2);
        \draw[->,>=latex] (a2) to[bend left=50] node [midway, auto, right=0mm] {\small{$d_{6}$}} (r2);
        \draw[->,>=latex] (r1) to[bend right=50] node [below=0.1,left] {\small{$e_{1}$}} (p);
        \draw[->,>=latex] (r1) to  node [midway, auto, right=0mm] {\small{$e_{2}$}} (p);
		\draw[->,>=latex](r2)to node [midway, auto, right=0mm] {\small{$e_{3}$}}(p);
		\node at (-2,5.4){ $\sigma_{1}$};\node at (-1.3,4){ \small{$d_{2}$}};
        \node at (1.25,3.8){ \small{$d_{5}$}};
        \node at (0.3,3.3){ \small{$d_{4}$}};
        \node at (-0.2,4.5){ \small{$d_{3}$}};
		\node at (2,5.4){ $\sigma_{2}$};
        \node at (-0.6,2.4){ $v_{1}$};
		\node at (1.2,2.2){ $v_{2}$};
		\node at (0,-0.4){ $\rho$};
	\end{tikzpicture}
	\caption{The equivalent model $(\mathcal{N},f)$ of  zero-error network function computation. }
	\label{fig:N}
\end{figure}
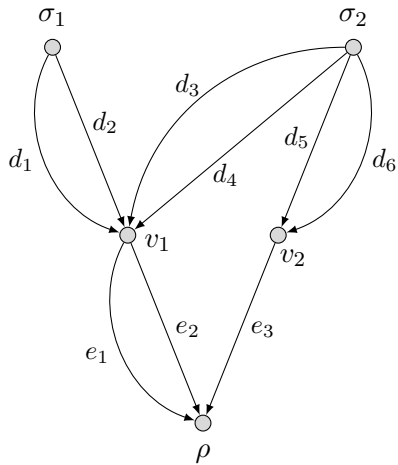

We also focus on the typical special case $(01;2,1;f)$ and transform it into an equivalent model of network function computation, which is specified below. Let $\mG=(\mV,\mE)$ be a directed acyclic graph in Fig.\,\ref{fig:N}, where the set of nodes and the set of edges are $\mathcal{V}=\{\sigma_1,\sigma_2,v_1,v_2,\rho\}$ and $\mathcal{E}=\{d_i:1\leq i\leq 6;~e_j,1\leq j\leq 3 \}$, respectively. We assume that a symbol taken from the set $\mA$ (i.e., a bit) can be reliably transmitted on each edge for each use, i.e., we take the capacity of each edge to be $1$ with respect to $\mA$.
In the set of nodes $\mathcal{V}$, $\sigma_1$ and $\sigma_2$ are two \emph{source nodes} that  model the sources $\boldsymbol{X}$ and $\boldsymbol{Y}$, respectively.  We let $S=\{\sigma_1, \sigma_2\}$, the set of sources nodes. The two intermediate nodes $v_1$ and $v_2$  model the two encoders $\Enone$ and $\Entwo$, respectively. The single \emph{sink node} $\rho$ models the decoder $\De$ to compress the target function $f$ with zero error.  Next, we consider the set of edges $\mathcal{E}$. In the model $(01;2,1;f)$, the channels
 $(\Enone,~\De)$ and $(\Entwo,~\De)$  have capacities $2$ and $1$, respectively. Accordingly, we set two parallel edges $e_1$ and $e_2$ from $v_1$ to $\rho$, and one edge $e_3$ from $v_2$ to $\rho$. Further, in the model  $(01;2,1;f)$,  $\Enone$ can observe the two sources $\boldsymbol{X}$ and $\boldsymbol{Y}$, and $\Entwo$ only observes the source  $\boldsymbol{Y}$. To show this in the equivalent model of network function computation, by $2=C_1>C_2=1$, it suffices to set two parallel edges $d_1$ and $d_2$ from $\sigma_1$ to $v_1$, and two  parallel edges $d_3$ and $d_4$
 from $\sigma_2$ to $v_1$; and two parallel edges $d_5$ and $d_6$
 from $\sigma_2$ to $v_2$. This will become clear later. In the graph $\mathcal{G}$,
for an edge $e \in \mE$, the \emph{tail} node and \emph{head} node of $e$ are denoted by $\tail(e)$ and $\head(e)$, respectively. For a node $u \in \mV$, we let $\ein(u)=\{ e \in \mE:\head(e)=u \}$ and $\eout(u)=\{ e \in \mE:\tail(e)=u \}$, both of which are the set of input edges and the set of output edges of $u$, respectively. The two source nodes have no input edges and the single sink node  has no output edges, i.e., $\ein(\sigma_i)=\eout(\rho)=\emptyset$, $i=1,2$.  The graph $\mG$, together with the set of the source nodes $S$ and the single sink node~$\rho$, forms a \emph{network} $\mN$, i.e., $\mN=(\mG, S, \rho)$.

Let
$k$ be a positive integer. The  source node $\sigma_1$ generates $k$ symbols $x_{1},x_{2},\cdots,x_{k}$  in $\mA$, written as a vector $\vec{x}$, i.e., $\vec{x}=(x_{1},x_{2},\cdots,x_{k})$; and
the  source node $\sigma_2$ generates $k$ symbols $y_{1},y_{2},\cdots,y_{k}$  in $\mA$, written as a vector $\vec{y}$, i.e., $\vec{y}=(y_{1},y_{2},\cdots,y_{k})$.
 The $k$ values of the target function
\begin{align}\label{function_f}
f(\vec{x},\vec{y})\triangleq\big(f(x_{i},y_{i}):~ i=1,2,\cdots,k \big)\nonumber
\end{align}
are required to  compute with zero error at $\rho$. We have completed the specification of the equivalent  model of network function computation, denoted by $(\mN,f)$, as depicted in Fig.\,\ref{fig:N}.

For the model $(\mathcal{N},f)$, we consider computing $f(\vec{x},\vec{y})$ for any source message pair $(\vec{x},\vec{y})\in\mA^{k}\times\mA^{k}$ by transmitting multiple symbols in $\mA$ on each edge in $\mE$. A \emph{$k$-shot (function-computing) network code} for $(\mathcal{N},f)$  consists of
\begin{itemize}
\item {\em a local encoding function} $\theta_{e}$ for each edge $e$ in $\mE$, where
\begin{equation}
  \theta_{e}:
  \begin{cases}
    \qquad \mA^k \rightarrow \Im\,\theta_e, & \text{if}~\mathrm{tail}(e)=\sigma_{i}~\text{for  $i=1,2$}; \\
    \prod\limits_{d\in \ein(\mathrm{tail}(e))} \Im\,\theta_d \rightarrow
    \Im\,\theta_e, & \text{otherwise;}
  \end{cases}\label{def:loc-encod-fun}
\end{equation}
with $\Im\,\theta_e$ being the image set of $\theta_e$;
and
 \item a decoding function  $\varphi:~\prod_{e\in \ein(\rho)} \Im\,\theta_e \rightarrow \Im f^{\,k}=\{0,1,2\}^k$ at the sink node $\rho$, which is used  to compute the target function $f$ with zero error.
\end{itemize}
With the causality of the encoding mechanism as specified in \eqref{def:loc-encod-fun}, the local encoding function $\theta_{e}$ for an output edge $e$ of a node depends  on the local encoding functions for the input edges of this node, so on and so forth.
Hence, we see that the message transmitted on each edge $e\in\mathcal{E}$ is a function of $(\vec{x},\vec{y})$, denoted by $g_{e}(\vec{x},\vec{y})$, which is obtained by recursively applying the local encoding
functions $\theta_{e},\,e\in\mathcal{E}$. To be specific, for each edge $e\in\mathcal{E}$,
\begin{equation}
  g_{e}(\vec{x},\vec{y})=
  \begin{cases}
    \quad \theta_{e}(\vec{x}), & \text{if}~\mathrm{tail}(e)=\sigma_{1}; \\
    \quad \theta_{e}(\vec{y}), & \text{if}~\mathrm{tail}(e)=\sigma_{2}; \\
    \quad\theta_{e}\big(g_{\mathcal{E}_{\text{i}}(u)}(\vec{x},\vec{y})\big), & \text{otherwise;}
  \end{cases}\label{def:glob-encod-fun}
\end{equation}
where $u=\text{tail}(e)$ and $g_{E}(\vec{x},\vec{y}) = \big(g_{e}(\vec{x},\vec{y}):~e\in E \big)$ for an edge subset $E\subseteq\mathcal{E}$ so that in \eqref{def:glob-encod-fun} we have $g_{\mathcal{E}_{\text{i}}(u)}(\vec{x},\vec{y})=\big(g_{e}(\vec{x},\vec{y}):~e\in\mathcal{E}_{{\rm i}}(u)\big)$. We call $g_{e}$ the \emph{global encoding function} for the edge $e$.
A $k$-shot network code $\mbC=\big\{\theta_e,e\in\mathcal{E};~\varphi\big\}$ is called {\em admissible} if the $k$ function values $f(\vec{x},\vec{y})$ for each pair of source messages $(\vec{x},\vec{y})\in\mA^k\times\mA^k$ can be computed with zero error at $\rho$ by using the code $\mbC$, i.e.,
\begin{equation}\label{def:decod-fun-zero-error}
\varphi\big(g_{\ein(\rho)}(\vec{x},\vec{y})\big)=f(\vec{x},\vec{y}),\quad\forall~(\vec{x},\vec{y}) \in \mA^k\times\mA^k.\nonumber
\end{equation}

For an admissible $k$-shot network code $\mbC=\big\{\theta_{e}:~e\in \mE;~\varphi \big\}$,
 we let $n_e\triangleq\big\lceil\log|\Im\,\theta_{e}|\big\rceil$ for each $e\in\mathcal{E}$, which is the number of bits transmitted on the edge $e$  in using the code $\mbC$. We further let
$n\triangleq \max_{e\in\mathcal{E}}n_e$, which can be regarded as the number of times the network $\mN$ is used to compute the function $f$ $k$ times in using the code $\mbC$. The {\em computing rate} of $\mbC$ is defined by $R(\mbC)\triangleq\frac{k}{n}$, i.e., the average number of times the function $f$ can be computed with zero error at $\rho$ for one use of the network $\mN$ by using the code $\mbC$.  Further, we say that a nonnegative real number $R$ is {\em (asymptotically) achievable} if $\forall~\epsilon > 0$, there exists an admissible $k$-shot network code $\mbC$ such that
$$R(\mbC) > R-\epsilon.$$
 Accordingly, the {\em computing rate region} for the  model $(\mN,f)$ is defined as
\begin{align}
\mathfrak{R}(\mN,f) \triangleq \Big\{ R:~R \text{ is asymptotically achievable for $(\mN,f)$} \Big\},\nonumber
\end{align}
which is also closed and bounded. Then, the {\em computing capacity} for $(\mN, f)$ is defined as
\begin{align}
\mC(\mN,f)\triangleq  \max~\mathfrak{R}(\mN, f).\nonumber
\end{align}

We remark that for the model $(\mathcal{N},f)$, the above definition of $k$-shot network codes is equivalent to the definition  in  \cite{Guang_Improved_upper_bound}. This is explained as follows. In \cite{Guang_Improved_upper_bound},  a coding scheme is said to be a $(k,n)$ network code for $(\mathcal{N},f)$, if this coding scheme can compute the target function $f$ $k$ times by transmitting at most $n$ symbols in $\mathcal{B}$ on each edge $e$  (cf.~\cite[Section~\Rmnum{2}-A]{Guang_Improved_upper_bound} for
the formal definition of a $(k,n)$ network code). The computing rate of a $(k,n)$ network code is defined by $k/n$.
Clearly,  a $(k,n)$ network code is a $k$-shot network code. On the other hand, by definition, a $k$-shot network code can be viewed as a  $(k,n)$ network code, where $n=\max_{e\in\mathcal{E}}\big\lceil\log|\Im\,\theta_{e}|\big\rceil$. This thus shows  that the two computing capacities induced by the two definitions of (function-computing) network codes are identical.

\subsection{The Tightness of the Best Known Upper Bound on Computing Capacity}

For the models $(\mathcal{N},f)$ and $(01;2,1;f)$, we will show that
\begin{equation}
\mathcal{C}(01;2,1;f)=\mathcal{C}(\mathcal{N},f).\nonumber
\end{equation}
First, we show that  for the model $(\mathcal{N},f)$, each admissible $k$-shot network code is equivalent to an admissible $k$-shot network code for which $\sigma_1$ transmits all the source symbols $x_1,x_2,\cdots,x_k$ to
$v_1$, and $\sigma_2$ transmits all the source symbols $y_1,y_2,\cdots,y_k$ to $v_1$ and $v_2$.
This immediately presents an equivalent  simplified  form of  admissible $k$-shot network codes for $(\mathcal{N},f)$.
  We consider an arbitrary admissible $k$-shot network code $\mbC=\big\{\theta_{e}:~e\in \mE;~\varphi \big\}$ for the model $(\mathcal{N},f)$, of which the computing rate is $k/n$ with $n=\max_{e\in\mathcal{E}}\big\lceil\log|\Im\,\theta_{e}|\big\rceil$.
  Further, we let $\big\{g_{e}:~e\in\mathcal{E}\big\}$ be the set of all the global encoding functions of the  code  $\mbC$. For any two different source messages $\vec{x}$ and $\vec{x}\,'$ in $\mathcal{A}^k$ and any source message $\vec{y}$ in $\mathcal{A}^k$ (which shows that $f(\vec{x},\vec{y})=\vec{x}+\vec{y}\neq\vec{x}\,'+\vec{y}= f(\vec{x}\,',\vec{y})$), we claim that
  \begin{equation}
  \big(g_{e_1}(\vec{x},\vec{y}),g_{e_2}(\vec{x},\vec{y})\big)\neq  \big(g_{e_1}(\vec{x}\,',\vec{y}),g_{e_2}(\vec{x}\,',\vec{y})\big),\label{eq:glob-fun-img-diff}
  \end{equation}
  because, otherwise, we have
  \begin{equation}
  \big(g_{e_1}(\vec{x},\vec{y}),g_{e_2}(\vec{x},\vec{y}),g_{e_3}(\vec{y})\big)= \big(g_{e_1}(\vec{x}\,',\vec{y}),g_{e_2}(\vec{x}\,',\vec{y}),g_{e_3}(\vec{y})\big),\nonumber
  \end{equation}
  which implies that
   \begin{equation}
  f(\vec{x},\vec{y})=\varphi\big(g_{e_1}(\vec{x},\vec{y}),g_{e_2}(\vec{x},\vec{y}),g_{e_3}(\vec{y})\big)= \varphi \big(g_{e_1}(\vec{x}\,',\vec{y}),g_{e_2}(\vec{x}\,',\vec{y}),g_{e_3}(\vec{y})\big)=f(\vec{x}\,',\vec{y}),
  \nonumber
  \end{equation}
  a contradiction. It thus follows from \eqref{eq:glob-fun-img-diff} that
  \begin{equation}
|\Im\,g_{e_1}|\cdot|\Im\,g_{e_2}|\geq\big|\Im\,(g_{e_1},g_{e_2})\big|\geq |\mathcal{A}|^k=2^k,\nonumber
\end{equation}
or equivalently,
\begin{equation}
\log|\Im\,g_{e_1}|+\log|\Im\,g_{e_2}|\geq k,\nonumber
\end{equation}
where
\begin{equation}
\Im\,(g_{e_1},g_{e_2})\triangleq\Big\{\big(g_{e_1}(\vec{x},\vec{y}),g_{e_2}(\vec{x},\vec{y})\big):~\forall~
(\vec{x},\vec{y})\in\mathcal{A}^k\times\mathcal{A}^k\Big\}.\nonumber
\end{equation}
This, together with the fact that $\Im\,\theta_{e}=\Im\,g_{e}$ for all $e\in\mathcal{E}$ by \eqref{def:glob-encod-fun}, implies that
\begin{equation}
2n\geq \log|\Im\,\theta_{e_1}|+\log|\Im\,\theta_{e_2}|=\log|\Im\,g_{e_1}|+\log|\Im\,g_{e_2}|\geq k,
\nonumber
\end{equation}
i.e., $n\geq k/2$, and thus $n\geq \lceil k/2\rceil$, where we recall that $n=\max_{e\in\mathcal{E}}\big\lceil|\log\Im\,\theta_e|\big\rceil$. Then, for each admissible $k$-shot network
code $\mbC$, at least $\lceil k/2 \rceil$ bits can be transmitted on each edge.
In particular, on each edge $d_{i},~1\leq i\leq 6$, at least $\lceil k/2 \rceil$ bits can be transmitted.
Thus, we assume without loss of generality that the node $v_1$ obtains all the $k$ source symbols $x_{1},x_{2},\cdots,x_{k}$ of $\sigma_1$ (i.e., $\vec{x}$) from the two input edges $d_1$ and $d_2$ and all the~$k$ source symbols~$y_{1},y_{2},\cdots,y_{k}$  of $\sigma_2$ (i.e., $\vec{y}$) from the two input edges $d_3$ and $d_4$; and the node $v_2$ obtains all the~$k$ source symbols $y_{1},y_{2},\cdots,y_{k}$   of $\sigma_2$ (i.e., $\vec{y}$)  from the two input edges $d_5$ and $d_6$.
   With the assumption, the inputs of the local encoding functions $\theta_{e_1}$ and $\theta_{e_2}$ are all possible pairs $(\vec{x},\vec{y})\in\mathcal{A}^k\times\mathcal{A}^k$, or equivalently,~$\theta_{e_1}$ and $\theta_{e_2}$ can be viewed as two functions from $\mathcal{A}^k\times\mathcal{A}^k$ to $\Im\,\theta_{e_1}$ and $\Im\,\theta_{e_2}$, respectively, denoted by $\theta_{e_1}(\vec{x},\vec{y})$ and $\theta_{e_2}(\vec{x},\vec{y})$. Similarly, the input of the local encoding function $\theta_{e_3}$ is all possible vectors $\vec{y}\in\mathcal{A}^k$ so that $\theta_{e_3}$ can be viewed as a function from $\mathcal{A}^k$ to $\Im\,\theta_{e_3}$, denoted by~$\theta_{e_3}(\vec{y})$.
  Hence, for an admissible $k$-shot network code $\mbC$,
 it suffices to consider the   local encoding functions~$\theta_{e_1}(\vec{x},\vec{y})$,~$\theta_{e_2}(\vec{x},\vec{y})$ and $\theta_{e_3}(\vec{y})$, which are written as $\theta_{1}$, $\theta_{2}$ and $\theta_{3}$, respectively, for notational simplicity. In the rest of the paper, we will use $\big\{\theta_1,\theta_2,\theta_3;~\varphi\big\}$ to stand for an admissible $k$-shot network code for the model~$(\mathcal{N},f)$.

 Based on the above discussion, we readily see that an admissible $k$-shot network code $\big\{\theta_1,\theta_2,\theta_3;~\varphi\big\}$ for the model $(\mathcal{N},f)$ can be viewed as an admissible $k$-shot code $\big\{\phi_{1},\phi_{2};~\psi\big\}$ for the model $(01;2,1;f)$ by setting $\phi_1=(\theta_1,\theta_2)$, $\phi_2=\theta_3$ and $\psi=\varphi$; and vice versa. We thus show
$\mathcal{C}(01;2,1;f)=\mathcal{C}(\mathcal{N},f)$.
By Theorem~\ref{thm:01;2,1-comp-cap}, we immediately present the computing capacity for the model $(\mathcal{N},f)$  below.

\begin{theorem}\label{thm:N_cap}
The computing capacity of the model $(\mathcal{N},f)$ is $\log_{3}6$, i.e.,
\begin{equation*}
\mathcal{C}(\mathcal{N},f)=\log_{3} 6.
\end{equation*}
\end{theorem}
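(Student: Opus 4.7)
My plan is to invoke the bijection between admissible $k$-shot network codes for $(\mathcal{N},f)$ and admissible $k$-shot codes for $(01;2,1;f)$ that was established in the discussion preceding the theorem. The theorem is essentially a corollary of Theorem~\ref{thm:01;2,1-comp-cap}; the real content is the argument (already spelled out) that the network $\mathcal{N}$, despite having a nontrivial internal structure with intermediate nodes $v_1,v_2$ and parallel source-to-intermediate edges, behaves identically to the two-encoder/one-decoder compression system $(01;2,1;f)$.

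First I would recall the key structural observation: for any admissible $k$-shot network code, at least $\lceil k/2\rceil$ bits must be transmitted on each of the edges $e_1$ and $e_2$, so in particular on each of the source-to-intermediate edges $d_1,\dots,d_6$ one may assume without loss of generality that the full source vector is forwarded. Consequently, $v_1$ sees $(\vec{x},\vec{y})$ and $v_2$ sees $\vec{y}$, which matches exactly the information pattern of the encoders $\mathbf{En1}$ and $\mathbf{En2}$ in the model $(01;2,1;f)$. The channel capacity constraints $C_1=2$ and $C_2=1$ in $(01;2,1;f)$ correspond to the two parallel bottleneck edges $(e_1,e_2)$ from $v_1$ to $\rho$ and the single edge $e_3$ from $v_2$ to $\rho$, respectively.

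Using this correspondence, I would set $\phi_1 = (\theta_1,\theta_2)$, $\phi_2 = \theta_3$, and $\psi = \varphi$ to convert an admissible $k$-shot network code for $(\mathcal{N},f)$ into an admissible $k$-shot code for $(01;2,1;f)$, and conversely split a pair of encoders $(\phi_1,\phi_2)$ back into three local encoding functions in the obvious way. Since $\big\lceil\log|\Im\,\theta_{e_i}|\big\rceil$ for $i=1,2$ corresponds to the number of uses of the $C_1=2$ channel via the two parallel edges, and similarly for $\theta_{e_3}$ and the $C_2=1$ channel, the number $n$ of network uses equals the number $n(\mbC)$ of system uses up to the standard ceiling accounting, so the computing rate $k/n$ and the compression rate $k/n(\mbC)$ coincide in the asymptotic limit. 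Hence the computing rate region and the compression rate region are identical, giving $\mathcal{C}(\mathcal{N},f) = \mathcal{C}(01;2,1;f)$.

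Finally, applying Theorem~\ref{thm:01;2,1-comp-cap} yields $\mathcal{C}(\mathcal{N},f) = \log_3 6$. The only subtle point—and the step I would take most care with—is to verify that the asymptotic ceiling accounting on the three edges $e_1,e_2,e_3$ of $\mathcal{N}$ matches the accounting $n = \max\{\lceil \log|\Im\,\phi_1|/2\rceil, \lceil \log|\Im\,\phi_2|\rceil\}$ used in the definition of the compression rate for $(01;2,1;f)$, but this follows cleanly since the two parallel edges $e_1,e_2$ carry together at most $2n$ bits in $n$ network uses, matching the capacity-$2$ channel from $\mathbf{En1}$ to $\mathbf{De}$, and the single edge $e_3$ plays the role of the capacity-$1$ channel. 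No additional estimates beyond Theorem~\ref{thm:01;2,1-comp-cap} are needed.
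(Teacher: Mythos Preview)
Your proposal is correct and follows essentially the same approach as the paper: the paper derives Theorem~\ref{thm:N_cap} as an immediate consequence of Theorem~\ref{thm:01;2,1-comp-cap} via the equivalence $\mathcal{C}(\mathcal{N},f)=\mathcal{C}(01;2,1;f)$, established by the correspondence $\phi_1=(\theta_1,\theta_2)$, $\phi_2=\theta_3$, $\psi=\varphi$ after reducing to the simplified form of network codes where $v_1$ receives $(\vec{x},\vec{y})$ and $v_2$ receives $\vec{y}$. Your flagging of the ceiling-accounting subtlety is appropriate; the paper treats this point implicitly, and your asymptotic matching argument is the right way to close it.
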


Further, the best known upper bound proved by Guang~\emph{et al.}~\cite{Guang_Improved_upper_bound} on the computing capacity $\mathcal{C}(\mathcal{N},f)$ is $\log_{3}8$, which is specified in Appendix~\ref{appendix:GYYB-bound-intro_and_com}. Thus, we have $\mathcal{C}(\mathcal{N},f)=\log_{3}6<\log_{3}8$, which immediately shows that in general the best known upper bound is not tight.

Similar to the equivalence of the models $(01;2,1;f)$ and $(\mathcal{N},f)$ as discussed above, we can also transform the model $(01;C_1,C_2;f)$ into an equivalent model of network function computation $\big(\mathcal{N}(C_1,C_2),f\big)$. The set of nodes is still $\mathcal{V}=\{\sigma_1,\sigma_2,v_1,v_2,\rho\}$, where, similarly, the two \emph{source nodes}~$\sigma_1$ and $\sigma_2$ are used to model the sources $\boldsymbol{X}$ and $\boldsymbol{Y}$, respectively; the two intermediate nodes $v_1$ and $v_2$ are used to model the two encoders $\Enone$ and $\Entwo$, respectively; and the single \emph{sink node} $\rho$ is used to model the decoder $\De$. There are $C_1$ edges from $v_1$ to $\rho$ and $C_2$ edges from $v_2$ to $\rho$; and $C_1$ edges from $\sigma_1$ to $v_1$, from $\sigma_2$ to $v_1$, and from $\sigma_2$ to $v_2$. We also can obtain that
 \begin{equation}
 \mathcal{C}\big(\mathcal{N}(C_1,C_2),f\big)=\mathcal{C}(01;C_1,C_2;f)=(C_1-C_2)\log_{3}2+C_2.\label{gen-net-mod-cap}
 \end{equation}
In addition, the upper bound of Guang~\emph{et al.} \cite{Guang_Improved_upper_bound} on the computing capacity $\mathcal{C}\big(\mathcal{N}(C_1,C_2),f\big)$ can be calculated as follows:
\begin{equation}
\mathcal{C}\big(\mathcal{N}(C_1,C_2),f\big)\leq
\begin{cases}
 C_1,\quad&\text{if}~C_2\leq C_1\leq \dfrac{C_2}{\log3-1};\\
 \dfrac{C_1+C_2}{\log 3},\quad&\text{if}~C_1> \dfrac{C_2}{\log3-1}.
\end{cases}\label{C1-C2-Guang-upper-bound-value}
\end{equation}
Comparing the above upper bound \eqref{C1-C2-Guang-upper-bound-value} with the computing capacity
$C\big(\mathcal{N}(C_1,C_2),f\big)$ in \eqref{gen-net-mod-cap},  we thus show that the upper bound of Guang~\emph{et al.} \cite{Guang_Improved_upper_bound} is not tight for the model  $\big(\mathcal{N}(C_1,C_2),f\big)$ with $C_1>C_2$.

We finally remark that it was also claimed in \cite{Wang-Tightness_upper_bound} that the upper bound of Guang~\emph{et al.} \cite{Guang_Improved_upper_bound} is not tight for a specific model. However, the proof in \cite{Wang-Tightness_upper_bound} is invalid. Specifically, in \cite[Lemma~1]{Wang-Tightness_upper_bound}, it was claimed that $\gamma_{3^{k}-1}\geq 4^{k}-2^{k-1}$ for any positive integer $k>1$, which is incorrect. This thus causes their upper bound to be invalid. In fact, it can be proved that $\gamma_{3^{k}-1}=3\cdot 2^{k-1}<4^{k}-2^{k-1}$, $\forall~k>1$. Following the same argument with this correct equation, we can obtain the valid upper bound $1$, the same as the best known upper bound in \cite{Guang_Improved_upper_bound}.

}


\begin{journalonly}

\begin{figure}
	\centering
\tikzstyle{format}=[draw,circle,fill=gray!30,minimum size=6pt, inner sep=0pt]
	\begin{tikzpicture}
		\node[format](a1)at(-3,3.5){};
		\node[format](a2)at(0,3.5){};
		\node[format](a3)at(3,3.5){};
		\node[format](r1)at(-1.5,1.75){};
		\node[format](r2)at(1.5,1.75){};
		\node[format](p)at(0,0){};
		\draw[->,>=latex](a1)--(r1) node[midway, auto,swap, left=0mm] {\small{$d_{1}$}};
		\draw[->,>=latex](a3)--(r1) ;
		\draw[->,>=latex](a2)--(r1) node[midway, auto,swap, left=0mm] {\small{$d_{2}$}};
		\draw[->,>=latex](a2)--(r2) ;
		\draw[->,>=latex](r1)--(p)  node[midway, auto,swap, left=0mm] {\small{$e_{1}$}};
		\draw[->,>=latex](r2)--(p)  node[midway, auto,swap, right=0mm] {\small{$e_{2}$}};
		\draw[->,>=latex](a3)--(r2) node[midway, auto,swap, right=0mm] {\small{$d_{5}$}};
		\node at (-3,3.9){ $\sigma_{1}$};
		\node at (0,3.9){ $\sigma_{2}$};
		\node at (3,3.9){ $\sigma_{3}$};
		\node at (-1.75,1.45){ $v_{1}$};
		\node at (1.75,1.45){ $v_{2}$};
		\node at (0,-0.4){ $\rho$};
         \node at(0.52,3.22)  {\small{$d_{3}$}};
         \node at(1.6,3.22)  {\small{$d_{4}$}};
	\end{tikzpicture}
	\caption{The  $(\widehat{\mathcal{N}},\widehat{f})$, where $\widehat{\mathcal{N}}=\big(\widehat{\mathcal{G}},\widehat{S}=\{\sigma_1,\sigma_2,\sigma_3\},\rho\big)$
and $\widehat{f}$ is the binary arithmetic sum $\widehat{f}(x_{1},x_{2},x_3)=x_{1}+x_{2}+x_{3}$.  }
	\label{fig:N_1}
\end{figure}

{\color{blue}
\section{The Issues in Literature~\cite{Wang-Tightness_upper_bound}}\label{sec:issues-Wang}

In \cite{Wang-Tightness_upper_bound}, Wang \emph{et al.} considered a
specific network function computation   model  $(\widehat{\mathcal{N}},\widehat{f})$ as depicted in Fig.\,\ref{fig:N_1}.
The network $\widehat{\mathcal{N}}=\big(\widehat{\mathcal{G}},\widehat{S},\rho\big)$ consists of the directed acyclic graph  $\widehat{\mathcal{G}}$ with the set of nodes $\widehat{\mathcal{V}}=\{\sigma_1,\sigma_2,\sigma_3,v_1,v_2,\rho\}$ and the set of edges $\widehat{\mathcal{E}}=\{d_i:1\leq i\leq 5;~e_1,e_2\}$, the set of three source nodes
$\widehat{S}=\{\sigma_{1},\sigma_{2},\sigma_{3}\}$ and the single sink node $\rho$. The target function $\widehat{f}$ is specified to be the  \emph{binary arithmetic sum}
\begin{equation*}
\widehat{f}(x_{1},x_{2},x_3)=x_{1}+x_{2}+x_{3},~~ x_{1},x_{2},x_{3}\in\{0,1\}.
\end{equation*}
($\mathcal{A}_{1}=\mathcal{A}_{2}=\mathcal{A}_{3}=\{0,1\}\triangleq\mathcal{A}$ and $\mathcal{O}=\{0,1,2,3\}$). Furthermore,  $\mathcal{B}=\{0,1\}$, i.e., a bit can be transmitted on each edge reliably  for each use.
 The technique based on graph coloring and linear programming  is used to characterize the corresponding
computing capacity
\begin{equation}\mathcal{C}(\widehat{\mathcal{N}},\widehat{f})=\frac{1}{2}\log_{3}6<1,
 \end{equation}
 where   the upper bound of Guang~\emph{et al.}~\cite{Guang_Improved_upper_bound} on $\mathcal{C}(\widehat{\mathcal{N}},\widehat{f})$ is equal to $1$. Wang \emph{et al.}~\cite{Wang-Tightness_upper_bound} thus proved  the non-tightness of the upper bound of Guang~\emph{et al.} \cite{Guang_Improved_upper_bound}.

However, in this section, we will show the invalidity of the arguments.
Specifically, Section~\ref{subsec:Wang_tech-intro} briefly introduces the technique   to characterize
the computing capacity $\frac{1}{2}\log_{3}6$ of the model  $(\widehat{\mathcal{N}},\widehat{f})$. Section~\ref{subsec:critical_error} gives a counterexample  to illustrate the critical error in this technique. Section~\ref{subsec:corr_tech} further prove that even if this error is corrected, similarly by using the technique proposed by \cite{Wang-Tightness_upper_bound}, we can only obtain  the same upper bound $1$ (instead of $\frac{1}{2}\log_{3}6$) as the upper bound of Guang~\emph{et al.} \cite{Guang_Improved_upper_bound}.

\subsection{The Technique of Characterizing the Computing Capacity}\label{subsec:Wang_tech-intro}

In this subsection, we  will briefly introduce  this technique proposed by \cite{Wang-Tightness_upper_bound}. Following \cite{Wang-Tightness_upper_bound},
we first present a simplified  form of  admissible $k$-shot network codes for the model $(\widehat{\mathcal{N}},\widehat{f})$ in Fig.\,\ref{fig:N_1}, and give some necessary definitions.

 We consider an arbitrary admissible $k$-shot network code $\mbC=\big\{\theta_{e}:~e\in \widehat{\mE};~\varphi \big\}$ for the model $(\widehat{\mathcal{N}},\widehat{f})$, whose the computing  rate $k/n$ with $n=\max_{e\in\widehat{\mathcal{E}}}\big\lceil\log|\Im\,\theta_{e}|\big\rceil$. By  $\mathcal{C}(\widehat{\mathcal{N}},\widehat{f})\leq 1$ (the upper bound of Guang~\emph{et al.}~\cite{Guang_Improved_upper_bound}),  we have $k/n\leq\mathcal{C}(\widehat{\mathcal{N}},\widehat{f})\leq 1$.  This immediately implies  $n\geq k$. Thus, for each admissible $k$-shot network
code $\mbC$, at least $k$ bits can be transmitted on each edge, or equivalently, the network can be used $k$ times at least.
In particular, on each edge $d_{i},~1\leq i\leq 5$, at least $k$ bits can be transmitted.
 Thus, we assume without loss of generality that the node $v_1$ obtains all the $k$ source symbols $x_{1,1},x_{1,2},\cdots,x_{1,k}$ of $\sigma_1$ (i.e., $\vec{x}_1$) from the input edge $d_1$, all the $k$ source symbols $x_{2,1},x_{2,2},\cdots,x_{2,k}$  of $\sigma_2$ (i.e., $\vec{x}_2$) from the  input edge $d_2$ and all the $k$ source symbols $x_{3,1},x_{3,2},\cdots,x_{3,k}$  of $\sigma_3$ (i.e., $\vec{x}_3$) from the  input edge $d_4$; and the node $v_2$ also obtains all the $k$ source symbols $x_{2,1},x_{2,2},\cdots,x_{2,k}$   of $\sigma_2$ (i.e., $\vec{x}_2$)  from the input edge $d_3$ and all the $k$ source symbols $x_{3,1},x_{3,2},\cdots,x_{3,k}$  of $\sigma_3$ (i.e., $\vec{x}_3$) from the  input edge $d_5$.
 With the assumption, the input of the local encoding function $\theta_{e_1}$ is all possible pairs $(\vec{x}_1,\vec{x}_2,\vec{x}_3)\in\mathcal{A}^k\times\mathcal{A}^k\times\mathcal{A}^k$, or equivalently, $\theta_{e_1}$ can be viewed as a function from $\mathcal{A}^k\times\mathcal{A}^k\times\mathcal{A}^k$ to $\Im\,\theta_{e_1}$, denoted by $\theta_{e_1}(\vec{x}_1,\vec{x}_2,\vec{x}_3)$. Similarly, the input of the local encoding function $\theta_{e_2}$ is all possible pairs $(\vec{x}_2,\vec{x}_3)\in\mathcal{A}^k\times\mathcal{A}^k$, i.e., $\theta_{e_3}$ can be viewed as a function from $\mathcal{A}^k\times\mathcal{A}^k$ to $\Im\,\theta_{e_2}$, denoted by $\theta_{e_2}(\vec{x}_2,\vec{x}_3)$.
In fact, the edge subset $C=\{e_1,e_2\}$ is the bottleneck on the computing capacity.
 Then, it is sufficient to consider the  local encoding functions $\theta_{e_1}(\vec{x}_1,\vec{x}_2,\vec{x}_3)$ and $\theta_{e_2}(\vec{x}_2,\vec{x}_3)$,
  denoted by $\theta_{\rm{\Rmnum{1}}}(\vec{x}_1,\vec{x}_2,\vec{x}_3)$ and $\theta_{\rm{\Rmnum{2}}}(\vec{x}_2,\vec{x}_3)$, respectively, for notational simplicity. On the other hand, because we need to compute the arithmetic sum  at the sink node $\rho$, it is not necessary to distinguish $(\vec{x}_{2},\vec{x}_{3})$ and
$(\vec{x}\,'_{2},\vec{x}\,'_{3})$ if $\vec{x}_{2}+\vec{x}_{3}=
\vec{x}_{2}'+\vec{x}_{3}'$.
Then, we further assume that
\begin{equation*}\theta_{\rm{\Rmnum{1}}}(\vec{x}_{1},\vec{x}_{2},\vec{x}_{3})=\theta_{\rm{\Rmnum{1}}}(\vec{x}_{1},\vec{x}\,'_{2},\vec{x}\,'_{3})
\quad\text{and}\quad\theta_{\rm{\Rmnum{2}}}(\vec{x}_{2},\vec{x}_{3})=\theta_{\rm{\Rmnum{2}}}(\vec{x}\,'_{2},\vec{x}\,'_{3})
\end{equation*}
 whenever $\vec{x}_{2}+\vec{x}_{3}=
\vec{x}_{2}'+\vec{x}_{3}'$. Similarly, this assumption has no penalty either on the computing capacity. For notational simplicity, in the rest of the paper, we  let
\begin{equation}
\vec{x}=\vec{x}_{1}\in\{0,1\}^{k}\quad\text{ and}\quad \vec{y}=\vec{x}_{2}+\vec{x}_{3}\in\{0,1,2\}^{k},
\end{equation} and write $\theta_{\rm{\Rmnum{1}}}(\vec{x},\vec{y})$ and $\theta_{\rm{\Rmnum{2}}}(\vec{y})$ replacing
$\theta_{\rm{\Rmnum{1}}}(\vec{x}_{1},\vec{x}_{2},
\vec{x}_{3})$  and $\theta_{\rm{\Rmnum{2}}}(\vec{x}_{2},\vec{x}_{3})$, respectively.
The sink node $\rho$ requires to solve $\vec{x}+\vec{y}$. Further, we will use $\{\theta_{\rm{\Rmnum{1}}},\theta_{\rm{\Rmnum{2}}};~\varphi\}$ to stand for an admissible $k$-shot network code for the model $(\widehat{\mathcal{N}},\widehat{f})$.

We let $k$ be a positive integer.
For convenience, we let $\mathcal{Y}=\{0,1,2\}$.
For a subset $\mathcal{M}\subseteq \mathcal{A}^{k}$ and a subset $\mathcal{L}\subseteq \mathcal{Y}^{k}$, we let
\begin{equation}
\mathcal{M}+\mathcal{L}\triangleq\{\vec{x}+\vec{y}:\text{$\vec{x}\in\mathcal{M}$ and $\vec{y}\in\mathcal{L}$ }\}.\label{eq:pf-W-def-set-M+L}
\end{equation}
We consider an arbitrary admissible $k$-shot network code $\{\theta_{\rm{\Rmnum{1}}},\theta_{\rm{\Rmnum{2}}};~\varphi\}$ for the model $(\widehat{\mathcal{N}},\widehat{f})$. By the discussion in the paragraph immediately below Section~\ref{subsec:Wang_tech-intro}, we note that the domain of $\theta_{\rm{\Rmnum{2}}}$ is $\mathcal{Y}^{k}$ and thus $1\leq |\Im\,\theta_{\rm{\Rmnum{2}}}|\leq |\mathcal{Y}^k|=3^{k}$. Then for any positive integer $m$ with $1\leq m\leq 3^k$,  we let
\begin{equation}
\begin{aligned}
\gamma_{m}\triangleq\min\big\{|\Im\,\theta_{\rm{\Rmnum{1}}}|:\text{$\{\theta_{\rm{\Rmnum{1}}},\theta_{\rm{\Rmnum{2}}};~\varphi\}$ is an admissible $k$-shot }\big.\\\big.\text{network code with $|\Im\,\theta_{\rm{\Rmnum{2}}}|=m$}\big\}.\label{eq:def:gamma_m}
\end{aligned}
\end{equation}
We remark that for any local encoding function $\theta_{\rm{\Rmnum{2}}}$ with $1\leq |\Im\,\theta_{\rm{\Rmnum{2}}}|=m\leq 3^k$, there always exists a local encoding function $\theta_{\rm{\Rmnum{1}}}$ such that $\{\theta_{\rm{\Rmnum{1}}},\theta_{\rm{\Rmnum{2}}};~\varphi\}$  is an admissible $k$-shot network code, and
thus $\gamma_m$ is well-defined for each $m$ with $1\leq m\leq 3^k$. To be specific, for any $(\vec{x},\vec{y})\in\mathcal{A}^k\times\mathcal{Y}^k$, we let
\begin{equation}
\theta_{\rm{\Rmnum{1}}}(\vec{x}, \vec{y})=\vec{x}+\vec{y},\quad\forall~(\vec{x},\vec{y})\in\mathcal{A}^k\times\mathcal{Y}^k
\end{equation}
regardless of $\theta_{\rm{\Rmnum{2}}}$. We readily see that this $k$-shot network code $\{\theta_{\rm{\Rmnum{1}}},\theta_{\rm{\Rmnum{2}}};~\varphi\}$ can
compute $\vec{x}+\vec{y}$ with zero error by setting  the decoding function $\varphi(\theta_{\rm{\Rmnum{1}}},\theta_{\rm{\Rmnum{2}}})=\theta_{\rm{\Rmnum{1}}}$.
This shows that the code $\{\theta_{\rm{\Rmnum{1}}},\theta_{\rm{\Rmnum{2}}};~\varphi\}$ is admissible.

Next we will present an equivalent form of $\gamma_m$ ($1\leq m\leq 3^k$), which will be specified in the following \eqref{eq:def:gamma_m}. Before that, we first give the definitions of a conflict graph $S$ and a class of subgraphs of $S$.

\begin{definition}\label{def:conf-graph-S^k}
The conflict graph of $\mathcal{A}^k+\mathcal{Y}^k$, denoted by  $S$, is defined as follows: there are $|\mathcal{A}|^k\cdot|\mathcal{Y}|^k$ vertices in the graph labeled by all the pairs $(\vec{x}, \vec{y})$ in $\mathcal{A}^k\times\mathcal{Y}^k$, and there exists an undirected edge between two vertices $(\vec{x}, \vec{y})$ and $(\vec{x}\,',\vec{y}\,')$ if and only if $\vec{x}+\vec{y}\neq \vec{x}\,'+\vec{y}\,'$.
\end{definition}

 We can readily see that for an admissible $k$-shot network  code $\{\theta_{\rm{\Rmnum{1}}},\theta_{\rm{\Rmnum{2}}};~\varphi\}$,  $(\theta_{\rm{\Rmnum{1}}},\theta_{\rm{\Rmnum{2}}})$ is a coloring of the conflict graph $S$, because for any two adjacent vertices $(\vec{x},\vec{y})$ and $(\vec{x}\,',\vec{y}\,')$ (i.e., $\vec{x}+\vec{y}\neq\vec{x}+\vec{y}\,'$), we have
\begin{equation*}
\big(\theta_{\rm{\Rmnum{1}}}(\vec{x},\vec{y}), \theta_{\rm{\Rmnum{2}}}(\vec{y})\big)\neq \big(\theta_{\rm{\Rmnum{1}}}(\vec{x}\,',\vec{y}\,'), \theta_{\rm{\Rmnum{2}}}(\vec{y}\,'
)\big).
\end{equation*}
For a subset $\mathcal{L}\subseteq\mathcal{Y}^k$, denote by  $S(\mathcal{L})$ the  subgraph of $S$, where all the vertices in $S(\mathcal{L})$  are labeled by the pairs $(\vec{x}, \vec{y})$ with $
\vec{x}\in\mathcal{A}^k$ and $\vec{y}\in \mathcal{L}$, and there exists an undirected edge between two different vertices $(\vec{x}, \vec{y})$ and $(\vec{x}\,',\vec{y}\,')$ in $S(\mathcal{L})$ if and
only if $\vec{x}+\vec{y}\neq \vec{x}\,'+\vec{y}\,'$. Further, we denote by $\gamma\big(S(\mathcal{L})\big)$  the minimum chromatic number of the subgraph $S(\mathcal{L})$.
Then, as mentioned in \cite{Wang-Tightness_upper_bound}, it is not difficult to see that for any positive integer $m$ with $1\leq m\leq 3^k$,
\begin{equation}
	\gamma_{m}=\min_{\substack{\text{all partitions }\\ \text{$\{P_1,P_2,\cdots,P_m\}$ of $\mathcal{Y}^k$}}}\max_{1\leq i\leq m}\gamma\big(S(P_i)\big).\label{eq:gamma_m=min_max_color}
\end{equation}

Let $\mathcal{C}(k)$ be the maximum achievable rate  of  all admissible $k$-shot network codes and their linear combinations (time sharing schemes). By the definition of $\gamma_{m}$ (cf.~\eqref{eq:def:gamma_m}), we immediately obtain that $\mathcal{C}(k)=\frac{k}{\,\widehat{n}\,}$, where $\widehat{n}$ is derived by the following optimization problem:
 \begin{align}
\widehat{n}=&\min_{\lambda_{1},\cdots,\lambda_{3^{k}}}\max\Bigg\{\sum_{m=1}^{3^{k}}\lambda_{m}\log\gamma_{m},
\sum_{m=1}^{3^{k}}\lambda_{m}\log m\Bigg\}\nonumber\\
&\text{s.t.}\;\sum_{m=1}^{3^{k}}\lambda_{m}=1;\;\;\lambda_{m}\geq 0, ~\text{for}~1\leq m\leq 3^{k}.\label{time-shar-opti-pro-C(k)}
 \end{align}
 Equivalently, we can derive $\mathcal{C}(k)$ by solving the \emph{optimization problem}:
\begin{equation}
	\begin{split}
		\max_{\alpha_{1},\ldots,\alpha_{3^{k}}}k\sum_{m=1}^{3^{k}}\alpha_{m}\;\;
		\text{s.t.}&\sum_{m=1}^{3^{k}}\alpha_{m}(\log\gamma_{m},\log m)\leq (1,1);\\
		&\alpha_{m}\geq 0\;\;\text{for}\;\; m= 1,2,\ldots,3^{k}.
	\end{split}\label{eq:opti_problem_C_k}
\end{equation}

{\color{red}
Now, we prove the equivalence of the above two optimization problems, i.e.,
$\frac{k}{\,\,\widehat{n}\,\,}=\widehat{r}$, where we denote by $\widehat{r}$ the optimal value of the problem \eqref{eq:opti_problem_C_k}.  We first prove that
\begin{equation}
\widehat{r}\geq \frac{k}{\,\,\widehat{n}\,\,}.
\end{equation}
Let $\widehat{\lambda}_1,\widehat{\lambda}_2,\cdots,\widehat{\lambda}_{3^k}$ be an optimal solution achieving the optimal value $\widehat{n}$ of the problem \eqref{time-shar-opti-pro-C(k)}. Clearly, we have $\sum_{m=1}^{3^{k}}\widehat{\lambda}_{m}=1$ and
\begin{equation}
\max\Bigg\{\sum_{m=1}^{3^{k}}\widehat{\lambda}_{m}\log\gamma_{m},
\sum_{m=1}^{3^{k}}\widehat{\lambda}_{m}\log m\Bigg\}=\widehat{n}.\label{opti-pro-eqv-pf-eq1}
\end{equation}
It follows from \eqref{opti-pro-eqv-pf-eq1} that
\begin{equation}
\sum_{m=1}^{3^{k}}\widehat{\lambda}_{m}\big(\log\gamma_{m},
\log m\big)\leq\big(\widehat{n},\widehat{n}\big),
\end{equation}
or equivalently,
\begin{equation}
\sum_{m=1}^{3^{k}}\frac{\widehat{\lambda}_{m}}{\widehat{n}}\big(\log\gamma_{m},
\log m\big)\leq(1,1).\label{opti-pro-eqv-pf-eq2}
\end{equation}
By \eqref{opti-pro-eqv-pf-eq2}, we can readily see that $$\frac{\widehat{\lambda}_{1}}{\,\,\widehat{n}\,\,},\frac{\widehat{\lambda}_{2}}{\,\,\widehat{n}\,\,},\cdots,
\frac{\widehat{\lambda}_{3^k}}{\,\,\widehat{n}\,\,}$$ is a solution of the problem \eqref{eq:opti_problem_C_k}, in which we regard $\frac{\widehat{\lambda}_{m}}{\widehat{n}}$ as $\alpha_m$ in \eqref{eq:opti_problem_C_k} for $1\leq m\leq 3^k$. This immediately follows that
\begin{equation}
\widehat{r}\geq k\sum_{m=1}^{3^{k}}\frac{\widehat{\lambda}_{m}}{\widehat{n}}=\frac{k}{\,\,\widehat{n}\,\,},
\label{opti-pro-eqv-pf-eq3}
\end{equation}
where the equality in \eqref{opti-pro-eqv-pf-eq3} follows from that $\sum_{m=1}^{3^{k}}\widehat{\lambda}_{m}=1$.

Next, we prove that
\begin{equation}
\widehat{r}\leq \frac{k}{\,\,\widehat{n}\,\,}.
\end{equation}
Let  $\widehat{\alpha}_1,\widehat{\alpha}_2,\cdots,\widehat{\alpha}_{3^k}$ be an optimal solution achieving the optimal value $\widehat{r}$ of the problem \eqref{eq:opti_problem_C_k}.
Clearly, we have
\begin{equation}
\sum_{m=1}^{3^{k}}\widehat{\alpha}_{m}(\log\gamma_{m},\log m)\leq (1,1),\label{opti-pro-eqv-pf-eq4}
\end{equation}
and
\begin{equation}
\widehat{r}=k\sum_{m=1}^{3^{k}}\widehat{\alpha}_{m}.\label{opti-pro-eqv-pf-eq5}
\end{equation}
We further
let \begin{equation}
\lambda_{m}=\frac{\widehat{\alpha}_{m}}{\sum_{i=1}^{3^{k}}\widehat{\alpha}_{i}},\quad 1\leq m\leq 3^k.
\end{equation}
First, we can readily see that $\lambda_1,\lambda_2,\cdots,\lambda_{3^k}$
is a solution of the problem \eqref{time-shar-opti-pro-C(k)}, because
\begin{equation}
\sum_{m=1}^{3^k}\lambda_{m}=\sum_{m=1}^{3^k}\frac{\widehat{\alpha}_{m}}{\sum_{i=1}^{3^{k}}\widehat{\alpha}_{i}}=1.
\end{equation}
It further follows from \eqref{opti-pro-eqv-pf-eq4} that
\begin{align}
\sum_{m=1}^{3^{k}}\lambda_{m}(\log\gamma_{m},\log m)&=
\sum_{m=1}^{3^{k}}\frac{\widehat{\alpha}_{m}}{\sum_{i=1}^{3^{k}}\widehat{\alpha}_{i}}(\log\gamma_{m},\log m)\\&=
\frac{1}{\sum_{i=1}^{3^{k}}\widehat{\alpha}_{i}}\cdot\sum_{m=1}^{3^{k}}\widehat{\alpha}_{m}
(\log\gamma_{m},\log m)\nonumber\\
&\leq\frac{1}{\sum_{i=1}^{3^{k}}\widehat{\alpha}_{i}}\cdot (1,1)
= \Big(\frac{1}{\sum_{i=1}^{3^{k}}\widehat{\alpha}_{i}},\frac{1}{\sum_{i=1}^{3^{k}}\widehat{\alpha}_{i}}\Big),\label{opti-pro-eqv-pf-eq6}
\end{align}
or equivalently,
\begin{equation}
\max\Bigg\{\sum_{m=1}^{3^{k}}\lambda_{m}\log\gamma_{m},
\sum_{m=1}^{3^{k}}\lambda_{m}\log m\Bigg\}\leq \frac{1}{\sum_{i=1}^{3^{k}}\widehat{\alpha}_{i}}.\label{opti-pro-eqv-pf-eq6}
\end{equation}
By \eqref{opti-pro-eqv-pf-eq6}, we thus obtain that
\begin{equation}
\widehat{n}\leq\max\Bigg\{\sum_{m=1}^{3^{k}}\lambda_{m}\log\gamma_{m},
\sum_{m=1}^{3^{k}}\lambda_{m}\log m\Bigg\}\leq \frac{1}{\sum_{i=1}^{3^{k}}\widehat{\alpha}_{i}},
\end{equation}
i.e.,
\begin{equation}
\widehat{r}=k\sum_{i=1}^{3^{k}}\widehat{\alpha}_{i}\leq \frac{k}{\,\,\widehat{n}\,\,}.\label{opti-pro-eqv-pf-eq7}
\end{equation}
Combining \eqref{opti-pro-eqv-pf-eq3} and \eqref{opti-pro-eqv-pf-eq7}, we have proved the equivalence, i.e., $\widehat{r}=\frac{k}{\,\,\widehat{n}\,\,}=\mathcal{C}(k)$.
}

 As proved in \cite[Section~\Rmnum{3}-A]{Wang-Tightness_upper_bound}, for the special case $k=1$,  $\gamma_{1}=4$, $\gamma_{2}=3$ and $\gamma_{3}=4$.
  Hence, we  derive $\mathcal{C}(1)=\frac{1}{2}\log_{3}6$ by solving the optimization problem \eqref{eq:opti_problem_C_k} explicitly. This also gives a lower bound $\frac{1}{2}\log_{3}6$ on
 the computing capacity $\mathcal{C}(\widehat{\mathcal{N}},\widehat{f})$, i.e.,
 \begin{equation}
 \mathcal{C}(\widehat{\mathcal{N}},\widehat{f})\geq\mathcal{C}(1)=\frac{1}{2}\log_{3}6.\label{eq:C(1)-value}
 \end{equation}

Next, the authors  obtained an upper bound on the optimal value of \eqref{eq:opti_problem_C_k} for the case  $k> 1$. The details are as follows.
 For any positive integer $k>1$, the authors claimed in \cite[Lemma~1]{Wang-Tightness_upper_bound} that $\gamma_{1}=4^{k}$,  $\gamma_{3^{k}}=2^{k}$ and the following invalid inequality:
\begin{equation}
\gamma_{3^{k}-1}\geq 4^{k}-2^{k-1},\label{ineq:error_in_Wang}
\end{equation}
where we remark that
the invalidity of \eqref{ineq:error_in_Wang} will be proved in Example~\ref{counterexample} of Section~\ref{subsec:critical_error}.
In addition, it is also shown in \cite[Lemma~2]{Wang-Tightness_upper_bound} that
$\gamma_{m}\geq\gamma_{m+1}$ for $m = 1, 2,\cdots, 3^{k}-1$.
With the above preparation, the authors studied an upper bound on the optimal value by
relaxing the constraint in the problem \eqref{eq:opti_problem_C_k}. Since  $\gamma_{m}\geq\gamma_{3^{k}-1}\geq 4^{k}-2^{k-1}$ for $m =  2,3,\cdots, 3^{k}-1$, then the authors  formulated \emph{a relaxed optimization problem} by reducing $(\gamma_{m},m)$ to $(4^{k}-2^{k-1},2)$ for $m=2,3,\cdots,3^{k}-1$:
\begin{equation}
	\begin{split}
		\max_{\alpha_{1},\beta,\alpha_{3^{k}}}&k(\alpha_{1}+\beta+\alpha_{3^{k}})\\
		\text{s.t.}\quad&2k\alpha_{1}+\beta\log(4^{k}-2^{k-1})+k\alpha_{3^{k}}\leq 1;\\
		&\beta+\alpha_{3^{k}}\log3^{k}\leq 1; \quad\;\alpha_{1},\beta,\alpha_{3^{k}}\geq 0.
	\end{split}\label{eq:opti_problem_error}
\end{equation}
where $\beta=\sum_{m=2}^{3^{k}-1}\alpha_{m}$. The problem \eqref{eq:opti_problem_error} can be solved explicitly (by solving its dual problem) and the optimal value is $\frac{1}{2}\log_{3}6$. Since \eqref{eq:opti_problem_error} is a relaxed optimization problem of \eqref{eq:opti_problem_C_k},  $\frac{1}{2}\log_{3}6$ is also an \emph{upper bound} on $\mathcal{C}(k)$, i.e., $\mathcal{C}(k)\leq \frac{1}{2}\log_{3}6$ for $ k>1$.
Together with $\mathcal{C}(1)=\frac{1}{2}\log_{3}6$, the authors thus obtained that
 \begin{equation*}
\frac{1}{2}\log_{3}6=\mathcal{C}(1)\leq\mathcal{C}(\widehat{\mathcal{N}},\widehat{f})=\sup\{\mathcal{C}(k):k\geq1\}\leq\frac{1}{2}\log_{3}6,
\end{equation*}
i.e., the computing capacity $\mathcal{C}(\widehat{\mathcal{N}},\widehat{f})=\frac{1}{2}\log_{3}6$.

\subsection{The Issues in The  Technique}\label{subsec:critical_error}

In this subsection, we will use a counterexample to show that the claim \eqref{ineq:error_in_Wang} is not correct.  This incorrect claim, which is used to induce the  invaild relaxed optimization problem \eqref{eq:opti_problem_error}, will lead eventually an invalid upper bound $\frac{1}{2}\log_{3}6$ on $\mathcal{C}(\widehat{\mathcal{N}},\widehat{f})$.

\begin{example}\label{counterexample}
For any positive integer $k$,
we will construct an admissible $k$-shot network code $\{\theta_{\rm{\Rmnum{1}}},\theta_{\rm{\Rmnum{2}}};~\varphi\}$ as follows.
Let $\vec{0}$ be  a $k$-vector whose all components are equal to $0$, and  $\vec{e}_{1}\in\mathcal{Y}^{k}$ be a $k$-vector whose the first component is equal to $1$ while all other components are equal to $0$. The local encoding functions $\theta_{\rm{\Rmnum{1}}}$ and $\theta_{\rm{\Rmnum{2}}}$ are defined as follows:
 	\begin{align*}
		\begin{split}
			\theta_{\rm{\Rmnum{1}}}(\vec{x},\vec{y})= \left \{
			\begin{array}{ll}
				\vec{x}+\vec{y},     \quad\quad            &\text{\rm{if}}\;\;\vec{x}\in\mathcal{A}^{k}, \vec{y}\in \{\vec{0},\vec{e}_{1}\};  \\
				\vec{x},                   &\text{\rm{if}}\;\;\vec{x}\in\mathcal{A}^{k}, \vec{y}\in\mathcal{Y}^{k}\backslash\{\vec{0},\vec{e}_{1}\},\\
			\end{array}
			\right.
		\end{split}\label{construct_h_1}
	\end{align*}
	and
	\begin{align*}
		\begin{split}
			\theta_{\rm{\Rmnum{2}}}(\vec{y})= \left \{
			\begin{array}{ll}
				\vec{0},          \quad\quad      &\text{\rm{if}}\;\;\vec{y}\in\{\vec{0},\vec{e}_{1}\}; \\
				\vec{y}.           &\text{\rm{if}}\;\; \vec{y}\in\mathcal{A}^{k}\backslash\{\vec{0},\vec{e}_{1}\}.\\
			\end{array}
			\right.
		\end{split}
	\end{align*}
	On the other hand,  the decoding function $\varphi$ is defined by
\begin{equation*}
		\varphi\big( \theta_{\rm{\Rmnum{1}}}(\vec{x},\vec{y}),\theta_{\rm{\Rmnum{2}}}(\vec{y})\big)=\theta_{\rm{\Rmnum{1}}}(\vec{x},\vec{y}
		)+\theta_{\rm{\Rmnum{2}}}(\vec{y}).
\end{equation*}
It is easily seen  that  $\{\theta_{\rm{\Rmnum{1}}},\theta_{\rm{\Rmnum{2}}};~\varphi\}$ is indeed an admissible $k$-shot network code with
	\begin{equation}
| \Im\,\theta_{\rm{\Rmnum{1}}}|=3\times2^{k-1}\quad\text{and}\quad|\Im\,\theta_{\rm{\Rmnum{2}}}|=3^{k}-1 .\end{equation}
By the definition of $\gamma_{3^{k}-1}$ (see~\eqref{eq:def:gamma_m}),  we immediately obtain that for $k\geq 1$,
	\begin{equation}
		\gamma_{3^{k}-1}\leq |\Im\, \theta_{\rm{\Rmnum{1}}}|= 3\times2^{k-1}.\label{ineq:gamma_3^k-1_low_bound}
	\end{equation}
Then we readily see that for the case $k>1$, $\gamma_{3^{k}-1}\leq 3\times2^{k-1}<4^{k}-2^{k-1}$, and hence
 this  is a contradiction to  \eqref{ineq:error_in_Wang}.

\end{example}

Next, we will  give an enhanced form of the results in Example~\ref{counterexample}.
\begin{lemma}\label{lem:gamma_3^k-1_value}
 For the case $k\geq 1$, $\gamma_{3^{k}-1}=3\times 2^{k-1}$.
  \end{lemma}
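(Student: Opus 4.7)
The plan is to combine the upper bound already established in Example~\ref{counterexample} with a matching lower bound, and to obtain the lower bound directly from the formula $\gamma_{m}=\min_{\{P_1,\ldots,P_m\}}\max_i \gamma(S(P_i))$ in~\eqref{eq:gamma_m=min_max_color}. First I would observe that when $m=3^k-1$, every partition of $\mathcal{Y}^k$ into $3^k-1$ nonempty blocks must consist of exactly one 2-element block and $3^k-2$ singleton blocks, so the max in~\eqref{eq:gamma_m=min_max_color} is attained either by $\gamma(S(\{\vec{y}\}))$ for a singleton or by $\gamma(S(\{\vec{y}_1,\vec{y}_2\}))$ for the unique doubleton.

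For the singletons, $\gamma(S(\{\vec{y}\}))=|\mathcal{A}^k+\{\vec{y}\}|=2^k$ because $\vec{x}\mapsto \vec{x}+\vec{y}$ is injective on $\mathcal{A}^k$. The main step is to show that for every pair $\vec{y}_1\neq \vec{y}_2$ in $\mathcal{Y}^k$,
\begin{equation*}
\gamma\big(S(\{\vec{y}_1,\vec{y}_2\})\big)=\big|\mathcal{A}^k+\{\vec{y}_1,\vec{y}_2\}\big|\geq 3\cdot 2^{k-1}.
\end{equation*}
I would prove this by inclusion–exclusion: $|\mathcal{A}^k+\{\vec{y}_1,\vec{y}_2\}|=2^{k+1}-|(\mathcal{A}^k+\vec{y}_1)\cap(\mathcal{A}^k+\vec{y}_2)|$, and then analyze the intersection componentwise. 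A vector $\vec{z}$ lies in both shifts iff $z_i-y_{1,i}\in\mathcal{A}$ and $z_i-y_{2,i}\in\mathcal{A}$ for every $i$; positions with $y_{1,i}=y_{2,i}$ contribute a factor of $2$, positions with $|y_{1,i}-y_{2,i}|=1$ contribute a factor of $1$, and positions with $|y_{1,i}-y_{2,i}|=2$ force an empty intersection. Writing $d\geq 1$ for the number of coordinates where $\vec{y}_1$ and $\vec{y}_2$ disagree, the intersection has size at most $2^{k-d}\leq 2^{k-1}$, hence $|\mathcal{A}^k+\{\vec{y}_1,\vec{y}_2\}|\geq 2^{k+1}-2^{k-1}=3\cdot 2^{k-1}$.

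Since $3\cdot 2^{k-1}>2^k$ for every $k\geq 1$, the doubleton block dominates the maximum in~\eqref{eq:gamma_m=min_max_color}, so $\max_i\gamma(S(P_i))\geq 3\cdot 2^{k-1}$ for every admissible partition. Taking the minimum over partitions gives $\gamma_{3^k-1}\geq 3\cdot 2^{k-1}$, which matches the upper bound $\gamma_{3^k-1}\leq 3\cdot 2^{k-1}$ from~\eqref{ineq:gamma_3^k-1_low_bound} in Example~\ref{counterexample}; together they yield the claimed equality.

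The only delicate point is the componentwise counting of the intersection (which I would emphasize via a product decomposition over coordinates), and verifying that the bound $3\cdot 2^{k-1}$ strictly exceeds $2^k$ so that singleton blocks are never the bottleneck. I expect no further obstacle; the argument is essentially a direct combinatorial enumeration combined with the partition characterization~\eqref{eq:gamma_m=min_max_color}.
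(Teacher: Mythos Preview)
Your proposal is correct and essentially matches the paper's approach: the upper bound comes from Example~\ref{counterexample}, and the lower bound is obtained by reducing via the partition formula~\eqref{eq:gamma_m=min_max_color} to the statement that $|\mathcal{A}^k+\{\vec{y}_1,\vec{y}_2\}|\geq 3\cdot 2^{k-1}$ for every pair (the paper packages this as Lemma~\ref{lemma:pf-W-relat_code_color_set} plus Proposition~\ref{lemma:M_k(2)}). The only cosmetic difference is that you compute this last bound by inclusion--exclusion and a componentwise product over coordinates, whereas the paper fixes a single coordinate where $\vec{y}_1$ and $\vec{y}_2$ differ and directly exhibits $2^{k-1}$ elements of $\mathcal{A}^k+\vec{y}_2$ lying outside $\mathcal{A}^k+\vec{y}_1$.
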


 Before proving the above lemma, we first present  the following results.

\begin{lemma}\label{lemma:pf-W-relat_code_color_set}
For a nonnegative integer $\ell$ with  $0\leq \ell  \leq 3^{k}$, let
\begin{align}
	M_{k}(\ell)=\min_{\substack{\;\;\mathcal{L}\subseteq\mathcal{Y}^{k}\;\text{s.t.}\;|\mathcal{L}|=\ell}}|\mathcal{A}^{k}+ \mathcal{L}|.\label{def:M_k(l)}
	\end{align}
Then, for any positive integer $m$ with $1\leq m\leq 3^{k}$,
\begin{equation}
		\gamma_{m}\geq M_
		{k}\Big(\Big\lceil\;\frac{3^{k}}{m}\;\Big\rceil\Big).\label{eq:pf-W-lemma:relat_code_color_set}
	\end{equation}
\end{lemma}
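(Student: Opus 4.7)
The plan is to mirror the proof of Lemma~\ref{lemma:N_relat_code_color_set} verbatim, replacing the ``second source alphabet'' $\mathcal{A}^{k}$ there by $\mathcal{Y}^{k}$ and the count $2^{k}$ by $3^{k}$. First I would invoke the partition characterization \eqref{eq:gamma_m=min_max_color}, which gives
\begin{equation*}
\gamma_{m}=\min_{\{P_{1},\ldots,P_{m}\}}\,\max_{1\leq i\leq m}\gamma\bigl(S(P_{i})\bigr),
\end{equation*}
where the minimum is over all partitions of $\mathcal{Y}^{k}$ into $m$ nonempty blocks. Fix any such partition. By pigeonhole, at least one block, say $P_{1}$, satisfies $|P_{1}|\geq 3^{k}/m$, and hence $|P_{1}|\geq \lceil 3^{k}/m\rceil$.

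Next, pick any subset $\widehat{P}_{1}\subseteq P_{1}$ with $|\widehat{P}_{1}|=\lceil 3^{k}/m\rceil$. Since $S(\widehat{P}_{1})$ is a subgraph of $S(P_{1})$, the restriction of any proper coloring of $S(P_{1})$ to the vertex set $\mathcal{A}^{k}\times\widehat{P}_{1}$ is a proper coloring of $S(\widehat{P}_{1})$, giving $\gamma\bigl(S(P_{1})\bigr)\geq \gamma\bigl(S(\widehat{P}_{1})\bigr)$. At this point the only auxiliary ingredient I need is the ternary analog of Lemma~\ref{lemma:chi(G(P))-equiv-form}, namely $\gamma\bigl(S(P)\bigr)=|\mathcal{A}^{k}+P|$ for every $P\subseteq\mathcal{Y}^{k}$. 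The argument is the same as in the binary case: the map $(\vec{x},\vec{y})\mapsto \vec{x}+\vec{y}$ is a proper coloring, establishing the upper bound $|\mathcal{A}^{k}+P|$; conversely, selecting one pair in $\mathcal{A}^{k}\times P$ for each distinct value of the sum exhibits a clique of size $|\mathcal{A}^{k}+P|$ inside $S(P)$, giving the matching lower bound.

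Combining these pieces yields, for the chosen partition,
\begin{equation*}
\max_{1\leq i\leq m}\gamma\bigl(S(P_{i})\bigr)\;\geq\;\gamma\bigl(S(\widehat{P}_{1})\bigr)\;=\;|\mathcal{A}^{k}+\widehat{P}_{1}|\;\geq\;\min_{\substack{P\subseteq\mathcal{Y}^{k}\\|P|=\lceil 3^{k}/m\rceil}}|\mathcal{A}^{k}+P|\;=\;M_{k}\!\Big(\Big\lceil\tfrac{3^{k}}{m}\Big\rceil\Big),
\end{equation*}
where the final equality uses the definition of $M_{k}(\ell)$ in \eqref{def:M_k(l)}. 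Since the resulting lower bound does not depend on the chosen partition, taking the minimum over all partitions on the left gives $\gamma_{m}\geq M_{k}(\lceil 3^{k}/m\rceil)$, which is the desired inequality.

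There is no serious obstacle here; every step is a direct transcription of the corresponding step in the proof of Lemma~\ref{lemma:N_relat_code_color_set}. The only item that requires a brief explicit verification is the ternary analog of Lemma~\ref{lemma:chi(G(P))-equiv-form}, but, as noted, the clique/coloring argument is unchanged because nothing in it relies on $\vec{y}$ being binary.
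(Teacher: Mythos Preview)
Your proposal is correct and follows essentially the same approach as the paper: both use the partition characterization \eqref{eq:gamma_m=min_max_color}, pigeonhole to find a large block, pass to a subset of size $\lceil 3^{k}/m\rceil$ via the subgraph-coloring inequality, and then invoke the ternary analog of Lemma~\ref{lemma:chi(G(P))-equiv-form} (which the paper likewise states as a separate proposition with the identical clique/coloring justification). There is no substantive difference between the two arguments.
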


The proof of Lemma~\ref{lemma:pf-W-relat_code_color_set} is deferred to Appendix~\ref{appendix-pf-lem-W-relate_code_color-set}.

\begin{prop}\label{lemma:M_k(2)} $M_{k}(2)=3\times 2^{k-1}$.
\end{prop}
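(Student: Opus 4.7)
The plan is to analyze $\mathcal{A}^k+\mathcal{L}$ for an arbitrary $2$-element subset $\mathcal{L}=\{\vec{y}_1,\vec{y}_2\}\subseteq\mathcal{Y}^k$ using inclusion--exclusion, reducing the problem to computing the maximum possible overlap between two translates $\mathcal{A}^k+\vec{y}_1$ and $\mathcal{A}^k+\vec{y}_2$ in $\mathcal{Y}^k$. The first step is to observe that for any fixed $\vec{y}\in\mathcal{Y}^k$, the map $\vec{x}\mapsto \vec{x}+\vec{y}$ is injective on $\mathcal{A}^k$, so $|\mathcal{A}^k+\vec{y}|=2^k$. Consequently
\begin{equation*}
|\mathcal{A}^k+\mathcal{L}|=2\cdot 2^k-\bigl|(\mathcal{A}^k+\vec{y}_1)\cap(\mathcal{A}^k+\vec{y}_2)\bigr|,
\end{equation*}
so minimizing $|\mathcal{A}^k+\mathcal{L}|$ is equivalent to maximizing the intersection over all distinct pairs $(\vec{y}_1,\vec{y}_2)$.

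Next I would characterize the intersection coordinate by coordinate. A vector $\vec{z}$ lies in $(\mathcal{A}^k+\vec{y}_1)\cap(\mathcal{A}^k+\vec{y}_2)$ iff for every index $i$ there exist $x_{1,i},x_{2,i}\in\{0,1\}$ with $z_i=x_{1,i}+y_{1,i}=x_{2,i}+y_{2,i}$. A short case analysis on $\delta_i=y_{1,i}-y_{2,i}$ gives: if $\delta_i=0$, then $z_i$ has $2$ admissible values ($y_{1,i}$ and $y_{1,i}+1$, provided these lie in $\{0,1,2\}$---which they do since one can check both are reachable when $y_{1,i}\in\{0,1\}$ or $y_{1,i}=2$); if $|\delta_i|=1$, there is exactly one admissible $z_i$ (namely $\max(y_{1,i},y_{2,i})$); and if $|\delta_i|=2$, the intersection is empty.

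Let $d$ be the number of indices where $y_{1,i}=y_{2,i}$. The intersection is nonempty only when $|\delta_i|\le 1$ at every coordinate, in which case a straightforward product calculation yields intersection size exactly $2^{d}$. Since $\vec{y}_1\neq\vec{y}_2$, we have $d\le k-1$, so the intersection has at most $2^{k-1}$ elements, with equality when $\vec{y}_1,\vec{y}_2$ agree in $k-1$ coordinates and differ by $1$ in the remaining one (for instance $\vec{y}_1=\vec{0}$ and $\vec{y}_2=\vec{e}_1$). Combining this with the inclusion--exclusion identity gives
\begin{equation*}
M_k(2)=2\cdot 2^k-2^{k-1}=3\cdot 2^{k-1},
\end{equation*}
with the minimum attained by the explicit pair above.

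There is no real obstacle; the only thing requiring care is the coordinate-wise case analysis, in particular verifying that in the $\delta_i=0$ case both candidate values of $z_i$ are genuinely realizable (a check which is trivial since $x_{1,i},x_{2,i}$ can independently be $0$ or $1$), and the bookkeeping that the intersection factorizes as a product over coordinates. Once these routine points are settled, the bound and its tightness follow immediately.
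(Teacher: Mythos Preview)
Your proof is correct. Both you and the paper exploit the coordinate-wise structure, but the arguments are organized differently. The paper picks a single coordinate where $\vec{y}_1$ and $\vec{y}_2$ differ, exhibits one value $x^*\in\mathcal{A}$ with $x^*+\vec{y}_2[1]\notin\mathcal{A}+\{\vec{y}_1[1]\}$, and observes that the $2^{k-1}$ vectors in $\mathcal{A}^k+\vec{y}_2$ whose first coordinate equals that value are disjoint from $\mathcal{A}^k+\vec{y}_1$; this immediately gives the lower bound $2^k+2^{k-1}$. You instead apply inclusion--exclusion and compute the intersection exactly via the product formula $\prod_i(\text{number of admissible }z_i)$, obtaining $|\,(\mathcal{A}^k+\vec{y}_1)\cap(\mathcal{A}^k+\vec{y}_2)\,|=2^d$ (or $0$ if some $|\delta_i|=2$) where $d$ counts equal coordinates. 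Your route yields strictly more: the exact value of $|\mathcal{A}^k+\mathcal{L}|$ for every two-element $\mathcal{L}$, not just the minimum. The paper's route is shorter because it only needs an inequality and touches a single coordinate.

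One cosmetic point: in the $\delta_i=0$ case your parenthetical ``provided these lie in $\{0,1,2\}$'' is off---the sums live in $\{0,1,2,3\}$, not $\{0,1,2\}$---but this has no effect on the count (there are always two admissible $z_i$), so the argument stands.
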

\begin{proof}
First, we let $\mathcal{L}=\{\vec{0},\vec{e}_{1}\}$, where $\vec{0}\in\mathcal{Y}^{k}$ is  a $k$-vector whose all components are equal to $0$, and  $\vec{e}_{1}\in\mathcal{Y}^{k}$ is a $k$-vector whose the first component is equal to $1$ while all other components are equal to $0$. Immediately, we have $|\mathcal{A}^{k}+\mathcal{L}|=3\times 2^{k-1}$ and thus $M_{k}(2)\leq 3\times 2^{k-1}.$

Next, we let $\mathcal{L}=\{\vec{y}_{1},\vec{y}_{2}\}\subseteq\mathcal{Y}^{k}$ be an arbitrary subset of size $2$.
Without loss of generality, we assume that $\vec{y}_{1}[1]\neq \vec{y}_{2}[1]$, where $\vec{y}_{i}[1]$ stands for the first component of $\vec{y}_{i}$, $i=1,2$. We consider
\begin{align}
|\mathcal{A}^{k}+\mathcal{L}|&=\big|\mathcal{A}^{k}+\{\vec{y}_{1},\vec{y}_{2}\}\big|\label{eq_pf_lemma:M_k(2),1}\\
&=\big|\mathcal{A}^{k}+\{\vec{y}_{1}\}\big|
+\big|\big(\mathcal{A}^{k}+\{\vec{y}_{2}\}\big)\setminus\big(\mathcal{A}^{k}+\{\vec{y}_{1}\}\big)\big|.\nonumber
\end{align}
 It follows from $\vec{y}_{1}[1]\neq \vec{y}_{2}[1]$ that there exists a symbol $ x^{*}\in\mathcal{A}$ such that  $x^{*}+\vec{y}_{2}[1]\notin\mathcal{A}+\{\vec{y}_{1}[1]\}$. Further, we let
 \begin{equation}
 \mathcal{A}^{*}=\{ \vec{x}+\vec{y}_{2}:\vec{x}\in\mathcal{A}^{k}\;\text{with}\;\vec{x}[1]=x^{*}\}.
  \end{equation}
  Then, we see that $$\mathcal{A}^{*}\subseteq\big(\mathcal{A}^{k}+\{\vec{y}_{2}\}\big)\setminus\big(\mathcal{A}^{k}+\{\vec{y}_{1}\}\big),$$ and thus \begin{equation}
  \big|\big(\mathcal{A}^{k}+\{\vec{y}_{2}\}\big)\setminus\big(\mathcal{A}^{k}+\{\vec{y}_{1}\}\big)\big|\geq|\mathcal{A}^{*}|=2^{k-1}.
  \end{equation}
   Together with
 $|\mathcal{A}^{k}+\{\vec{y}_{1}\}|=2^{k}$, we continue to consider \eqref{eq_pf_lemma:M_k(2),1}:
 \begin{equation}
 |\mathcal{A}^{k}+\mathcal{L}|\geq 2^{k}+2^{k-1}=3\times 2^{k-1}.\label{eq_pf_lemma:M_k(2),2}
 \end{equation}
 Note that the inequality \eqref{eq_pf_lemma:M_k(2),2} is true for any $\mathcal{L}\subseteq\mathcal{Y}^{k}$ with size $2$, and so $M_{k}(2)\geq 3\times 2^{k-1}.$ We thus have proved that $M_{k}(2)= 3\times 2^{k-1}$.  The lemma is proved.
\end{proof}

We immediately obtain the following consequence of Lemma~\ref{lemma:pf-W-relat_code_color_set} and Proposition~\ref{lemma:M_k(2)}.

\begin{cor}\label{cor:gamma_3^k-1_low_bound}
For the case $k\geq 1$, $\gamma_{3^{k}-1}\geq M_{k}(2)=3\times 2^{k-1}$.
\end{cor}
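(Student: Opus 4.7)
The plan is to obtain the corollary by direct substitution into the chain of inequalities already established. Specifically, I would instantiate Lemma~\ref{lemma:pf-W-relat_code_color_set} at the particular value $m = 3^{k}-1$ and then invoke Proposition~\ref{lemma:M_k(2)} to evaluate the right-hand side. No new combinatorial idea is needed; the work has already been done by the preceding lemma (which reduces chromatic numbers of restricted conflict graphs to sizes of sumsets) and the preceding proposition (which computes $M_{k}(2)$ exactly).

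The only small verification to carry out is the evaluation of the ceiling $\big\lceil 3^{k}/(3^{k}-1)\big\rceil$. I would note that for any $k\geq 1$ one has $3^{k}-1\geq 2$, hence
\begin{equation*}
1 \;<\; \frac{3^{k}}{3^{k}-1} \;=\; 1+\frac{1}{3^{k}-1}\;\leq\; 1+\frac{1}{2}\;=\;\frac{3}{2}\;<\;2,
\end{equation*}
so $\big\lceil 3^{k}/(3^{k}-1)\big\rceil = 2$. Applying Lemma~\ref{lemma:pf-W-relat_code_color_set} with $m=3^{k}-1$ then gives
\begin{equation*}
\gamma_{3^{k}-1}\;\geq\; M_{k}\!\left(\Big\lceil\tfrac{3^{k}}{3^{k}-1}\Big\rceil\right)\;=\;M_{k}(2),
\end{equation*}
and Proposition~\ref{lemma:M_k(2)} yields $M_{k}(2)=3\times 2^{k-1}$, completing the argument.

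There is no serious obstacle here, since the corollary is genuinely immediate once the two preceding results are in hand. The only mild subtlety is to check the edge case $k=1$, where $3^{k}-1=2$ and $3^{k}/(3^{k}-1)=3/2$, so the ceiling still equals $2$ and the argument is uniform in $k\geq 1$. Combining this lower bound with the matching upper bound $\gamma_{3^{k}-1}\leq 3\times 2^{k-1}$ already exhibited in Example~\ref{counterexample} (via an explicit admissible code) would then give the exact equality asserted in Lemma~\ref{lem:gamma_3^k-1_value}, which is the sharp quantitative refutation of the claim~\eqref{ineq:error_in_Wang} used in~\cite{Wang-Tightness_upper_bound}.
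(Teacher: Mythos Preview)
Your proposal is correct and takes essentially the same approach as the paper: the paper states the corollary as an immediate consequence of Lemma~\ref{lemma:pf-W-relat_code_color_set} and Proposition~\ref{lemma:M_k(2)}, and your argument simply makes explicit the one-line verification that $\big\lceil 3^{k}/(3^{k}-1)\big\rceil=2$ for all $k\geq 1$.
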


  Together Example~\ref{counterexample} with Corollary \ref{cor:gamma_3^k-1_low_bound} above, we immediately prove  Lemma~\ref{lem:gamma_3^k-1_value}.

\subsection{A valid upper bound}\label{subsec:corr_tech}

  In this subsection, we will obtain a valid upper bound on $\mathcal{C}(\widehat{\mathcal{N}},\widehat{f})$ by following the technique proposed by \cite{Wang-Tightness_upper_bound}.

For the special case of $k=1$, we already know that $\mathcal{C}(1)=\frac{1}{2}\log_{3}6\approx 0.815$ (cf.~\eqref{eq:C(1)-value}).
Next we study an upper bound on the optimal value $\mathcal{C}(k)$, $k\geq2$, by
relaxing the constraint in \eqref{eq:opti_problem_C_k}. From the discussion in Section~\ref{subsec:Wang_tech-intro} and Lemma~\ref{lem:gamma_3^k-1_value}, we obtain that $\gamma_{1}=4^k$, $\gamma_{3^k}=2^k$, and  $\gamma_{m}\geq\gamma_{3^{k}-1}= 3\times2^{k-1}$ for $m =  2,3,\cdots, 3^{k}-2$.
Then, we  reformulate \emph{a relaxed optimization problem} similarly by reducing $(\gamma_{m},m)$ to $(3\times2^{k-1},2)$ for $m=2,3,\cdots,3^{k}-2$:
 \begin{equation}
	\begin{split}
		\max_{\alpha_{1},\beta,\alpha_{3^{k}-1},\alpha_{3^{k}}}&k(\alpha_{1}+\beta+\alpha_{3^{k}-1}+\alpha_{3^{k}})\\
		\text{s.t.}\quad&2k\alpha_{1}+\beta\log(3\times2^{k-1})+\alpha_{3^{k}-1}\log(3\times2^{k-1})+k\alpha_{3^{k}}\leq 1;\\
		&\beta+\alpha_{3^{k}-1}\log(3^{k}-1)+\alpha_{3^{k}}\log3^{k}\leq 1; \quad\;\alpha_{1},\beta,\alpha_{3^{k}-1},\alpha_{3^{k}}\geq 0.
	\end{split}\label{eq:relax_opti_problem-correct}
\end{equation}
where $\beta=\sum_{m=2}^{3^{k}-2}\alpha_{m}$.
For the problem \eqref{eq:relax_opti_problem-correct},
 we can obtain  the  optimal value (by solving its dual problem),  denoted by $\mathcal{D}(k)$, $k\geq 2$,  as follows:
\begin{equation}\nonumber
	\mathcal{D}(k)=	1-\frac{(\log3-1)^{2}}{(k-1)\log3+(\log3)^{2}-1}.
\end{equation}
 Clearly,
  \begin{equation}
		\mathcal{C}(\widehat{\mathcal{N}},\widehat{f})=\sup\{ \mathcal{C}(k): k\geq 1\}\leq\sup\Big\{\frac{1}{2}\log_{3}6; ~\mathcal{D}(k): k\geq 2\Big\}=1.
\end{equation}
We therefore obtain \emph{a valid upper bound}
\begin{equation}
\mathcal{C}(\widehat{\mathcal{N}},\widehat{f})\leq 1,
 \end{equation}
 where the upper bound of Guang~\emph{et al.} on $\mathcal{C}(\widehat{\mathcal{N}},\widehat{f})$ is also equal to $1$.
 This also implies that the technique proposed by \cite{Wang-Tightness_upper_bound} does not give a better upper bound than  the upper bound of Guang~\emph{et al.}~\cite{Guang_Improved_upper_bound}.
}

\end{journalonly}

\section{Conclusion}\label{sec:concl}

In the paper, we put forward the model of zero-error distributed function compression system of two binary memoryless sources $(\boldsymbol{\mathrm{s}}_1\boldsymbol{\mathrm{s}}_2;C_{1}, C_{2}; f)$ depicted in Fig.\,\ref{fig:Gen-System_use}, where the function $f$ is the binary arithmetic sum, the two switches $\boldsymbol{\mathrm{s}}_1$ and $\boldsymbol{\mathrm{s}}_2$ take values $0$ or $1$ to represent them open or closed, and by symmetry, it is assumed without loss of generality that $C_1\geq C_2$ for the capacity constraints $C_1$ and $C_2$. The compression capacity for the model is defined as the maximum average number of times that the function $f$ can be compressed with zero error for one use of the system, which measures the efficiency of using the system. We  fully characterized the compression capacities for all the four cases of the model $(\boldsymbol{\mathrm{s}}_1\boldsymbol{\mathrm{s}}_2;C_{1}, C_{2}; f)$ for $\boldsymbol{\mathrm{s}}_1\boldsymbol{\mathrm{s}}_2=00,01,10,11$, where characterizing the compression capacity for the case $(01;C_1,C_2;f)$ with $C_1>C_2$ is highly nontrivial. Toward this end, we developed a novel graph coloring approach and proved a crucial aitch-function that plays a key role in estimating the minimum chromatic number of a conflict graph. Furthermore,
we presented an important application of the zero-error distributed function compression problem for the model $(01;C_1,C_2;f)$ to  network function computation. By considering the model of network function computation  that is transformed from $(01;C_1,C_2;f)$ with $C_1>C_2$,
 we gave the answer to the open problem that whether the best known upper bound of Guang~\emph{et al.} on computing capacity is in general tight, i.e., in general
this upper bound is  not tight.

The more general case of the model $(\boldsymbol{\mathrm{s}}_1\boldsymbol{\mathrm{s}}_2;C_{1}, C_{2}; f)$ is to consider arbitrary joint probability distributions of two sources beyond binary sources and arbitrary functions $f$. The characterization of the compression capacity for this general model will be investigated in the future.

\numberwithin{theorem}{section}
\appendices

\section{Proof of~Lemma~\ref{lemma:chi_m_low_bound}}\label{append-lemma:chi_m_low_bound}

Let $m$ be an arbitrary positive integer such that $1\leq m\leq 2^k$.
We first prove that
\begin{equation}\label{cor:chi_m=min_max_color}
	\chi_{m}\leq\min_{\substack{\text{ \rm{all partitions} }\\\text{$\{P_1,P_2,\cdots,P_m\}$ \rm{of} $\mathcal{A}^{k}$}}}\max_{1\leq i\leq m}\chi\big(G(P_i)\big).
\end{equation}
Let $\big\{P^*_1,P^*_2,\cdots,P^*_m\big\}$ be a partition of $\mathcal{A}^{k}$ such that
 \begin{equation}\label{cor:chi_m-min_max_color-eq1}
	\max_{1\leq i\leq m}\chi\big(G(P^*_i)\big)=\min_{\substack{\text{ \rm{all partitions} }\\\text{$\{P_1,P_2,\cdots,P_m\}$ \rm{of} $\mathcal{A}^{k}$}}}\max_{1\leq i\leq m}\chi\big(G(P_i)\big).
\end{equation}
For each block $P_i^*,$ $i=1,2,\cdots,m$, we let $c_i$ be a coloring of the conflict graph $G(P_i^*)$ of the
minimum chromatic number $\chi\big(G(P_i^*)\big)$, i.e., \rmnum{1})
for any two adjacent vertices
$(\vec{x},\vec{y})$ and $(\vec{x}\,',\vec{y}\,')$ in $G(P^*_i)$ (which satisfy $\vec{x}+\vec{y}\neq \vec{x}\,'+\vec{y}\,'$),
\begin{equation}
c_i(\vec{x},\vec{y})\neq c_i(\vec{x}\,',\vec{y}\,'), \label{G(P_i)-Color-c_i}
\end{equation}
and \rmnum{2}) $|\Im\,c_i|=\chi\big(G(P_i^*)\big)$.
Further, we let
 \begin{equation} \Im\,c_i=\big\{1,2,\cdots,\chi\big(G(P^*_i)\big)\big\},\quad i=1,2,\cdots,m.\label{cor:chi_m-min_max_color-eq2}
  \end{equation}
Based on the above colorings $c_i,$ $i=1,2,\cdots,m$, we give a $k$-shot code $\mbC=\big\{\phi_1,\phi_2;~\psi\big\}$ for the model $(01;2,1;f)$, where
\begin{itemize}
\item the encoding functions $\phi_1$ and $\phi_2$ are defined as follows: for each $i=1,2,\cdots,m$,
\begin{equation}
\begin{cases}
\phi_1(\vec{x},\vec{y})=c_i(\vec{x},\vec{y}),\quad&\forall~(\vec{x},\vec{y})\in\mathcal{A}^k\times P^*_i;\\
\phi_2(\vec{y})=i,\quad&\forall~\vec{y}\in P^*_i;
\end{cases}\label{phi-1,2-constr}
\end{equation}
\item the decoding function $\psi$ is defined by
\begin{equation*}
\psi\big(\phi_1(\vec{x},\vec{y}),\phi_2(\vec{y})\big)=\vec{x}+\vec{y},\quad\forall~(\vec{x},\vec{y})\in
\mathcal{A}^k\times\mathcal{A}^k.
\end{equation*}
\end{itemize}
In the following, we verify the admissibility of the code $\mbC$. Toward this end, we claim that
\begin{equation}
\big(\phi_{1}(\vec{x},\vec{y}), \phi_{2}(\vec{y})\big)\neq \big(\phi_{1}(\vec{x}\,',\vec{y}\,'),
\phi_{2}(\vec{y}\,')\big),\label{cor:chi_m-min_max_color-eq6}
\end{equation}
for any two pairs
$(\vec{x},\vec{y})$ and $(\vec{x}\,',\vec{y}\,')$ in $\mathcal{A}^k\times\mathcal{A}^k$ with $\vec{x}+\vec{y}\neq \vec{x}\,'+\vec{y}\,'$.
To be specific, if $\vec{y}$ and $\vec{y}\,'$ are in two different blocks, say $\vec{y}\in P_i^*$
and $\vec{y}\,'\in P_j^*$ with $i\neq j$, then
\begin{equation*}
\phi_2(\vec{y})=i\neq j=\phi_2(\vec{y}\,'),
\end{equation*}
which implies \eqref{cor:chi_m-min_max_color-eq6}. Otherwise, $\vec{y}$ and $\vec{y}\,'$ are in the same
block, say $P_i^*$, which implies
$\phi_2(\vec{y})=\phi_2(\vec{y}\,')=i$. By the definition of the coloring $c_i$ (cf.~\eqref{G(P_i)-Color-c_i}) and the definition of the encoding function $\phi_1$ (cf.~\eqref{phi-1,2-constr}), we have
\begin{equation*}
\phi_1(\vec{x},\vec{y})=c_i(\vec{x},\vec{y})\neq c_i(\vec{x}\,',\vec{y}\,')=\phi_1(\vec{x}\,',\vec{y}\,'),
\end{equation*}
which also implies \eqref{cor:chi_m-min_max_color-eq6}.

Now, for this admissible $k$-shot code $\mbC$, it follows from   \eqref{cor:chi_m-min_max_color-eq2} and \eqref{phi-1,2-constr} that
\begin{equation*}
|\Im\,\phi_1|=\max_{1\leq i\leq m}\chi\big(G(P^*_i)\big)\quad\text{and}\quad |\Im\,\phi_2|=m.
\end{equation*}
Together with the definition of $\chi_{m}$ (cf.~\eqref{eq:def:chi_m}) and \eqref{cor:chi_m-min_max_color-eq1}, we prove \eqref{cor:chi_m=min_max_color} by
\begin{equation*}
\chi_m\leq |\Im\,\phi_1|=\max_{1\leq i\leq m}\chi\big(G(P^*_i)\big)=\min_{\substack{\text{ \rm{all partitions} }\\\text{$\{P_1,P_2,\cdots,P_m\}$ \rm{of} $\mathcal{A}^{k}$}}}\max_{1\leq i\leq m}\chi\big(G(P_i)\big).\label{cor:chi_m<=min_max_color-5}
\end{equation*}

 Next, we prove that
 \begin{equation*}
\chi_m\geq\min_{\substack{\text{ \rm{all partitions} }\\\text{$\{P_1,P_2,\cdots,P_m\}$ \rm{of} $\mathcal{A}^{k}$}}}\max_{1\leq i\leq m}\chi\big(G(P_i)\big).
\end{equation*}
Let  $\big\{\phi_1,\phi_2;~\psi\big\}$ be an admissible $k$-shot  code such that $|\Im\,\phi_2|=m$ and $|\Im\,\phi_{1}|=\chi_m$ (cf.~\eqref{eq:def:chi_m}).
 For each $\gamma\in\Im\,\phi_{2}$, we let $\phi_{2}^{-1}(\gamma)$ be  the inverse
image of~$\gamma$ under $\phi_{2}$, i.e.,
\begin{equation*}
\phi_{2}^{-1}(\gamma)=\big\{\vec{y}\in\mathcal{A}^{k}:~\phi_{2}(\vec{y})=\gamma\big\}.
\end{equation*}
 Then, we readily see that
 \begin{equation}
 \mathcal{P}=\big\{P_{\gamma}\triangleq\phi_{2}^{-1}(\gamma):~\gamma\in\Im\,\phi_{2}\big\}\label{pf-pro1-eq-00}
  \end{equation}
  is a  partition of $\mathcal{A}^{k}$, and  all the $m$ blocks $P_{\gamma},~\gamma\in\Im\,\phi_{2}$ are non-empty.

 We  first prove that
 \begin{equation*}
\chi_m=|\Im\,\phi_{1}|\geq\max_{\gamma\in\Im\,\phi_{2}}\chi\big(G(P_{\gamma})\big),\label{pf-pro1-eq-1}
\end{equation*}
 or equivalently,
 \begin{equation}
 \chi_m=|\Im\,\phi_{1}|\geq\chi\big(G(P_{\gamma})\big)\label{pf-pro1-eq-0}
 \end{equation}
for each block $P_{\gamma}$. To see this, by the discussion immediately below the  proof of Lemma~\ref{def:conf-graph-G^k},
$(\phi_1,\phi_2)$ is a coloring of the conflict graph~$G$.
 For each subgraph $G(P_{\gamma})$, $(\phi_1,\phi_2)$ restricted to $G(P_{\gamma})$ is a coloring of $G(P_{\gamma})$. This implies that
\begin{equation}
\chi\big(G(P_{\gamma})\big)\leq\big|\big\{\big(\phi_{1}(\vec{x},\vec{y}),\phi_{2}(\vec{y})\big):~\vec{x}\in\mathcal{A}^{k}~\text{and}
~\vec{y}\in P_{\gamma}\big\}\big|.\label{pf-pro1-eq-2}
\end{equation}
We note that $\phi_{2}(\vec{y})=\gamma,~\forall~\vec{y}\in P_{\gamma}$ ($=\phi_{2}^{-1}(\gamma)$).
Continuing from \eqref{pf-pro1-eq-2}, we  obtain that
\begin{align}
\chi\big(G(P_{\gamma})\big)&\leq\big|\big\{\big(\phi_{1}(\vec{x},\vec{y}),\phi_{2}(\vec{y})\big):~\vec{x}\in\mathcal{A}^{k}~\text{and}
~\vec{y}\in P_{\gamma}\big\}\big|=\big|\big\{\phi_{1}(\vec{x},\vec{y}):~\vec{x}\in\mathcal{A}^{k}~\text{and}
~\vec{y}\in P_{\gamma}\big\}\big|\nonumber\\
& \leq\big|\big\{\phi_{1}(\vec{x},\vec{y}):~\vec{x}\in\mathcal{A}^{k}~\text{and}
~\vec{y}\in\mathcal{A}^{k}\big\}\big|=|\Im\,\phi_{1}|.\nonumber
\end{align}
We thus  prove \eqref{pf-pro1-eq-0} for each block $P_{\gamma}$.
 Finally,  we note that  $\mathcal{P}$ in \eqref{pf-pro1-eq-00} is a particular partition of $\mathcal{A}^{k}$ that includes $m$ non-empty blocks. So, we further obtain that
 \begin{equation}
 \chi_m=|\Im\,\phi_{1}|\geq\max_{\gamma\in\Im\,\phi_{2}}\chi\big(G(P_{\gamma})\big)\geq\min_{\substack{\text{ all partitions }\\\text{$\{P_1,P_2,\cdots,P_m\}$ of $\mathcal{A}^{k}$}}}\max_{1\leq i\leq m}\chi\big(G(P_i)\big).\nonumber
 \end{equation}
Thus, we have proved Lemma~\ref{lemma:chi_m_low_bound}.

\section{Proof of~Lemma~\ref{lemma:h(l)-property}}\label{appendix-pf-lemma:h(l)-property}

By $h(0)=0$, we first note that the inequality~\eqref{lemma:eq-h(l)-property} is satisfied for $\ell=0$. Now, consider an arbitrary positive integer $\ell$, and thus we have $$h(\ell)>0,~\ell=1,2,\cdots.$$
Let  $\ell_a$ and $\ell_b$ be two nonnegative integers satisfying
 $\ell=\ell_a+\ell_b$ and $\ell_a\geq\ell_b$. To prove \eqref{lemma:eq-h(l)-property}, it is equivalent to proving
\begin{equation*}
\frac{h(\ell_a)}{h(\ell)}+\frac{1}{2}\cdot\frac{h(\ell_b)}{h(\ell)}\geq 1,\label{pflemma:eq-h(l)-property-1}
\end{equation*}
or equivalently,
\begin{equation}
\Big(\frac{\ell_a}{\ell}\Big)^{\log3-1}+\frac{1}{2}\cdot\Big(\frac{\ell_b}{\ell}\Big)^{\log3-1}\geq 1.\label{pflemma:eq-h(l)-property-2}
\end{equation}
With \eqref{pflemma:eq-h(l)-property-2}, we define the following function:
 \begin{equation*}
 q(x)= x^{\log3-1}+\frac{1}{2}\cdot (1-x)^{\log3-1},\quad x\in\Big[\dfrac{1}{2},~1\Big].\label{pflemma:eq-h(l)-property-3}
 \end{equation*}

 For the above function $q(x)$, we first easily calculate
 \begin{equation}
 q\Big(\frac{1}{2}\Big)=q(1)=1.\label{pflemma:eq-h(l)-property-4}
 \end{equation}
 Further, after a simple calculation, we obtain that  the second derivative
 \begin{align*}
 q''(x)<0,\quad\forall~x\in\Big[\frac{1}{2},~1\Big].
 \end{align*}
 This shows that $q(x)$ is concave on the interval $[1/2,~1]$. Together with \eqref{pflemma:eq-h(l)-property-4}, we prove that
 \begin{align}
 q(x)\geq 1,\quad\forall~x\in\Big[\frac{1}{2},~1\Big].\nonumber
 \end{align}
 We thus have proved the inequality~\eqref{pflemma:eq-h(l)-property-2}. Then, the lemma is proved.

\section{Guang~et al.'s Upper Bound on the Computing Capacity $\mathcal{C}(\mathcal{N},f)$}\label{appendix:GYYB-bound-intro_and_com}

We first present the general model of network function computation following from \cite{Guang_Improved_upper_bound}. Let $\mG=(\mV,\mE)$ be an arbitrary directed acyclic graph, where $\mV$ and $\mE$ are a finite set of nodes and a finite set of edges, respectively. We assume that a symbol taken from a finite alphabet $\mA$ can be reliably transmitted on each edge for each use. On the graph $\mathcal{G}$, there are $s$ \emph{source nodes} $\sigma_1, \sigma_2, \cdots, \sigma_s$, where we let $S=\{\sigma_1, \sigma_2, \cdots, \sigma_s\}$, and a single \emph{sink node} $\rho \in \mV \setminus S$. Each source node has no input edges and the single sink node  has no output edges. Further, we assume without loss of generality that there exists a directed path from every node $u\in \mathcal{V}\setminus \{\rho\}$ to $\rho$ in~$\mG$. The graph $\mG$, together with the set of the source nodes $S$ and the single sink node~$\rho$, forms a \emph{network} $\mN$, i.e., $\mN=(\mG, S, \rho)$.
Let $f:~\mA^s \to \mO$ be an arbitrary nonconstant function, called the \emph{target function}, where $\mO$ is viewed as the image set of $f$. The $i$th argument of the target function~$f$ is the source message generated by the $i$th source node $\sigma_i$ for $i=1,2,\cdots,s$, and the target function $f$ is required to  compute with zero error at the sink node $\rho$ multiple times. To be specific, let
$k$ be a positive integer. For each $i=1,2,\cdots,s$, the $i$th source node~$\sigma_i$ generates $k$ symbols $x_{i,1},x_{i,2},\cdots,x_{i,k}$  in $\mA$, written as a vector $\vec{x}_i$, i.e., $\vec{x}_i=(x_{i,1},x_{i,2},\cdots,x_{i,k})$.
We further let $\vec{x}_S=(\vec{x}_{1}, \vec{x}_{2}, \cdots, \vec{x}_{s})$. The $k$ values of the target function
\begin{align}
f(\vec{x}_S)\triangleq\big(f(x_{1,j},x_{2,j}, \cdots, x_{s,j}): j=1,2,\cdots,k \big)\nonumber
\end{align}
are required to compute with zero error at $\rho$. We use $(\mN,f)$ to denote the general model of network function computation.

We further follow from \cite{Guang_Improved_upper_bound} to present some graph-theoretic notations. For two nodes $u$ and $v$ in $\mathcal{V}$,  we write  $u\rightarrow v$ if there exists a (directed) path from~$u$ to $v$ on $\mathcal{G}$.
If there exists no such a path from~$u$ to $v$, we say that $v$ is \emph{separated from} $u$, denoted by $u\nrightarrow v$.
 Given an edge subset $C\subseteq\mathcal{E}$, we  define three subsets  of  source nodes as follows:
\begin{align*}
K_C &=  \big\{ \sigma\in S:\ \exists\ e\in C \text{ s.t. } \sigma\rightarrow\tail(e) \big\},\label{def_K_C}\\
I_C&= \big\{ \sigma\in S:\ \text{$\sigma\nrightarrow\rho$ upon deleting the edges in $C$ from $\mathcal{E}$} \big\},\\
J_C&=K_C\setminus I_C.
\end{align*}
An edge subset $C$ is said to be a \emph{cut set} if $I_{C}\neq\emptyset$, and let $\Lambda(\mathcal{N})$  be the collection of all the cut sets, i.e.,
$\Lambda(\mathcal{N}) = \{C\subseteq \mathcal{E} : I_{C}\neq\emptyset\} $.
For any subset $J\subseteq S$, we let $x_{J} = (x_{j}:\sigma_{j}\in J) $ and use
$\mathcal{A}^{J}$
(instead of $\mathcal{A}^{|J|}$
for simplicity) to denote the set of
all possible $|J|$-dimensional vectors taken by $x_{J}$. In particular, when $J=\emptyset$, we have
$\mathcal{A}^{J}=\mathcal{A}^{0}$, the
singleton that contains an empty vector of dimension  $0$. Furthermore,
for notational convenience, we suppose that argument of the target function $f$ with subscript $i$ always stands for the symbol generated by the $i$th source node $\sigma_{i}$, so that we can ignore the order of the arguments of~$f$.

Now, we  present the upper bound of Guang~\emph{et al.}  on $\mathcal{C}(\mathcal{N},f)$ as follows (cf. {\cite[Theorem~2]{Guang_Improved_upper_bound}}):
\begin{equation}
\mathcal{C}(\mathcal{N},f)\leq\min_{C\in\Lambda(\mathcal{N})}\frac{|C|}{\log_{|\mathcal{A}|} n_{C,f}},\label{Guang-upper-bound}
\end{equation}
where   the key parameter $n_{C,f}$, which depends on the cut set $C$ and the target function $f$, will be specified in the following discussions.

\begin{definition}[\!\!{\cite[Definition~1]{Guang_Improved_upper_bound}}]\label{def:I_a_j_equiv}
	Consider two disjoint sets $I,J\subseteq S$ and a fixed $a_{J}\in \mathcal{A}^{J}$. For any $b_{I}$ and $b'_{I}\in \mathcal{A}^{I}$, we say $b_{I}$ and $b'_{I}$ are $(I,a_{J})$-equivalent if
	\begin{equation}\nonumber
		f(b_{I},a_{J},d)=	f(b'_{I},a_{J},d),\quad \forall~ d\in\mathcal{A}^{S\setminus(I\cup J)}.
	\end{equation}
\end{definition}
The above relation has been proved to be an equivalence relation, which thus
partitions $\mathcal{A}^{ I}$ into \emph{$(I,a_{J} )$-equivalence classes.}

\begin{definition}[\!\!{\cite[Definition~2]{Guang_Improved_upper_bound}}]\label{def:strong_partition}
	Let $C\in\Lambda(\mathcal{N})$ be a cut set and $\mathcal{P}_{C} =\{C_{1}, C_{2},\cdots,C_{m}\}$
	be a partition of the cut set $C$. The partition $\mathcal{P}_{C}$ is said to be a strong
	partition of $C$ if the following two conditions are satisfied:
	\begin{enumerate}
		\item $I_{C_{\ell}}\neq\emptyset$,\quad $\forall~ 1\leq \ell\leq m;$
		\item $I_{C_{i}}\cap K_{C_{j}}=\emptyset$,\quad$\forall~ 1\leq i,j\leq m $ and $i\neq j$.\footnote{There is a typo in the original definition of strong partition {\cite[Definition~2]{Guang_Improved_upper_bound}}, where in 2), ``$I_{C_i}\cap I_{C_j}=\emptyset$'' in~{\cite[Definition~2]{Guang_Improved_upper_bound}} should be ``$I_{C_{i}}\cap K_{C_{j}}=\emptyset$'' as stated in Definition~\ref{def:strong_partition} in the current paper.}
		\end{enumerate}
\end{definition}

For a cut set $C$ in $\Lambda(\mathcal{N})$,  the partition $\{C\}$ is called the
\emph{trivial strong partition} of $C$.

\begin{definition}[\!\!{\cite[Definition~3]{Guang_Improved_upper_bound}}]\label{def:I_a_L_a_j_equiv}
	Let $I$ and $J$ be two disjoint subsets of $S$. Let
	$I_{\ell},~\ell= 1, 2,\cdots, m,$ be $m$ disjoint subsets of $I$ and let $L = I\setminus(\cup_{\ell=1}^{m}I_{\ell})$. For given $a_{J}\in\mathcal{A}^{J}$ and $a_{L}\in\mathcal{A}^{L}$, we say that
	$b_{I_{\ell}}$	 and $b_{I_{\ell}}'$	 in $\mathcal{A}^{I_{\ell}}$	 are $(I_{\ell}, a_{L}, a_{J} )$-equivalent for $1\leq \ell\leq m$,
	if for each $c_{I_{j}}\in\mathcal{A}^{I_{j}} $ with $1\leq j\leq m$ and $j\neq \ell$,
		$\big(b_{I_{\ell}},a_{L},\{c_{I_{j}},1\leq j\leq m, j\neq \ell\}\big)$
and
	$	\big(b_{I_{\ell}}',a_{L},\{c_{I_{j}},1\leq j\leq m, j\neq \ell\}\big)$ in $\mathcal{A}^{I}$
		are $(I, a_{J})$-equivalent.
	\end{definition}

It has also been proved that  the above $(I_{\ell}, a_{L}, a_{J} )$-equivalence relation for every $\ell$ is also an equivalence
relation and thus partitions $\mathcal{A}^{I_{\ell}}$
into \emph{$(I_{\ell}
, a_{L}, a_{J})$-equivalence
classes}.

Now, we specify $(\mathcal{N},f)$ to be the model of network function computation depicted in Fig.\,\ref{fig:N}. We will explicitly calculate the upper bound of Guang~\emph{et al.} for the model. Toward this end, we will calculate $n_{C,f}$ for every cut set $C\in\Lambda(\mathcal{N})$ by considering two cases below. We first let $C\in\Lambda(\mathcal{N})$
be an arbitrary cut set, and then  let $I=I_{C}$ and $J=J_{C}$ for notational simplicity.\\
\textbf{Case 1:} $|I|=1$, i.e., $I=\{\sigma_{1}\}$ or $\{\sigma_{2}\}$.

For this case, we note that $|C|\geq 2$ and  the trivial strong partition $\{C\}$ is the only strong partition of $C$.
Then, $n_{C,f}$ is the maximum number of all $(I,a_{J})$-equivalence classes over all $a_{J}\in\mathcal{A}^{J}$, i.e.,
\begin{equation}\label{eq:trivi-partition-n_cf-def}
 n_{C,f}=\max_{a_{J}\in\mathcal{A}^{J}}\big|\big\{\text{all $(I,a_{J})$-equivalence classes}\big\}\big|.
\end{equation}
With \eqref{eq:trivi-partition-n_cf-def}, we can readily calculate  $n_{C,f}=2$ by considering the following   two subcases:

\textbf{Case 1A:} $I=\{\sigma_{i}\}$, $J=\{\sigma_{j}\}$, and then $S\setminus(I\cup J)=\emptyset$ (e.g., $C=\{e_{1},e_{2}\}$);

\textbf{Case 1B:} $I=\{\sigma_{i}\}$, $J=\emptyset$, and then $S\setminus(I\cup J)=\{\sigma_{j}\}$ (e.g., $C=\{d_{1},d_{2}\}$);\\
where the index set $\{i,j\}=\{1,2\}$, i.e., either $j=2$ if $i=1$ or $j=1$ if $i=2$.\\
\textbf{Case 2:} $|I|=2$, i.e., $I=S=\{\sigma_{1},\sigma_{2}\}$.

For this case, we consider the following two subcases to calculate $n_{C,f}$.

\textbf{Case 2A:} The cut set $C$ only has a (trivial) strong partition $\{C\}$ (e.g., $C=\{e_{1},e_{2},e_{3}\}$).

 We note  that $|C|\geq 3$ and $n_{C,f}$ is the number of all possible function values, i.e., $n_{C,f}=3$.

\textbf{Case 2B:} The cut set $C$ has a non-trivial strong partition.

  We note that $|C|\geq 5$ and every non-trivial strong partition must be a two-partition $\{C_{1},C_{2}\}$ with $K_{C_1}= I_{C_{1}}=\{\sigma_{1}\}$ and $K_{C_2}= I_{C_{2}}=\{\sigma_{2}\}$.
  We let $I_i=I_{C_i},~i=1,2$ for notational simplicity.
Then, we  have $L=I\setminus(I_{1}\cup I_{2})=\emptyset$ and $J=\emptyset$ so that $a_{L}\in\mathcal{A}^L$ and $a_{J}\in\mathcal{A}^J$ are two empty vectors. For example, the cut set  $C=\{d_{1},d_{2},d_{3},d_{4},e_{3}\}$  has a
unique nontrivial strong partition $\mathcal{P}_{C}=\big\{C_{1}=\{d_{1},d_{2}\},C_{2}=\{d_{3},d_{4},e_{3}\}\big\}$ with $K_{C_1}=I_{1}=\{\sigma_{1}\}$ and $K_{C_2}=I_{2}=\{\sigma_{2}\}$.
By Definition~\ref{def:I_a_L_a_j_equiv}, for $\ell=1,2$,  the $(I_{\ell},a_{L},a_{J})$-equivalence relation is specified to be the one that  two symbols $b$ and $b'$ in $\mathcal{A}^{I_{\ell}}=\{0,1\}$ are $(I_{\ell},a_{L},a_{J})$-equivalent if
\begin{equation*}
f(b,c)=f(b',c),~\text{i.e., $b+c=b'+c$},\quad\forall~c\in \mathcal{A}^{I_{\bar{\ell}}}=\{0,1\},
\end{equation*}
where  $\bar{\ell}=2$ if $\ell=1$ or $\bar{\ell}=1$ if $\ell=2$.
Thus, we   see that $\{0\}$ and $\{1\}$ are two distinct $(I_{\ell},a_{L},a_{J})$-equivalence classes, $\ell=1,2$.
Now, we calculate $n_{C,f}$ as follows:
\begin{align*}
n_{C,f}&=\sum_{b=0}^{2}\big|\big\{(b_{1},b_{2}):\text{$b_{1}\in\mathcal{A}^{I_{1}}$, $b_{2}\in\mathcal{A}^{I_{2}}$ s.t. $f(b_{1},b_{2})=b_1+b_2=b$}\big\}\big|\\
&=\big|\big\{(0,0)\big\}\big|+\big|\big\{(0,1),(1,0)\big\}\big|+\big|\big\{(1,1)\big\}\big|=4.
\end{align*}

Combining all the above discussions, we summarize  that for each cut set $C\in\Lambda(\mathcal{N})$,
\begin{equation*}
  \begin{cases}
    \text{$|C|\geq 2$ and $n_{C,f}=2$, } & \text{if $|I_C|=1$}; \\
    \text{$|C|\geq 3$ and $n_{C,f}=3$, } & \text{if $|I_C|=2$ and $C$ only has  a (trivial) strong partition $\{C\}$};\\
    \text{$|C|\geq 5$ and $n_{C,f}=4$, } & \text{if $|I_C|=2$ and $C$ has  a non-trivial strong partition}.
  \end{cases}\label{eq:3-classes-cut}
\end{equation*}
Then, we calculate the upper bound \eqref{Guang-upper-bound} as
\begin{equation*}
\min_{C\in\Lambda(\mathcal{N})}\frac{|C|}{\log n_{C,f}}\geq\min\bigg\{\frac{2}{\log 2},~\frac{3}{\log 3},~\frac{5}{\log 4}\bigg\}=\frac{3}{\log 3}.
\end{equation*}
Further, we consider the  cut set $C=\{e_{1},e_{2},e_{3}\}$  and
 then we have $|C|=3$ and $n_{C,f}=3$ (see Case~2A).
 This immediately implies  that
\begin{equation*}
\min_{C\in\Lambda(\mathcal{N})}\frac{|C|}{\log n_{C,f}}=\frac{3}{\log 3}=\log_{3}8.\label{Guang-upper-bound-value}
\end{equation*}

\begin{journalonly}

\section{Proof~of~Lemma~\ref{lemma:pf-W-relat_code_color_set}}\label{appendix-pf-lem-W-relate_code_color-set}

{\color{blue}
Before proving the lemma, we first present the following proposition, which can be proved
by the same argument in the proof of Lemma~\ref{lemma:chi(G(P))-equiv-form}.
\begin{prop}
For a subset $\mathcal{L}\subseteq\mathcal{Y}^k$, consider  the  subgraph $S(\mathcal{L})$  of the conflict graph $S$. Then,
\begin{equation}
\gamma\big(S(\mathcal{L})\big)=|\mathcal{A}^k+\mathcal{L}|.\label{eq:gamma(S(P))-low-bound}
\end{equation}
\end{prop}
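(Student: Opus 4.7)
The plan is to transcribe the proof of Lemma~\ref{lemma:chi(G(P))-equiv-form} essentially verbatim, with $\mathcal{M}=\mathcal{A}^{k}$ and with $\mathcal{L}\subseteq\mathcal{Y}^{k}$ in place of $\mathcal{L}\subseteq\mathcal{A}^{k}$. The adjacency rule in $S(\mathcal{L})$ is the same sum-conflict rule used for $G(\mathcal{M},\mathcal{L})$, and the argument never touched the particular alphabet of the $\vec{y}$'s, so the reasoning transfers directly. I would set $d\triangleq|\mathcal{A}^{k}+\mathcal{L}|$ and prove the two inequalities $\gamma\bigl(S(\mathcal{L})\bigr)\geq d$ and $\gamma\bigl(S(\mathcal{L})\bigr)\leq d$ separately.

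For the lower bound, the plan is to choose $d$ representative pairs $(\vec{x}_{i},\vec{y}_{i})\in\mathcal{A}^{k}\times\mathcal{L}$, $1\leq i\leq d$, whose arithmetic sums $\vec{x}_{i}+\vec{y}_{i}$ realize all $d$ distinct values of $\mathcal{A}^{k}+\mathcal{L}$. By the definition of $S(\mathcal{L})$, these $d$ vertices are pairwise adjacent, so they induce a complete subgraph $K_{d}$; hence $\gamma\bigl(S(\mathcal{L})\bigr)\geq \chi(K_{d})=d$. For the upper bound, I would exhibit the explicit coloring $c:\mathcal{A}^{k}\times\mathcal{L}\to\mathcal{A}^{k}+\mathcal{L}$ given by $c(\vec{x},\vec{y})=\vec{x}+\vec{y}$: two adjacent vertices $(\vec{x},\vec{y})$ and $(\vec{x}\,',\vec{y}\,')$ satisfy $\vec{x}+\vec{y}\neq\vec{x}\,'+\vec{y}\,'$ by definition of adjacency, so $c$ automatically assigns them different colors, yielding $\gamma\bigl(S(\mathcal{L})\bigr)\leq|\mathcal{A}^{k}+\mathcal{L}|=d$.

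There is no real obstacle here; the only point to check is purely notational, namely that the vertex set $\mathcal{A}^{k}\times\mathcal{L}$ and the edge rule of $S(\mathcal{L})$ match those used in $G(\mathcal{A}^{k},\mathcal{L})$ once we allow $\mathcal{L}$ to live in $\mathcal{Y}^{k}$ rather than in $\mathcal{A}^{k}$. Since both parts of the earlier argument are alphabet-agnostic in $\mathcal{L}$, the conclusion $\gamma\bigl(S(\mathcal{L})\bigr)=|\mathcal{A}^{k}+\mathcal{L}|$ follows immediately from the two-sided bound above.
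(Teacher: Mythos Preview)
Your proposal is correct and matches the paper's approach exactly: the paper itself simply states that the proposition ``can be proved by the same argument in the proof of Lemma~\ref{lemma:chi(G(P))-equiv-form},'' and your write-up is precisely that transcription with $\mathcal{M}=\mathcal{A}^{k}$ and $\mathcal{L}\subseteq\mathcal{Y}^{k}$. The clique lower bound and the sum-coloring upper bound are both alphabet-agnostic in $\mathcal{L}$, so nothing further is needed.
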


Now, we start to prove Lemma~\ref{lemma:pf-W-relat_code_color_set}.
Consider an arbitrary partition $\mathcal{P}=\{P_1,P_2,\cdots,P_m\}$ of  $\{0,1,2\}^{k}$ that  includes  $m$  blocks. We see that there exists a block in $\mathcal{P}$, say $P_1$, such that
\begin{equation}
|P_1|\geq \frac{3^k}{m},\label{pflemma:pf-W-N_relat_code_color_set-1}
\end{equation}
 because, otherwise, we have
 \begin{equation}
 \bigg|\bigcup_{i=1}^{m}P_i\bigg|=\sum_{i=1}^{m}|P_i|<m\cdot\frac{3^k}{m}=3^k,
 \end{equation}
a contradiction.
The inequality \eqref{pflemma:pf-W-N_relat_code_color_set-1} immediately implies that
$|P_1|\geq \big\lceil3^{k}/m\big\rceil$. Thus, we obtain that
\begin{equation}
\max_{1\leq i\leq m}\gamma\big(S(P_i)\big)\geq \gamma\big(S(P_1)\big)\geq \gamma\big(S(\widehat{P}_1)\big) \geq\min_{\text{$P\subseteq\mathcal{Y}^{k}$ s.t. $|P|=\big\lceil\frac{3^{k}}{m}\big\rceil$ }}\gamma\big(S(P)\big),\label{pflemma:pf-W-N_relat_code_color_set-2}
\end{equation}
where $\widehat{P}_1$ is a subset of $P_1$ with $|\widehat{P}_1|= \big\lceil3^{k}/m\big\rceil$ and
the inequality $\gamma\big(S(P_1)\big)\geq \gamma\big(S(\widehat{P}_1)\big)$ in \eqref{pflemma:pf-W-N_relat_code_color_set-2} follows from the fact that a coloring of $S(P_1)$ restricted to its subgraph $S(\widehat{P}_1)$ is also a coloring of $S(\widehat{P}_1)$.
By \eqref{eq:gamma_m=min_max_color},
\begin{align}
\gamma_{m}&=\min_{\substack{\text{ all  partitions }\\\text{$\{P_1,P_2,\cdots,P_m\}$ of $\mathcal{Y}^{k}$}}}\max_{1\leq i\leq m}\gamma\big(S(P_i)\big)\\
&\geq\min_{\text{$P\subseteq\mathcal{Y}^{k}$ s.t. $|P|=\big\lceil\frac{3^{k}}{m}\big\rceil$ }}\gamma\big(S(P)\big)\label{pflemma:pf-W-N_relat_code_color_set-3.1}\\
&=\min_{\text{$P\subseteq\mathcal{Y}^{k}$ s.t. $|P|=\big\lceil\frac{3^{k}}{m}\big\rceil$ }}|\mathcal{A}^k+P|\label{pflemma:pf-W-N_relat_code_color_set-3.2}\\
&=M_{k}\Big(\Big\lceil\;\frac{3^{k}}{m}\;\Big\rceil\Big),\label{pflemma:pf-W-N_relat_code_color_set-3.3}
\end{align}
where the inequality \eqref{pflemma:pf-W-N_relat_code_color_set-3.1} follows from the inequality \eqref{pflemma:pf-W-N_relat_code_color_set-2}, which holds for each partition $\mathcal{P}$ of $\mathcal{Y}^k$ that includes $m$ blocks;   the  equality \eqref{pflemma:pf-W-N_relat_code_color_set-3.2} follows from the equality~\eqref{eq:gamma(S(P))-low-bound} and the notation~\eqref{eq:pf-W-def-set-M+L}; and  the  equality \eqref{pflemma:pf-W-N_relat_code_color_set-3.3} follows from the definition~\eqref{def:M_k(l)}.
We thus have proved the lemma.
}
\end{journalonly}

\end{document}